\documentclass[journal]{IEEEtran}
\usepackage{cite,amsthm,mathdots}
\usepackage[utf8]{inputenc}
\usepackage{pgfplots}
\usetikzlibrary{shapes.multipart,intersections}
\usepackage{cite}
\usepackage{amsmath,amssymb,amsfonts,steinmetz}
\usepackage{mathtools,algpseudocode,algorithm,MnSymbol}
\usepackage{graphicx}
\usepackage{textcomp}
\usepackage{xcolor}
\def\BibTeX{{\rm B\kern-.05em{\sc i\kern-.025em b}\kern-.08em
    T\kern-.1667em\lower.7ex\hbox{E}\kern-.125emX}}

%packages
\usepackage{tikz}
\usetikzlibrary{arrows.meta}
\usetikzlibrary{calc}
\makeatletter
\newcommand{\gettikzxy}[3]{%
  \tikz@scan@one@point\pgfutil@firstofone#1\relax
  \edef#2{\the\pgf@x}%
  \edef#3{\the\pgf@y}%
}
\makeatother
\usepackage[draft]{todonotes}   % notes showed
\usepackage{upgreek} %use for mathbf symbols like \bm{theta} = \mathbf{\uptheta}
\usepackage{seqsplit}
\usepackage{hyperref}
\usepackage{balance}
\usepackage[font={small}]{caption}
\usepackage{subcaption}
%\captionsetup{justification   = raggedright,
%              singlelinecheck = false}

% \usepackage{natbib}
% \usepackage{graphicx}

\newtheorem{lemma}{Lemma}
\newtheorem{proposition}{Proposition}

\newcommand\numberthis{\addtocounter{equation}{1}\tag{\theequation}}

% correct bad hyphenation here
\hyphenation{op-tical net-works semi-conduc-tor}

\begin{document}
\usetikzlibrary{shapes.multipart,intersections}

\title{Beamspace Multidimensional ESPRIT Approaches for Simultaneous Localization and Communications}

\author{Fan~Jiang,~\IEEEmembership{Member,~IEEE,}
        Fuxi~Wen,~\IEEEmembership{Senior~Member,~IEEE,}
        Yu~Ge,~\IEEEmembership{Student~Member,~IEEE,}
        Meifang~Zhu,
        ~Henk~Wymeersch,~\IEEEmembership{Senior~Member,~IEEE,}
        ~Fredrik~Tufvesson,~\IEEEmembership{Fellow,~IEEE}% <-this % stops a space
\thanks{Fan Jiang, Yu Ge, and Henk Wymeersch are with the Department of Electrical Engineering of Chalmers University of Technology, Gothenborg, Sweden. Email: \{fan.jiang,~yuge,~henkw\}@chalmers.se.}% <-this % stops a space
\thanks{Fuxi Wen is with the School of Vehicle and Mobility, Tsinghua University, Beijing, China, Email:wenfuxi@tsinghua.edu.cn.}% <-this % stops a space
\thanks{Meifang Zhu and Fredrik Tufvesson are with the Department of Electrical and Information Technology, Lund University, Lund, Sweden, Email: \{meifang.zhu,~fredrik.tufvesson\}@eit.lth.se.}
\thanks{This work has been partly funded by the Vinnova 5GPOS project under grant 2019-03085, and by the Wallenberg AI, Autonomous Systems and Software Program (WASP) funded by the Knut and Alice Wallenberg Foundation.}
}

% The paper headers
% \markboth{Journal of \LaTeX\ Class Files,~Vol.~14, No.~8, August~2015}%
% {Shell \MakeLowercase{\textit{et al.}}: Bare Demo of IEEEtran.cls for IEEE Journals}

% make the title area
\maketitle

% As a general rule, do not put math, special symbols or citations
% in the abstract or keywords.
\begin{abstract}
Modern wireless communication systems operating at high carrier frequencies are characterized by a high dimensionality of the underlying parameter space (including channel gains, angles, delays, and possibly Doppler shifts). Estimating these parameters is valuable for communication purposes, but also for localization and sensing, making channel estimation a critical component in any joint communication and localization or sensing application. The high dimensionality make it difficult to use search-based methods such as maximum likelihood. Search-free methods such as ESPRIT provide an attractive alternative, but require a complex decomposition step in both the tensor and matrix version of ESPRIT. To mitigate this, we propose, develop, and analyze a reduced complexity beamspace ESPRIT method. Complexity is reduced both by beampace processing as well as low-complex implementation of the singular value decomposition. A novel perturbation analysis provides important insights for both channel estimation and localization performance. The proposed method is compared to the tensor ESPRIT method, in terms of channel estimation, communication, localization, and sensing performance, further validating the perturbation analysis. 
\end{abstract}

\IEEEpeerreviewmaketitle

\section{Introduction}
%Communication systems operating at high carrier frequencies, such as in 5G mmWave systems at 24 GHz  \cite{osseiran2014scenarios} and foreseen in beyond 5G at 60 GHz - 200 GHz \cite{tataria20206g}, are attractive from both communication and localization points of view. For communication, the large bandwidths available at high frequencies support extremely high data rates. A sufficient link budget can be provided by use of directional transmissions, supported by electrically large antenna arrays at both transmitter and receiver. On the other hand, from a localization point of view, the sparsity of the propagation channel, combined with high delay and angle resolution, enables more accurate user localization, as well as localization with fewer base stations \cite{wymeersch20175g}. Moreover, the communication link can serve as a bi-static radar, which provides geometric information of the propagation environment, such as the locations of large reflecting or scattering surfaces or small objects \cite{chaccour2021seven}. 

Communication and localization are conventionally seen as competing for the same precious time-frequency resources, where reference signals used for localization take away resources used for communication. In that sense, one must trade off communication rate with localization accuracy \cite{destino2017trade,koirala2019localization}. At the same time, it has been appreciated that location information is useful for communication \cite{di2014location}. For instance, since communication at high carrier frequencies is highly directional, location information can be an input in the design of precoders and combiners, illuminating only the dominant geometric propagation paths \cite{jayaprakasam2017robust}. The explicit connection between communication and localization is via the channel estimation process. Typically, 5G mmWave channel estimation is based on a sparse and parametric representation of the channel, in which the parameters of the channel include the angles, distances, and gains \cite{alkhateeb2014channel}. These geometric parameters are also used for localization and mapping \cite{GeWenKimZhuJiaKimSveWym20}. In turn, this means that localization and communication are not competing, but can rather integrated as a composed function. This naturally leads to simultaneous localization and communication (SLAC), i.e., a system where localization and communications go hand in hand. 
A challenge in channel estimation for 5G and beyond 5G is the large dimensionality \cite{richter2005estimation}: a system with a moderate 1000 subcarriers and 64 antennas (e.g., an 8 by 8 planar array) at transmitter and receiver leads to a channel with 4 million parameters. Sparse estimation from such a large object is challenging and requires dedicated channel estimation methods. 

Channel estimation methods can be roughly categorized as \emph{search-based} and \emph{search-free} methods. In the search-based methods we include techniques such as maximum likelihood \cite{richter2005estimation}, MUSIC \cite{huang2012multidimensional}, orthogonal matching pursuit \cite{lee2016channel}, as well as other grid-based and sparsity-based methods \cite{tsai2018millimeter}. These methods fundamentally require non-convex optimization over (at the minimum) the underlying parameter space, which can be prohibitively complex, especially since  for both localization and communication,  auto-pairing of the parameters across dimensions is required \cite{SahUseCom17}.
%which can range from 1 dimensional (1D) (e.g., single-antenna OFDM requires searching in the delay domain), to 5D (when in addition to delay, angle-of-arrival and angle-of-departure in azimuth and elevation are unknown). While 1D or 2D optimization are reasonable, 5D optimization is computationally prohibitive. Complexity can be reduced by separate optimization in each domain, but this loses the association between dimensions (i.e., which angle corresponds to which delay). For both localization and communication, so-called auto-pairing of the parameters across dimensions is desirable \cite{SahUseCom17}. 
In contrast, search-free methods can provide a direct estimate of the channel parameters (i.e., tuples of delay and angles), often in the form of a closed-form expression \cite{gershman2010one}. Dominant in this category are the methods based on ESPRIT (for estimation of signal parameters via rotational invariance techniques) \cite{RoyKai89, XuSilRoyKai94, ZhaRakHaa21}, which exploit the fundamental structure of the signal in its various domains. While ESPRIT is limited to certain types of signals, it turns out to be well-suited to typical waveforms and antenna arrangements in 5G and beyond 5G, including the use of OFDM and uniform planar arrays \cite{wen2019survey}. 

ESPRIT methods have been developed based on matrices of the observation (called multi-dimensional ESPRIT (MD-ESPRIT)) \cite{SahUseCom17}, as well as on tensors  (called tensor ESPRIT) \cite{Haardt2018, ZhaRakHaa21} created from the observations. While the tensor version is arguably more natural, it suffers from high complexity due to the involved decomposition of the tensor.
%
% \textcolor{red}{[Fuxi: Relevant works include [ref] + discussion. (should include beamspace + positioning aspects + perturbation analysis)]}.
%
Furthermore, comparing with operation in element space, beamspace
model offers a compromise between system performance
and hardware complexity \cite{Heath2016}.
Tensor-based beamspace ESPRIT is proposed in \cite{ZhaHaa17, ZhaRakHaa21} for 3D mmWave channel estimation.
An $R$-D search-free  and general beamspace tensor-ESPRIT algorithm is proposed in \cite{WenGarKulWitWym18}. Robustness of the
method with respect to uncertainty in the number of sources,
as well as the applicability for sources with partially distinct frequencies, are demonstrated in \cite{WenSoWym20} for multidimensional harmonic retrieval. However, the performance is evaluated through numerical simulations, perturbation analysis and positioning accuracy are not considered.
In the case the observation is expressed as a large matrix or vector,
%
% \textcolor{red}{[Fuxi: Relevant works include [ref] + discussion (should include beamspace + positioning aspects + perturbation analysis).]} 
%
MD-ESPRIT for angular frequency estimation is proposed in \cite{SahUseCom17}, a perturbation analysis on angular frequency estimation is also provided. However, it is an element-space based method, while a beamspace counterpart was not developed or analyzed. 
%without estimation of the channel complex gain and evaluation on the positioning performance.

% From the above literature overview, it is clear that there are several gaps: 
% \cite{WenGarKulWitWym18, WenSoWym20} is based on tensor-based beamspace ESPRIT for angular frequency and channel estimation, but does not include positioning or perturbation analysis; 
% \cite{ZhaHaa17} on tensor-based beamspace ESPRIT for channel estimation, does not include  positioning or analytical results, and is limited to 3D, 
% \cite{SahUseCom17} MD-ESPRIT for angular frequency estimation. Not in positioning. Perturbation analysis on angular frequency estimation performance but not on complex gain and position estimation.

In this paper, we consider a beamspace channel model with OFDM and URAs at both transmitter and receiver and propose an efficient and low-complex multidimensional beamspace ESPRIT method, based on which the user location and communication rate are estimated. 
%With beamspace training, we obtain the beamspace observations used for SLAC. We then formulate the parametric channel estimation problem as the estimation of angular frequencies and complex gain. The earlier one is formulated to a multidimensional beamspace MHR problem, which will be solved by the low-complexity multidimensional beamspace ESPRIT approaches. The later one can be formulated as the maximum likehood estimation problem given angular frequencies, which can be efficiently solved by least-square method. In addition, by exploiting the underlying geometric relationship, we infer the user location with the estimated channel parameters. Furthermore, we perform the first-order perturbation analysis of the estimates with the proposed method for both channel parameters and location. 
%\textcolor{blue}{[Fan?]}
The contributions of this paper are as follows:
\begin{itemize}
    \item We develop a novel matrix-based multidimensional ESPRIT, operating in beamspace rather than channel space. To enable beamspace processing, we propose the use of modified selection matrices and show that  arbitrary precoding and combining matrices can be used, provided they are full column-rank in each dimension. We also confirm that the auto-pairing property of element-space MD-ESPRIT holds in beamspace. 
    \item We further propose a low-complexity implementation of the singular value decomposition (SVD) of the tall channel matrix in (beamspace) MD-ESPRIT, by using FFT/IFFT operations to obtain the signal subspace. We also demonstrate the low complexity through computational analysis and numerical simulations;
    \item We conduct a first-order perturbation analysis of the newly proposed MD-ESPRIT method, and evaluate both the analytical and numerical performance of the channel parameter and user location estimation. The results validate the closed-form analytical performance, especially in high signal-to-noise ratio (SNR) region;
    \item We evaluate the channel estimation and localization performance, as well as the effective achievable rate of the proposed method with both DFT and directional beams used in the system. The results reveal that our proposed beamspace MD-ESPRIT approaches achieve significant performance improvement compared to the methods in the literature;
    %\item We investigate the effective achievable rate performance with the channel estimate based on the proposed beamspace ESPRIT method. The results show that the near-optimal achievable rate performance can be achieved with the proposed method, nearly 2.4 Mbps rate improvement over existing method in working SNR region.
\end{itemize}
%To our best knowledge, this paper presents the first analytical results on SLAC with beamspace observations. 

The rest of the paper is organized as follows. In Section II, we introduce the system model including the multidimensional channel model, beamspace model, and beamspace observation model. We give an overview of the tensor-based ESPRIT approach for angular frequency estimation in Section III. After that, in section IV, we present the proposed matrix-based multidimensional beamspace ESPRIT approaches for SLAC, as well as the low-complexity implementations. Perturbation analysis is performed in Section V, where the closed-form analytical estimation performance of the proposed method is presented. Following that, we present the performance evaluation and discussions in Section VI. Finally, the conclusions are drawn in Section VII.

\subsubsection*{Notations}
%\textcolor{red}{[Shall we add our main notations here?]}
Vectors and matrices are denoted by bold lowercase and uppercase letters,
respectively. The notations $(\cdot)^*$, $(\cdot)^{\mathrm{T}}$, $(\cdot)^{\mathrm{H}}$, $(\cdot)^{-1}$, and $(\cdot)^{\dagger}$, are reserved for the conjugate, transpose, conjugate transpose, inverse, and Moore-Penrose pseudoinverse operations. The mathematical expectation is denoted by $\mathrm{E}\{ \cdot \}$. The notation $\mathrm{Diag}(\boldsymbol{a})$ is to form a diagonal matrix with $\boldsymbol{a}$ being the diagonal elements. The operations $\otimes$ and $\odot$ denote the Kronecker and Khatri-Rao products, respectively. The outer product is denoted by $\circ$. $\delta\left(\cdot \right)$ is the delta Dirac function. The notations $\Re\{ \cdot \}$ and $\Im \{ \cdot \}$ denote the operations to obtain the real and imaginary components, respectively.

%-------------------------------------------
%-------------------------------------------
%-------------------------------------------
\section{System Model}
\begin{figure}
\begin{minipage}[!htb]{0.5\textwidth}
    \centering
    \begin{tikzpicture}
    \node(image) [anchor=south west] at (0,0){\includegraphics[width=0.85\textwidth]{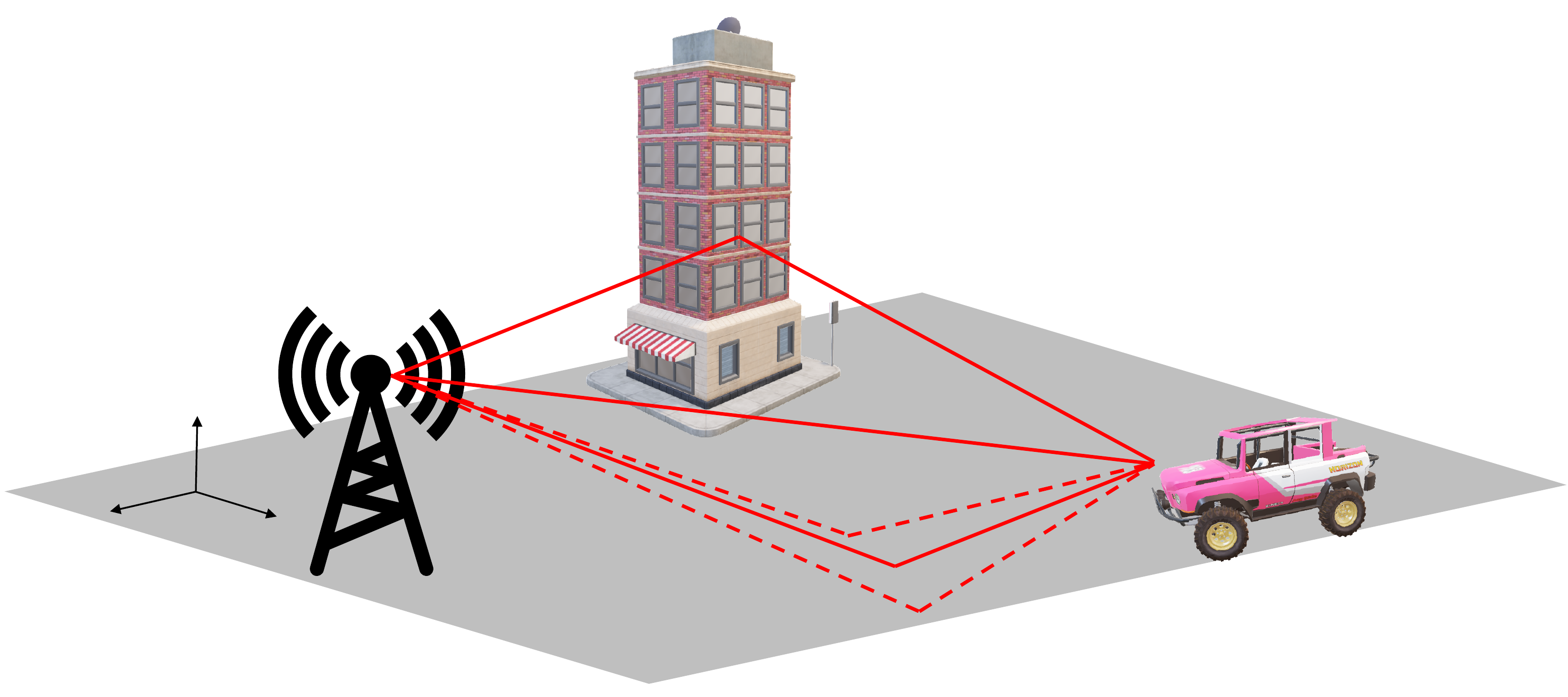}};
    %\draw [help lines] (0,0) grid (9,5);
    \node at (0.6,1.3) {$\boldsymbol{x}$};
    \node at (1.5,1.3) {$\boldsymbol{y}$};
    \node at (1.1,1.7) {$\boldsymbol{z}$};
    \node at (2.15,2.2) {$\boldsymbol{p}_{\mathrm{T}}$};
    \node at (5.8,1.7) {$\boldsymbol{p}_{\mathrm{R}}$};
    \end{tikzpicture}%\vspace{-5mm}
    \caption{3D localization and communications with multi-path propagation: vehicular user with unknown location using LOS/NLOS signals from base station with known position to localize itself; the vehicular user also expecting to know the effective achievable rate in communications.}
    \label{fig:3DScenario}
\end{minipage} \vspace{-5mm}
\end{figure}
\subsection{Multidimensional Channel Model}
We consider a 3-dimensional (3D) scenario for vehicular localization and communications as shown in Fig. \ref{fig:3DScenario}. The base station and the vehicular terminal are located at $\boldsymbol{p}_{\mathrm{T}}$ and $\boldsymbol{p}_{\mathrm{R}}$, respectively. Both the transmitter and the receiver are equipped with uniform rectangular arrays (URA) in $y-z$ plane consisting of $M_{\mathrm{T}} = M_1 \times M_2$ and $M_{\mathrm{R}} = M_3 \times M_4$ antenna elements, respectively. The antenna spacing is $\Delta d = \lambda/2$, where $\lambda$ is the wavelength. We consider a multi-path propagation scenario, where both line-of-sight (LOS) and non LOS (NLOS) paths are present in the system. The NLOS paths can be either reflection paths or diffuse multi-path depending on the smoothness of the objects in the environment. Each propagation path will be associated with AODs $\boldsymbol{\phi}_l = [\phi_{\mathrm{az}}^l, \phi_{\mathrm{el}}^l ]^{\mathrm{T}}$, AOAs $\boldsymbol{\theta}_l = [\theta_{\mathrm{az}}^l, \theta_{\mathrm{el}}^l ]^{\mathrm{T}}$, the propagation delay $\tau_l$, and the complex gain $\gamma_l$. We use $\phi_{\mathrm{az}}^l$ and $\phi_{\mathrm{el}}^l$ to denote the azimuth angle in the $x-y$ plane and the elevation angle for AODs, respectively, whereas $\theta_{\mathrm{az}}^l$ and $\theta_{\mathrm{el}}^l$ to denote the azimuth angle in the $x-y$ plane and the elevation angle for the AOAs, respectively. The geometric relationship between the positions and the channel parameters can be found in \cite{GeWenKimZhuJiaKimSveWym20, WenKulWitWym19}.

We assume OFDM transmission with $M_5$  subcarriers under narrow-band, far-field conditions, and possibly precoding and combining. To this end, we introduce  $\boldsymbol{T}_n^{\mathrm{H}} = [\boldsymbol{t}_{n, 1}, \boldsymbol{t}_{n, 2}, \cdots, \boldsymbol{t}_{n, M_n} ] \in \mathbb{C}^{N_n \times M_n}$ as a transformation matrix, allowing us to write a precoding matrix $\boldsymbol{F} =( \boldsymbol{T}_1 \otimes \boldsymbol{T}_2 )^*\in \mathbb{C}^{M_1M_2 \times N_1N_2}$ at transmitter side
and a  combining matrix $\boldsymbol{W} =\boldsymbol{T}_3 \otimes \boldsymbol{T}_4 \in \mathbb{C}^{M_3M_4 \times N_3N_4 }$ at receiver side. 
We assume no precoding is applied to the frequency domain, i.e., $\boldsymbol{T}_5 =  \boldsymbol{I}_{M_5}$. As a result, the beamspace channel matrix over the $m_5$-th subcarrier, $\boldsymbol{H}_{m_5}^{\left(\mathrm{b} \right)} \in \mathbb{C}^{N_3N_4\times N_1N_2}$ will be given by
\begin{align*}
\boldsymbol{H}_{m_5}^{\left(\mathrm{b} \right)} = \sum_{l=1}^L & \gamma_l e^{ -\jmath 2\pi ({m_5}-1) \Delta f \tau_l }  \boldsymbol{W}^{\mathrm{H}} \boldsymbol{a}_{\mathrm{R}}\left( \boldsymbol{\theta}_l \right) \boldsymbol{a}_{\mathrm{T}}^{\mathrm{T}}\left( \boldsymbol{\phi}_l \right) \boldsymbol{F}, \label{eq8} \numberthis
\end{align*}
where $\Delta f$ is the subcarrier spacing, $L$ denotes the number of propagation paths, $l=1$ is the LOS path and the rest ($l>1$) are NLOS paths. 
The steering vector corresponding to the $l$-th propagation path at transmitter side is given by \cite{WenKulWitWym19}
\begin{align*}
\boldsymbol{a}_{\mathrm{T}}\left( \boldsymbol{\phi}_l \right) = \boldsymbol{a}^{\left(M_1 \right)} \left(\omega_{l, 1} \right) \otimes \boldsymbol{a}^{\left(M_2 \right)} \left(\omega_{l, 2} \right), \numberthis
\end{align*}
where $\omega_{l, 1} = \pi \sin(\phi_{\mathrm{az}}^l) \sin (\phi_{\mathrm{el}}^l)$, $\omega_{l, 2} = \pi \cos (\phi_{\mathrm{el}}^l)$, $\otimes$ is the Kronecker product, and $\boldsymbol{a}^{(M )} (\omega )$ is a Vandermonde-structured vector defined as
\begin{align*}
\boldsymbol{a}^{\left(M \right)} \left(\omega \right) = \left[1, \exp \left(\jmath \omega \right), \cdots, \exp \left(\jmath \left(M-1 \right) \omega \right) \right]^{\mathrm{T}}.
\end{align*}
Similarly, the steering vector at the receiver side is given by
\begin{align*}
\boldsymbol{a}_{\mathrm{R}}\left( \boldsymbol{\theta}_l \right) = \boldsymbol{a}^{\left(M_3 \right)} \left(\omega_{l, 3} \right) \otimes \boldsymbol{a}^{\left(M_3 \right)} \left(\omega_{l, 4} \right), \numberthis
\end{align*}
where $\omega_{l, 3} = \pi \sin(\theta_{\mathrm{az}}^l ) \sin (\theta_{\mathrm{el}}^l )$, $\omega_{l, 4} = \pi \cos (\theta_{\mathrm{el}}^l )$.

It will be convenient to introduce $\boldsymbol{B}_n^{( N_n )} \in \mathbb{C}^{N_n \times L}$ as
\begin{align}
\boldsymbol{B}_n^{\left( N_n \right)} %= & \big[ \boldsymbol{b}^{\left(N_n\right)} \left( \omega_{1, n} \right), \boldsymbol{b}^{\left(N_n\right)} \left( \omega_{2, n} \right), \cdots, \boldsymbol{b}^{\left(N_n\right)} \left( \omega_{L, n} \right) \big] \\
= & \boldsymbol{T}_n^{\mathrm{H}} \boldsymbol{A}_n^{\left( M_n \right)}, \label{eq11} 
\end{align}
where 
$\boldsymbol{A}_n^{\left( M_n \right)} \in \mathbb{C}^{M_n \times L}$ is the Vandermonde matrix of the mode in the $n$-th dimension, given by
$\boldsymbol{A}_n^{( M_n )} =  [ \boldsymbol{a}^{(M_n)} ( \omega_{1, n} ), \boldsymbol{a}^{(M_n)} ( \omega_{2, n} ), \cdots, \boldsymbol{a}^{(M_n)} ( \omega_{L, n} ) ]$.

%-------------------------------------------------------------------------------------------------------------------------
\subsection{Beamspace Observation Model}
We consider a scenario where the channel coefficients remain static for a period of $N_\mathrm{c}$ OFDM blocks. The first $N_\mathrm{P}$ OFDM blocks will be used for pilots transmission, and the rest $N_\mathrm{D} = N_\mathrm{C} - N_\mathrm{P}$ OFDM blocks will be used for data transmission. We suppose the transmitter and receiver are equipped with $N_1\times N_2$ and $N_3\times N_4$ RF chains, respectively.\footnote{The beamspace observation in  can also be obtained with an arbitrary number of RF chains at transmitter and receivers. Let $N_{\mathrm{T}}$ and $N_{\mathrm{R}}$ denote the number of RF chains at transmitter and receiver, respectively. Generally, we can use $N_1N_2N_3N_4/N_{\mathrm{R}}$ transmission to obtain  \eqref{eq14}.} 
During the pilot transmission stage, the pilot matrix $\boldsymbol{S}_{m_5} \in \mathbb{C}^{N_1N_2 \times N_\mathrm{P}}$ is transmitted over the $m_5$-th subcarrier where $\boldsymbol{S}_{m_5}\boldsymbol{S}_{m_5}^{\mathrm{H}} = N_{\mathrm{P}}E_{\mathrm{s}} \boldsymbol{I}_{N_1N_2}$, 
%\begin{align*}
%\boldsymbol{S}_{m_5}\boldsymbol{S}_{m_5}^{\mathrm{H}} = N_{\mathrm{p}}E_{\mathrm{s}} \boldsymbol{I}_{N_1N_2}, \label{eq12} \numberthis
%\end{align*}
where we assume $N_{\mathrm{P}} \ge N_1N_2$ and the power of the pilot symbols are normalized to $E_{\mathrm{s}}$. With the transmit pilots, the corresponding beamspace received symbols over the $m_5$-th subcarrier will be
\begin{align}
\boldsymbol{Y}_{m_5} = \boldsymbol{H}_{m_5}^{\left(\mathrm{b} \right)}\boldsymbol{S}_{m_5} + \boldsymbol{Z}_{m_5}, \label{eq13} 
\end{align}
where $\boldsymbol{Y}_{m_5} \in \mathbb{C}^{N_3N_4 \times N_\mathrm{p}}$. The matrix $\boldsymbol{Z}_{m_5} \in \mathrm{C}^{N_3N_4 \times N_\mathrm{p}}$ is the beamspace noise component over the $m_5$-th subcarrier, with each entry modelled as independent and identically distributed (iid) zero mean complex Gaussian random variable. The variance of each entry is $N_0$. From \eqref{eq13}, we obtain the beamspace channel estimate $\hat{\boldsymbol{H}}_{m_5}^{\left(\mathrm{b} \right)} \in  \mathbb{C}^{N_3N_4\times N_1N_2}$ given by
\begin{align}
\hat{\boldsymbol{H}}_{m_5}^{\left(\mathrm{b} \right)} = \frac{\boldsymbol{Y}_{m_5} \boldsymbol{S}_{m_5}^{\mathrm{H}}}{N_{\mathrm{p}}E_{\mathrm{s}}} =  \boldsymbol{H}_{m_5}^{\left(\mathrm{b} \right)} + \Delta \boldsymbol{H}_{m_5}^{\left(\mathrm{b} \right)}, \label{eq14}
\end{align}
where $\Delta \boldsymbol{H}_{m_5}^{\left(\mathrm{b} \right)} \in  \mathbb{C}^{N_3N_4\times N_1N_2}$ is the channel estimation error matrix, with each entry $\Delta h_{n_1, n_2, n_3, n_4, m_5}$ modelled as iid zero mean complex Gaussian variable with variance ${N_0}/{(N_{\mathrm{p}}E_{\mathrm{s}})}$.
%\begin{align*}
%\mathbb{E}\left\{ \left| \Delta h_{n_1, n_2, n_3, n_4, m_5} \right|^2 \right\} = & \frac{N_0}{N_{\mathrm{p}}E_{\mathrm{s}}}. \label{eq15} \numberthis
%\end{align*}
%\eqref{eq15} 
%indicating that the channel estimation error will decrease if we increase the transmission power $E_{\mathrm{s}}$ and/or the length of the training sequence.

\subsection{Goals}
Collecting the beamspace channel estimates $\hat{\boldsymbol{H}}_{m_5}^{\left(\mathrm{b} \right)}$ over all subcarriers, our objectives are to estimate 
the angular frequencies $[\omega_{l,1},\ldots,\omega_{l,5}]^{\mathrm{T}}$ and $\gamma_l$ per path, as well as their uncertainties, 
and compute the estimates of the channel parameters $\boldsymbol{\eta}_l = [\boldsymbol{\phi}_l^{\mathrm{T}}, \boldsymbol{\theta}_l^{\mathrm{T}}, \tau_l, \gamma_l ]^{\mathrm{T}}$, based on the one-to-one mapping between the angular frequencies and the spatial channel parameters (TOAs, AOAs, AODs). 

Our second objective is the SLAC analysis, which involves computation of the user location and associated uncertainty, and assessment of the communication performance with beamspace observations. We use off-the-shelf solutions for localization and rate prediction, as shown below. We note that both the localization and communication performance are affected by the channel estimation. 

\subsubsection{Location Estimation}
\label{secIVD}
We employ the method from \cite{Wym18, WenKulWitWym19} for localization, due to its simple closed-form expression. Specifically, we define the unit vectors
\begin{align*} 
\boldsymbol{f}_{\mathrm{T}, l} = &  \big[\cos({\hat \phi}_{\mathrm{az}}^l )\sin({\hat \phi}_{\mathrm{el}}^l), \sin({\hat \phi}_{\mathrm{az}}^l )\sin({\hat \phi}_{\mathrm{el}}^l), \cos({\hat \phi}_{\mathrm{el}}^l)
\big]^{\mathrm{T}}, \\
\boldsymbol{f}_{\mathrm{R}, l} = &  \big[\cos({\hat \theta}_{\mathrm{az}}^l )\sin({\hat \theta}_{\mathrm{el}}^l), \sin({\hat \theta}_{\mathrm{az}}^l )\sin({\hat \theta}_{\mathrm{el}}^l), \cos({\hat \theta}_{\mathrm{el}}^l)
\big]^{\mathrm{T}},
\end{align*}
as the direction vectors for the $l$-th AODs and AOAs, respectively. For each propagation path, we define $\boldsymbol{\mu}_l=c{\hat \tau}_l \left( \boldsymbol{f}_{\mathrm{T}, l} + \boldsymbol{f}_{\mathrm{R}, l} \right)$ and $\boldsymbol{\delta}_l={\boldsymbol{p}_{\mathrm{T}} - c{\hat \tau}_l\boldsymbol{f}_{\mathrm{R}, l} }$. 
%establish
%\begin{align*} 
%\boldsymbol{p}_{\mathrm{R}} = &  \boldsymbol{p}_{\mathrm{T}} + c{\hat \tau}_l\xi_l \boldsymbol{f}_{\mathrm{T}, l} + c{\hat \tau}_l\left( 1 - \xi_l \right) \left( - \boldsymbol{f}_{\mathrm{R}, l}\right) \\
%= &\underbrace{\boldsymbol{p}_{\mathrm{T}} - c{\hat \tau}_l\boldsymbol{f}_{\mathrm{R}, l} }_{\boldsymbol{\delta}_l} + \xi_l \underbrace{c{\hat \tau}_l \left( \boldsymbol{f}_{\mathrm{T}, l} + \boldsymbol{f}_{\mathrm{R}, l} \right)}_{\boldsymbol{\mu}_l}, \label{eq:POS} \numberthis
%\end{align*}
%where $\xi_l \in \left[ 0, 1 \right]$ and $c$ is the speed of light.\footnote{For LOS path, $\xi_l$ can be arbitrary real value.} From \eqref{eq:POS}, we know that $\boldsymbol{p}_{\mathrm{R}}$ lies in the intersection of $L$ line segments. We consider the cost function $f(\boldsymbol{p}_{\mathrm{R}})$ as the sum of distance between $\boldsymbol{p}_{\mathrm{R}}$ and each line in \eqref{eq:POS}, i.e.,
%\begin{align*} 
%f\left(\boldsymbol{p}_{\mathrm{R}}\right) = \sum_{l=1}^L \iota_l \left\| \boldsymbol{p}_{\mathrm{R}} - \left( \boldsymbol{\delta}_l + \boldsymbol{\mu}_l \left( \boldsymbol{p}_{\mathrm{R}} - \boldsymbol{\delta}_l \right)^{\mathrm{T}} \boldsymbol{\mu}_l \right) \right\|^2,
%\end{align*}
%where $\left\{\iota_l \right\}_{l=1}^L \ge 0$ is the weight dependent on the SNR. 
The weighted least-squares solution of $\boldsymbol{p}_{\boldsymbol{R}}$ is given by
\begin{align*} 
\boldsymbol{\hat p}_{\mathrm{R}} = \left( \sum_{l=1}^L \boldsymbol{C}_l \right)^{-1} \sum_{l=1}^L \boldsymbol{C}_l \boldsymbol{\delta}_l, \label{eq:PosEst} \numberthis
\end{align*}
where $\boldsymbol{C}_l = \iota_l (\boldsymbol{I} - \boldsymbol{\mu}_l \boldsymbol{\mu}_l^{\mathrm{T}}/{ \| \boldsymbol{\mu}_l \|^2}  )$ and $\left\{\iota_l \right\}_{l=1}^L \ge 0$ is the weight dependent on the SNR.
%\begin{align*} 
%\boldsymbol{C}_l = \iota_l \left(\boldsymbol{I} - %\frac{\boldsymbol{\mu}_l \boldsymbol{\mu}_l^{\mathrm{T}}}{ \left\| %\boldsymbol{\mu}_l \right\|^2}  \right).
%\end{align*}
%
%With $\boldsymbol{\hat p}_{\mathrm{R}}$, the incident point corresponding to each propagation path can be recovered as the intersection of two lines $\boldsymbol{\hat p}_{\mathrm{R}} + \xi_{\mathrm{R}} \boldsymbol{f}_{\mathrm{R}, l}$ and $\boldsymbol{p}_{\mathrm{T}} + \xi_{\mathrm{T}} \boldsymbol{f}_{\mathrm{T}, l}$ \cite{Wym18, WenKulWitWym19}.

\subsubsection{Communication Rate Prediction}
 We follow the method in \cite{RaeGokZouBjoVal18} to compute the effective achievable rate; we first derive the signal-to-interference-plus-noise ratio (SINR), and then compute the sum-rate. After channel estimation, under single-stream transmission, we determine (possibly subcarrier dependent) optimal precoders $\boldsymbol{f}_{\text{com}}$ and combiners $\boldsymbol{w}_{\text{com}}$, leading to the effective channel on subcarrier $m_5$
 \begin{align}
     h_{m_5}=\underbrace{\boldsymbol{w}_{\text{com}}^{\mathrm{H}}\hat{\boldsymbol{H}}_{m_5}\boldsymbol{f}_{\text{com}}}_{U} - \underbrace{\boldsymbol{w}_{\text{com}}^{\mathrm{H}}\Delta {\boldsymbol{H}}_{m_5}\boldsymbol{f}_{\text{com}}}_{I},
 \end{align}
where $\hat{\boldsymbol{H}}_{m_5} = \sum_{l=1}^L  \hat{\gamma}_l e^{ -\jmath 2\pi ({m_5}-1) \Delta f \hat{\tau}_l }  \boldsymbol{a}_{\mathrm{R}}( \hat{\boldsymbol{\theta}}_l ) \boldsymbol{a}_{\mathrm{T}}^{\mathrm{T}}( \hat{\boldsymbol{\phi}}_l )$, and $\Delta {\boldsymbol{H}}_{m_5}$ is the estimation error, due to the observation noise and the channel estimation routine. 
We can compute a SINR over the $m_5$-th subcarrier, $\eta_{m_5}=|U|^2/(N_0+\mathrm{E}(|I|^2))$, leading to an effective achievable rate
\begin{align}
R = \frac{N_{\mathrm{C}} - N_{\mathrm{P}}}{M_5N_{\mathrm{C}}}\sum_{m_5 = 1}^{M_5} \log_2 \left( 1 + \eta_{m_5} \right). \label{eq:achievableRate}
\end{align}

\section{Tensor-based Beamspace ESPRIT}
% \textcolor{red}{Fuxi will be responsible for this part. I expect two subsections here: 1) how the problem is formulated? and 2) the algorithm to perform angular frequency estimation. Much details can be referred to the conference papers while using the pseudo code for algorithm.}
The high-dimensional beamspace channel can be naturally represented by the tensor $\boldsymbol{\mathcal{H}}^{\left( \mathrm{b} \right)} \in \mathbb{C}^{N_1\times N_2\times\cdots \times N_5}$, with  canonical polyadic model 
\begin{align*}
\boldsymbol{\mathcal{H}}^{\left( \mathrm{b} \right)} = \sum_{l=1}^{L}\gamma_l & \boldsymbol{b}^{\left( N_1 \right)} \left( \omega_{l, 1} \right) \circ \ldots   \circ \boldsymbol{b}^{\left( N_5 \right)} \left( \omega_{l, 5} \right), \label{eq9} \numberthis
\end{align*}
%\begin{align*}
%\boldsymbol{\mathcal{H}}^{\left( \mathrm{b} \right)} = \sum_{l=1}^{L}\gamma_l & \boldsymbol{b}^{\left( N_1 \right)} \left( \omega_{l, 1} \right) \circ \boldsymbol{b}^{\left( N_2 \right)} \left( \omega_{l, 2} \right) \circ \boldsymbol{b}^{\left( N_3 \right)} \left( \omega_{l, 3} \right) \\
%& \hspace{5mm} \circ \boldsymbol{b}^{\left( N_4 \right)} \left( \omega_{l, 4} \right) \circ \boldsymbol{b}^{\left( N_5 \right)} \left( \omega_{l, 5} \right), \label{eq9} \numberthis
%\end{align*}
 where $\circ$ denotes the outer product and
$\boldsymbol{b}^{\left( N_n \right)} \left( \omega_{l, n} \right) = \boldsymbol{T}_n^{\mathrm{H}} \boldsymbol{a}^{\left( M_n \right)} \left( \omega_{l, n} \right)$, 
and $N_5 = M_5$. To estimate the angular frequencies $\omega_{l, n}$ from a noisy version of $\boldsymbol{\mathcal{H}}^{\left( \mathrm{b} \right)}$, 
we briefly recap the beamspace tensor ESPRIT method from \cite{WenSoWym20}.

\subsection{Tensor Decomposition}

CANDECOMP/PARAFAC (CP) and Tucker are two widely used tensor decomposition approaches \cite{Kolda2009}.
By apply CP decomposition on noisy measurement $\hat{\boldsymbol{\mathcal{H}}}^{\left( \mathrm{b} \right)}$ in (\ref{eq9}), we obtain $\hat{\boldsymbol{\mathcal{H}}}^{\left( \mathrm{b} \right)} \approx \sum_{l=1}^{L}\lambda_l  \boldsymbol{u}_{l, 1}  \circ \boldsymbol{u}_{l, 2} \circ \boldsymbol{u}_{l, 3}\circ \boldsymbol{u}_{l, 4}\circ \boldsymbol{u}_{l, 5}$, 
%
%\begin{align*}
%\hat{\boldsymbol{\mathcal{H}}}^{\left( \mathrm{b} \right)} \approx \sum_{l=1}^{L}\lambda_l  \boldsymbol{u}_{l, 1}  \circ \boldsymbol{u}_{l, 2} \circ \boldsymbol{u}_{l, 3}\circ \boldsymbol{u}_{l, 4}\circ \boldsymbol{u}_{l, 5}, \label{eq9x} \numberthis
%\end{align*}
where $\lambda_l, l = 1, 2, \cdots, L$, are the dominant eigenvalues.
The eigenvectors in the $n$-th dimension can be described as
\begin{align}
\boldsymbol{U}_n = 
\begin{bmatrix}
\boldsymbol{u}_{1, n} & \boldsymbol{u}_{2, n} & \cdots & \boldsymbol{u}_{L, n}
     \end{bmatrix}
\in \mathbb{C}^{N_n \times L}, n = 1, 2, \cdots, 5. \label{eqUn} 
\end{align}
Since $\boldsymbol{B}_n^{\left( N_n \right)}$ is also spanned by the signal subspace, there exists a non-singular matrix $\boldsymbol{E}_n \in \mathbb{C}^{L\times L}$, satisfying
%\begin{align*}
$\boldsymbol{B}_n^{\left( N_n \right)} = \boldsymbol{U}_n \boldsymbol{E}_n.$% \label{eq16} \numberthis
%\end{align*}

\subsection{Tensor-based ESPRIT for Angular Frequency Estimation}
We first define the following selection matrices $\boldsymbol{J}_{n, 1} =   [ \boldsymbol{I}_{N_n -1}, \boldsymbol{0}_{(N_n - 1 ) \times 1}]$ and  $\boldsymbol{J}_{n, 2} =  [ \boldsymbol{0}_{(N_n - 1 ) \times 1}, \boldsymbol{I}_{N_n -1} ]$ and
%\begin{align*}
%\boldsymbol{J}_{n, 1} =  & \big[ \boldsymbol{I}_{N_n -1}, \boldsymbol{0}_{\left(N_n - 1 \right) \times 1} \big]; \\
%\boldsymbol{J}_{n, 2} =  & \big[ \boldsymbol{0}_{\left(N_n - 1 \right) \times 1}, \boldsymbol{I}_{N_n -1} \big].
%\end{align*}
let $\boldsymbol{\Phi}_n = \mathrm{Diag}\{ [\Phi_{1, n}, \Phi_{2, n}, \cdots, \Phi_{L, n} ]^{\mathrm{T}} \}$ with $\Phi_{l, n} = \exp{(\jmath \omega_{l, n})}$. ESPRIT relies on the so-called  shift invariance property to perform the angular frequency estimation, meaning that $\boldsymbol{J}_{n, 1}\boldsymbol{A}_n^{\left( N_n \right)} \boldsymbol{\Phi}_n = \boldsymbol{J}_{n, 2} \boldsymbol{A}_n^{\left( N_n \right)}$. In beamspace, generally with the transformation matrix $\boldsymbol{T}_n$, $\boldsymbol{J}_{n, 1}\boldsymbol{B}_n^{\left( N_n \right)} \boldsymbol{\Phi}_n \ne \boldsymbol{J}_{n, 2} \boldsymbol{B}_n^{\left( N_n \right)}$. However, if $\boldsymbol{T}_n$ itself has a shift invariance structure, the following proposition shows that the lost shift invariance structure can be restored.
\begin{proposition} \label{Th1}
Assume that $\boldsymbol{T}_n \in \mathbb{C}^{M_n \times N_n}$ has the shift invariance structure $\boldsymbol{J}_{n, 1} \boldsymbol{T}_n = \boldsymbol{J}_{n, 2} \boldsymbol{T}_n \boldsymbol{F}_n$ where $\boldsymbol{F}_n \in \mathbb{C}^{N_n\times N_n}$ is a non-singular matrix. If there exists a matrix $\boldsymbol{Q}_n \in \mathbb{C}^{N_n \times N_n}$, such that $\boldsymbol{Q}_n \boldsymbol{t}_{n, M_n} = \boldsymbol{0}_{N_n \times 1}$ and $\boldsymbol{Q}_n \boldsymbol{F}_n^{\mathrm{H}}\boldsymbol{t}_{n, 1} = \boldsymbol{0}_{N_n \times 1}$, 
%\begin{align*}
%\left\{ 
%\begin{array}{l}
 %   \boldsymbol{Q}_n \boldsymbol{t}_{n, M_n} = \boldsymbol{0}_{N_n \times 1}, \\
  %  \boldsymbol{Q}_n \boldsymbol{F}_n^{\mathrm{H}}\boldsymbol{t}_{n, 1} = \boldsymbol{0}_{N_n \times 1}, \label{eq17} \numberthis
%\end{array}
%\right.
%\end{align*}
then  $\boldsymbol{Q}_n \boldsymbol{B}_n^{\left( N_n \right)} \boldsymbol{\Phi}_n = \boldsymbol{Q}_n \boldsymbol{F}_n^{\mathrm{H}}\boldsymbol{B}_n^{\left( N_n \right)}$.
%\begin{align*}
%\boldsymbol{Q}_n \boldsymbol{B}_n^{\left( N_n \right)} \boldsymbol{\Phi}_n = \boldsymbol{Q}_n \boldsymbol{F}_n^{\mathrm{H}}\boldsymbol{B}_n^{\left( N_n \right)}. \label{eq18} \numberthis
%\end{align*}
\end{proposition}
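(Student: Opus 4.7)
The plan is to derive the beamspace shift invariance directly from the element-space one $\boldsymbol{J}_{n,1}\boldsymbol{A}_n^{(M_n)}\boldsymbol{\Phi}_n = \boldsymbol{J}_{n,2}\boldsymbol{A}_n^{(M_n)}$, combined with the assumed shift invariance of $\boldsymbol{T}_n$, and then to identify precisely which boundary terms fail to cancel in beamspace so that the hypotheses on $\boldsymbol{Q}_n$ are exactly what is needed to annihilate them.

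The first step is to lift the element-space identity (which lives in $\mathbb{C}^{(M_n-1)\times L}$) up to a full $\mathbb{C}^{M_n\times L}$ identity using the row-padding matrices $\boldsymbol{J}_{n,1}^{\mathrm{T}}$ and $\boldsymbol{J}_{n,2}^{\mathrm{T}}$. The key algebraic facts are the rank-one decompositions $\boldsymbol{J}_{n,1}^{\mathrm{T}}\boldsymbol{J}_{n,1} = \boldsymbol{I}_{M_n} - \boldsymbol{e}_{M_n}\boldsymbol{e}_{M_n}^{\mathrm{T}}$ and $\boldsymbol{J}_{n,2}^{\mathrm{T}}\boldsymbol{J}_{n,2} = \boldsymbol{I}_{M_n} - \boldsymbol{e}_1\boldsymbol{e}_1^{\mathrm{T}}$. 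Left-multiplying the element-space identity by $\boldsymbol{J}_{n,1}^{\mathrm{T}}$ and rearranging gives
\begin{equation*}
\boldsymbol{A}_n^{(M_n)}\boldsymbol{\Phi}_n = \boldsymbol{J}_{n,1}^{\mathrm{T}}\boldsymbol{J}_{n,2}\boldsymbol{A}_n^{(M_n)} + \boldsymbol{e}_{M_n}\boldsymbol{e}_{M_n}^{\mathrm{T}}\boldsymbol{A}_n^{(M_n)}\boldsymbol{\Phi}_n,
\end{equation*}
which isolates the unavoidable ``last row'' correction.

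The second step is to take the Hermitian transpose of the assumed shift invariance $\boldsymbol{J}_{n,1}\boldsymbol{T}_n = \boldsymbol{J}_{n,2}\boldsymbol{T}_n\boldsymbol{F}_n$, yielding $\boldsymbol{T}_n^{\mathrm{H}}\boldsymbol{J}_{n,1}^{\mathrm{T}} = \boldsymbol{F}_n^{\mathrm{H}}\boldsymbol{T}_n^{\mathrm{H}}\boldsymbol{J}_{n,2}^{\mathrm{T}}$. Now premultiply the lifted identity by $\boldsymbol{T}_n^{\mathrm{H}}$, substitute this transposed relation into the first term, and use $\boldsymbol{J}_{n,2}^{\mathrm{T}}\boldsymbol{J}_{n,2} = \boldsymbol{I}_{M_n} - \boldsymbol{e}_1\boldsymbol{e}_1^{\mathrm{T}}$ together with $\boldsymbol{T}_n^{\mathrm{H}}\boldsymbol{e}_1 = \boldsymbol{t}_{n,1}$ and $\boldsymbol{T}_n^{\mathrm{H}}\boldsymbol{e}_{M_n} = \boldsymbol{t}_{n,M_n}$. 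This reduces to
\begin{equation*}
\boldsymbol{B}_n^{(N_n)}\boldsymbol{\Phi}_n - \boldsymbol{F}_n^{\mathrm{H}}\boldsymbol{B}_n^{(N_n)} = -\boldsymbol{F}_n^{\mathrm{H}}\boldsymbol{t}_{n,1}\boldsymbol{e}_1^{\mathrm{T}}\boldsymbol{A}_n^{(M_n)} + \boldsymbol{t}_{n,M_n}\boldsymbol{e}_{M_n}^{\mathrm{T}}\boldsymbol{A}_n^{(M_n)}\boldsymbol{\Phi}_n,
\end{equation*}
i.e., the beamspace shift invariance fails only through two explicit rank-one boundary defects, one tied to the first row of $\boldsymbol{T}_n$ and one to its last row.

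The third and final step is a one-line check: left-multiplying by $\boldsymbol{Q}_n$, the two assumptions $\boldsymbol{Q}_n\boldsymbol{F}_n^{\mathrm{H}}\boldsymbol{t}_{n,1} = \boldsymbol{0}$ and $\boldsymbol{Q}_n\boldsymbol{t}_{n,M_n} = \boldsymbol{0}$ annihilate exactly these two defect terms, yielding $\boldsymbol{Q}_n\boldsymbol{B}_n^{(N_n)}\boldsymbol{\Phi}_n = \boldsymbol{Q}_n\boldsymbol{F}_n^{\mathrm{H}}\boldsymbol{B}_n^{(N_n)}$. The only real obstacle is a bookkeeping one, namely keeping careful track of the dimensions of $\boldsymbol{J}_{n,1}, \boldsymbol{J}_{n,2}$ when they act on $M_n$-sized objects (the row space of $\boldsymbol{T}_n$ and $\boldsymbol{A}_n^{(M_n)}$) versus their $N_n$-sized counterparts used elsewhere in the paper; once the indexing is pinned down, the argument is essentially algebraic manipulation of two rank-one correction terms.
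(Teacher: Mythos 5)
Your proof is correct. The paper itself does not supply an argument for this proposition (it only cites \cite{WenGarKulWitWym18, XuSilRoyKai94}), and your derivation is a faithful, self-contained reconstruction of the standard beamspace-ESPRIT argument from those references: lift the element-space identity via $\boldsymbol{J}_{n,1}^{\mathrm{T}}\boldsymbol{J}_{n,1}=\boldsymbol{I}_{M_n}-\boldsymbol{e}_{M_n}\boldsymbol{e}_{M_n}^{\mathrm{T}}$ and $\boldsymbol{J}_{n,2}^{\mathrm{T}}\boldsymbol{J}_{n,2}=\boldsymbol{I}_{M_n}-\boldsymbol{e}_1\boldsymbol{e}_1^{\mathrm{T}}$, isolate the two rank-one boundary defects $\boldsymbol{F}_n^{\mathrm{H}}\boldsymbol{t}_{n,1}\boldsymbol{e}_1^{\mathrm{T}}\boldsymbol{A}_n^{(M_n)}$ and $\boldsymbol{t}_{n,M_n}\boldsymbol{e}_{M_n}^{\mathrm{T}}\boldsymbol{A}_n^{(M_n)}\boldsymbol{\Phi}_n$, and observe that the hypotheses on $\boldsymbol{Q}_n$ annihilate exactly these. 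Your remark on the dimension bookkeeping is also apt: in the proposition the selection matrices acting on $\boldsymbol{T}_n$ must be read as $(M_n-1)\times M_n$, not the $(N_n-1)\times N_n$ versions defined earlier in the section.
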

% The proof is known in literature while I put here for reference
\begin{proof}
The proof can be found in \cite{WenGarKulWitWym18, XuSilRoyKai94}.
% From \eqref{eq17}, we have
%\begin{align*}
%\boldsymbol{Q}_n \boldsymbol{T}_n^{\mathrm{H}} = & \left[\boldsymbol{Q}_n \boldsymbol{t}_{n, 1}, \cdots, \boldsymbol{Q}_n \boldsymbol{t}_{n, {M_n} - 1}, \boldsymbol{0} \right] = \boldsymbol{Q}_n \boldsymbol{T}_n^{\mathrm{H}} \boldsymbol{J}_{n, 1}^{\mathrm{H}} \boldsymbol{J}_{n, 1}, \\
%\boldsymbol{Q}_n \boldsymbol{F}_n^{\mathrm{H}} \boldsymbol{T}_n^{\mathrm{H}} = & \boldsymbol{Q}_n \boldsymbol{F}_n^{\mathrm{H}} \boldsymbol{T}_n^{\mathrm{H}} \boldsymbol{J}_{n, 2}^{\mathrm{H}} \boldsymbol{J}_{n, 2}.
%\end{align*}
%Multiplying $\boldsymbol{A}_n^{\left( M_n \right)} \boldsymbol{\Phi}_n $ on both sides in the upper equation, we have
%\begin{align*}
%\boldsymbol{Q}_n \boldsymbol{B}_n^{\left( M_n \right)} \boldsymbol{\Phi}_n = & \boldsymbol{Q}_n \underbrace{\boldsymbol{T}_n^{\mathrm{H}} \boldsymbol{J}_{n, 1}^{\mathrm{H}} } \underbrace{ \boldsymbol{J}_{n, 1} \boldsymbol{A}_n^{\left( M_n \right)} \boldsymbol{\Phi}_n} \\
%= & \underbrace{ \boldsymbol{Q}_n \boldsymbol{F}_n^{\mathrm{H}} \boldsymbol{T}_n^{\mathrm{H}} \boldsymbol{J}_{n, 2}^{\mathrm{H}} \boldsymbol{J}_{n, 2} } \boldsymbol{A}_n^{\left( M_n \right)} \\
%= & \boldsymbol{Q}_n \boldsymbol{F}_n^{\mathrm{H}} \underbrace{ \boldsymbol{T}_n^{\mathrm{H}} \boldsymbol{A}_n^{\left( M_n \right)} } \\
%= & \boldsymbol{Q}_n \boldsymbol{F}_n^{\mathrm{H}} \boldsymbol{B}_n^{\left( M_n \right)}.
%\end{align*}
\end{proof}
%-------------------------------------
\begin{algorithm}[tb]
\caption{ Beamspace Tensor ESPRIT Approach \cite{WenSoWym20}} \label{alg:BTE}
\begin{algorithmic}[1]
\Require \parbox[t]{\dimexpr\linewidth- \algorithmicindent * 1}{Noisy beamspace observation $\hat{\boldsymbol{\mathcal{H}}}^{\left( \mathrm{b}\right)}$;\strut}
\Ensure \parbox[t]{\dimexpr\linewidth- \algorithmicindent * 1}{Channel parameter estimates: $\phi_{\mathrm{el}}^l$, $\phi_{\mathrm{az}}^l$, $\theta_{\mathrm{el}}^l$, $\theta_{\mathrm{az}}^l$, and $\tau_l$, for $l=1, 2, \cdots, L$.\strut}

\State Apply CP decomposition on $\hat{\boldsymbol{\mathcal{H}}}^{\left( \mathrm{b}\right)}$ to obtain $\boldsymbol{U}_n$ (\ref{eqUn});

% \State Pre-processing to obtain $\boldsymbol{F}_n$ (\ref{eqFn}) and then $\boldsymbol{p}_{n, 1}$ and $\boldsymbol{p}_{n, 2}$;

\State Determine $\boldsymbol{F}_n$ and $\mathbf{Q}_n$ from Proposition \ref{Th1};

\State Obtain $\boldsymbol{L}_{n, 1} = \boldsymbol{Q}_n $ and $\boldsymbol{L}_{n, 2} = \boldsymbol{Q}_n \boldsymbol{F}_n^{\mathrm{H}}$;

\State Obtain $\boldsymbol{\Gamma}_n = \left( \boldsymbol{L}_{n, 2} \boldsymbol{U}_n \right)^{\dagger} \left( \boldsymbol{L}_{n, 1} \boldsymbol{U}_n \right)$
%\begin{align*}
%\boldsymbol{\Gamma}_n = \left( \boldsymbol{L}_{n, 2} \boldsymbol{U}_n \right)^{\dagger} \left( \boldsymbol{L}_{n, 1} \boldsymbol{U}_n \right).
%\end{align*}

\State Perform the eigenvalue decomposition of $\boldsymbol{\Gamma}_n$ to obtain the matrix of complex eigenvalues $\boldsymbol{D}_n = \mathrm{Diag} ( \left[ d_{n, 1}, d_{n, 2}, \cdots, d_{n, L} \right]^{\mathrm{T}} )$, and the angular frequencies in the $n$-th mode are given by $\hat{\omega}_{l, n} = \Im \{\ln\left( \tilde{\Phi}_{l, n} \right)\}$
%\begin{align*}
%\omega_{l, n} = \Im \{\ln\left( \tilde{\Phi}_{l, n} \right)\},
%\end{align*}
for $l=1, 2, \cdots, L$.

\State Obtain the channel parameter estimates \cite{GeWenKimZhuJiaKimSveWym20, WenKulWitWym19}.

\end{algorithmic}
\end{algorithm}
 
Proposition \ref{Th1} indicates that we can restore the shift invariance properties by forming the matrix $\boldsymbol{Q}_n$ as a projection matrix corresponding to the orthogonal subspace spanned by  $\boldsymbol{t}_{n, M_n}, \boldsymbol{F}_n^{\mathrm{H}} \boldsymbol{t}_{n, 1}$. In particular, let $\boldsymbol{p}_{n, 1}$ and $\boldsymbol{p}_{n, 2}$ be the orthogonal column vectors for this subspace, then  $\boldsymbol{Q}_n$ can be obtained as $\boldsymbol{Q}_n = \boldsymbol{I}_{N_n} - \sum_{k=1}^2 \boldsymbol{p}_{n, k} \boldsymbol{p}_{n, k}^{\mathrm{H}}$. 
%\begin{align}
%\boldsymbol{Q}_n = \boldsymbol{I}_{N_n} - \sum_{k=1}^2 \boldsymbol{p}_{n, k} \boldsymbol{p}_{n, k}^{\mathrm{H}}.\label{eqQn}
%\end{align}
When a $\boldsymbol{Q}_n$ is determined, the shift-invariance property in the $n$-th mode can be restored with the following modified selection matrices
\begin{align}
\boldsymbol{L}_{n, 1} = & \boldsymbol{Q}_n, \label{eqLn1} \\
\boldsymbol{L}_{n, 2} = & \boldsymbol{Q}_n \boldsymbol{F}_n^{\mathrm{H}}. \label{eqLn2}
\end{align}
When the shift invariance structure does not hold for $\boldsymbol{T}_n$, i.e., $\boldsymbol{J}_{n, 1} \boldsymbol{T}_n \ne \boldsymbol{J}_{n, 2} \boldsymbol{T}_n \boldsymbol{F}_n$, we can find an approximate non-singular $N_n \times N_n$ matrix $\tilde{ \boldsymbol{F}}_n$, such that
%\begin{align*}
$\boldsymbol{J}_{n, 1} \boldsymbol{T}_n \approx \boldsymbol{J}_{n, 2} \boldsymbol{T}_n \tilde{\boldsymbol{F}}_n$.
The least-square solution is given as $\tilde{\boldsymbol{F}}_n = \left( \boldsymbol{J}_{n, 2} \boldsymbol{T}_n \right)^{\dagger} \boldsymbol{J}_{n, 1} \boldsymbol{T}_n$ from which $\boldsymbol{Q}_n$ can be obtained as before.
%\begin{align}
%\label{eqFn}
%\end{align}

Finally, the beamspace Tensor ESPRIT method \cite{WenSoWym20} is summarized in Algorithm \ref{alg:BTE}.

% Step 1: CPD of the tensor $\boldsymbol{\mathcal{K}}$ and obtain the leading $L$ signal subspace in the $n$-th mode as $\boldsymbol{U}_n \in \mathbb{C}^{N_n \times L}$;

% Step 2: Pre-processing to obtain $\boldsymbol{F}_n$ and then $\boldsymbol{p}_{n, 1}$ and $\boldsymbol{p}_{n, 2}$;

% Step 3: Obtain $\boldsymbol{L}_{n, 1} = \boldsymbol{Q}_n $ and $\boldsymbol{L}_{n, 2} = \boldsymbol{Q}_n \boldsymbol{F}_n^{\mathrm{H}}$;

% Step 4: Obtain 
% \begin{align*}
% \boldsymbol{\Gamma}_n = \left( \boldsymbol{L}_{n, 2} \boldsymbol{U}_n \right)^{\dagger} \left( \boldsymbol{L}_{n, 1} \boldsymbol{U}_n \right).
% \end{align*}

% Step 5: Perform the eigenvalue decomposition of $\boldsymbol{\Gamma}_n$ to obtain $\boldsymbol{D}_n = \mathrm{Diag} \left( \left[ d_{n, 1}, d_{n, 2}, \cdots, d_{n, L} \right]^{\mathrm{T}} \right)$, and the angular frequencies in the $n$-th mode is given by
% \begin{align*}
% \boldsymbol{\omega}_{l, n} = \ln{ \left( d_{n, l} \right) },
% \end{align*}
% for $l=1, 2, \cdots, L$.

%------------------------------------------------------------------------------------------------------------------------
\section{Matrix-based Beamspace ESPRIT}
%------------------------------------------------------------------------------------------------------------------------
%By transforming the tensor $\boldsymbol{\mathcal{H}}$ to a vector $\boldsymbol{h} \in \mathbb{C}^{M_1M_2\cdots M_5 \times 1}$ where
%\begin{align*}
%\boldsymbol{h} = & \big[h_{1, 1, 1, 1, 1}, h_{1, 1, 1, 1, 2}, \cdots, h_{1, 1, 1, 1, M_5}, \\ & \hspace{15mm}h_{1, 1, 1, 2, 1}, \cdots, h_{M_1, M_2, M_3, M_4, M_5} \big]^{\mathrm{T}},
%\end{align*}
%we have
%\begin{align*}
%\boldsymbol{h} = \mathrm{vec}_\mathrm{r} \left\{ \boldsymbol{\mathcal{H}} \right\} = \left( \boldsymbol{A}_1^{\left( M_1 \right)} \odot \boldsymbol{A}_2^{\left( M_2 \right)} \odot \cdots \odot \boldsymbol{A}_5^{\left( M_5 \right)} \right) \boldsymbol{\gamma}, \label{eq7} \numberthis
%\end{align*}

As an alternative to the tensor approach, we present a novel matrix-based approach for beamspace ESPRIT. By vectorizing the beamspace tensor $\boldsymbol{\mathcal{H}}^{\left( \mathrm{b} \right)}$, we arrive at
\begin{align*}
\boldsymbol{h}^{\left( \mathrm{b} \right)} & =  \mathrm{vec}_{\mathrm{r}} \{ \boldsymbol{\mathcal{H}}^{( \mathrm{b} )} \}=\big[h^{\left( \mathrm{b} \right)}_{1, 1, 1, 1, 1}, h^{\left( \mathrm{b} \right)}_{1, 1, 1, 1, 2}, \cdots, h^{\left( \mathrm{b} \right)}_{1, 1, 1, 1, M_5}, \\ & \hspace{15mm}h^{\left( \mathrm{b} \right)}_{1, 1, 1, 2, 1}, \cdots, h^{\left( \mathrm{b} \right)}_{N_1, N_2, N_3, N_4, M_5} \big]^{\mathrm{T}},\\
& =\left( \boldsymbol{B}_1^{\left( N_1 \right)} \odot \boldsymbol{B}_2^{\left( N_2 \right)} \odot \cdots \odot \boldsymbol{B}_5^{\left( N_5 \right)} \right) \boldsymbol{\gamma}, \label{eq10} \numberthis
\end{align*}
where $\odot$ is the Khatri-Rao product, $\boldsymbol{\gamma} = \left[ \gamma_1, \gamma_2, \cdots, \gamma_L \right]^{\mathrm{T}}$.

The proposed method has three parts: spatial smoothing for improved resolution, a novel low-complexity SVD of the channel matrix, and auto-paired angular frequency estimation.

\subsection{Spatial Smoothing}
Considering $\boldsymbol{T}_5 = \boldsymbol{I}_{M_5}$, we can apply spatial smoothing in the frequency domain. Specifically, we define the selection matrices
\begin{align*}
\boldsymbol{J}_{\ell_5} = & \left[\boldsymbol{0}_{K_5 \times \left( \ell_5 - 1 \right) }, \boldsymbol{I}_{K_5}, \boldsymbol{0}_{K_5 \times \left( L_5 - \ell_5 \right) } \right], \\
\boldsymbol{J}_{N_1, N_2, N_3, N_4, \ell_5} = & \boldsymbol{I}_{N_1} \otimes \boldsymbol{I}_{N_2} \otimes \boldsymbol{I}_{N_3} \otimes \boldsymbol{I}_{N_4} \otimes \boldsymbol{J}_{\ell_5},
\end{align*}
where $L_5 + K_5 = M_5 + 1$. We then construct a matrix $\boldsymbol{H} \in \mathbb{C}^{N_1 N_2 N_3 N_4 K_5 \times L_5}$ as
\begin{align*}
\boldsymbol{H} = & \mathcal{S}( \boldsymbol{h}^{\left( \mathrm{b} \right) } ) = \big[ \boldsymbol{J}_{N_1, N_2, N_3, N_4, 1} \boldsymbol{h}^{\left( \mathrm{b} \right) }, \\
& \hspace{6mm} \boldsymbol{J}_{N_1, N_2, N_3, N_4, 2} \boldsymbol{h}^{\left( \mathrm{b} \right) },
\cdots, \boldsymbol{J}_{N_1, N_2, N_3, N_4, L_5} \boldsymbol{h}^{\left( \mathrm{b} \right) }
\big], \label{eq19} \numberthis
\end{align*}
where we use $\mathcal{S}\left( \cdot \right)$ to denote the smoothing operation over the frequency domain.

% With DFT beams, we can also apply the forward-backward averaging techniques to further explore the data samples in beamspace. Specifically, we define a new vector $\breve{\boldsymbol{h}}^{\left( \mathrm{b} \right) } \in \mathbb{C}^{N_1N_2\cdots N_5 \times 1}$ as
%\begin{align*}
%\breve{\boldsymbol{h}}^{\left( \mathrm{b} \right) } = \boldsymbol{\Psi} \left( \boldsymbol{h}^{\left( \mathrm{b} \right) %}\right)^*, \label{eq23} \numberthis
%\end{align*}
%where $\boldsymbol{\Psi} \in \mathbb{C}^{N_1N_2\cdots N_5 \times N_1N_2\cdots N_5}$ is a permutation matrix with ones on its anti-diagonal. Similarly, we construct a matrix $\breve{\boldsymbol{H}} \in \mathbb{C}^{M_1 M_2 M_3 M_4 K_5 \times L_5}$ as
%\begin{align*}
%\breve{\boldsymbol{H}} = & \big[ \boldsymbol{J}_{M_1, M_2, M_3, M_4, 1} \breve{\boldsymbol{h}}^{\left( \mathrm{b} \right) }, %\boldsymbol{J}_{M_1, M_2, M_3, M_4, 2} \breve{\boldsymbol{h}}^{\left( \mathrm{b} \right) }, \\ 
%& \hspace{26mm} \cdots, \boldsymbol{J}_{M_1, M_2, M_3, M_4, L_5} \breve{\boldsymbol{h}}^{\left( \mathrm{b} \right) }
%\big]. \label{eq24} \numberthis
%\end{align*}
%From Appendix \ref{AppendixA}, the matrix $\breve{\boldsymbol{H}} $ can be factorized as
%\begin{align*}
%\breve{\boldsymbol{H}} = \boldsymbol{P} \mathrm{Diag} \left( \breve{\boldsymbol{\gamma}} \right) \boldsymbol{G}^{\mathrm{T}}, %\label{eq25} \numberthis
%\end{align*}
%where $\breve{\boldsymbol{\gamma}}$ is given by

%------------------------------------------------------------------------------------------------------------------------
\subsection{Signal Subspace and Shift Invariance Properties}
From Appendix \ref{AppendixA}, we can see that $\boldsymbol{H}$ can be factorized as
\begin{align*}
\boldsymbol{H} = \boldsymbol{P} \mathrm{Diag} \left( \boldsymbol{\gamma} \right) \boldsymbol{G}^{\mathrm{T}}, \label{eq20} \numberthis
\end{align*}
where $\boldsymbol{P} \in \mathbb{C}^{N_1 N_2 N_3 N_4 K_5 \times L}$ and $\boldsymbol{G} \in \mathbb{C}^{L_5 \times L} $, given by
\begin{align*}
\boldsymbol{P} = & \boldsymbol{B}_1^{\left( N_1 \right)} \odot \cdots \odot \boldsymbol{B}_4^{\left( N_4 \right)} \odot \boldsymbol{A}_5^{\left( K_5 \right)}, \label{eq21} \numberthis\\
\boldsymbol{G} = & \boldsymbol{A}_5^{\left( L_5 \right)}. \label{eq22} \numberthis
\end{align*}
The following proposition shows that the shift invariance properties can be restored with the matrix formulation.
\begin{proposition} \label{propA1}
Assume that $\boldsymbol{T}_n$, $n=1, 2, 3, 4$, has the shift invariance structure, and $\boldsymbol{P}$ is given in \eqref{eq21}, we then have
\begin{align*}
\breve{\boldsymbol{J}}_{n, 1}\boldsymbol{P} \boldsymbol{\Phi}_n = \breve{\boldsymbol{J}}_{n, 2}\boldsymbol{P}, \label{eqIV1} \numberthis
\end{align*}
where $\breve{\boldsymbol{J}}_{n, i}$, $ n = 1, 2, \cdots, N$, and $i= 1, 2$, is defined as follows:
\begin{align*}
\breve{\boldsymbol{J}}_{n, i} = \left\{ 
\begin{array}{ll}
\boldsymbol{I}_{N_1} \otimes \cdots \otimes \boldsymbol{L}_{n, i} \otimes \cdots \otimes \boldsymbol{I}_{N_4} \otimes \boldsymbol{I}_{K_5}, & n=1, 2, 3, 4; \\
\boldsymbol{I}_{N_1} \otimes \cdots \otimes \boldsymbol{I}_{N_4} \otimes \bar{\boldsymbol{J}}_{5, i}, & n=5;
\end{array}
\right.
\end{align*}
with the modified selection matrices $\boldsymbol{L}_{n, 1}$ and $\boldsymbol{L}_{n, 2}$ given in \eqref{eqLn1} and \eqref{eqLn2}, respectively, and
\begin{align*}
\bar{\boldsymbol{J}}_{5, i} = \left\{ 
\begin{array}{ll}
\left[\boldsymbol{I}_{K_5 - 1}, \boldsymbol{0}_{\left( K_5 -1 \right)\times 1} \right], & i=1; \\
\left[\boldsymbol{0}_{\left( K_5 -1 \right)\times 1}, \boldsymbol{I}_{K_5 - 1} \right], & i=2.
\end{array}
\right.
\end{align*}
\end{proposition}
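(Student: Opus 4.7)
The plan is to reduce the five-mode identity \eqref{eqIV1} to per-mode shift invariance properties that are already available in the paper: Proposition \ref{Th1} for the four spatial modes $n\in\{1,2,3,4\}$, and the classical Vandermonde shift invariance for the frequency mode $n=5$ (which applies because $\boldsymbol{T}_5=\boldsymbol{I}_{M_5}$ leaves $\boldsymbol{A}_5^{(K_5)}$ untransformed). The main algebraic tool is the mixed-product identity
$$(\boldsymbol{M}_1\otimes\boldsymbol{M}_2\otimes\cdots\otimes\boldsymbol{M}_R)(\boldsymbol{X}_1\odot\boldsymbol{X}_2\odot\cdots\odot\boldsymbol{X}_R)=(\boldsymbol{M}_1\boldsymbol{X}_1)\odot(\boldsymbol{M}_2\boldsymbol{X}_2)\odot\cdots\odot(\boldsymbol{M}_R\boldsymbol{X}_R),$$
valid whenever the $\boldsymbol{X}_r$ share the same number of columns, together with the fact that right-multiplication by a diagonal matrix can be absorbed into any single factor of a Khatri-Rao product, i.e.\ $(\boldsymbol{X}_1\odot\cdots\odot\boldsymbol{X}_R)\boldsymbol{\Phi}=\boldsymbol{X}_1\odot\cdots\odot(\boldsymbol{X}_k\boldsymbol{\Phi})\odot\cdots\odot\boldsymbol{X}_R$ for any index $k$.

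Applying these two identities to \eqref{eq21}, for $n\in\{1,2,3,4\}$ I would reduce both sides of \eqref{eqIV1} to
$$\breve{\boldsymbol{J}}_{n,i}\boldsymbol{P}=\boldsymbol{B}_1^{(N_1)}\odot\cdots\odot\boldsymbol{L}_{n,i}\boldsymbol{B}_n^{(N_n)}\odot\cdots\odot\boldsymbol{B}_4^{(N_4)}\odot\boldsymbol{A}_5^{(K_5)},$$
so that the equality collapses to the single-factor statement $\boldsymbol{L}_{n,1}\boldsymbol{B}_n^{(N_n)}\boldsymbol{\Phi}_n=\boldsymbol{L}_{n,2}\boldsymbol{B}_n^{(N_n)}$. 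With $\boldsymbol{L}_{n,1}=\boldsymbol{Q}_n$ and $\boldsymbol{L}_{n,2}=\boldsymbol{Q}_n\boldsymbol{F}_n^{\mathrm{H}}$ given in \eqref{eqLn1}--\eqref{eqLn2}, this is precisely the conclusion of Proposition \ref{Th1}. For $n=5$ the same manipulations push $\bar{\boldsymbol{J}}_{5,i}$ onto the final factor only, reducing \eqref{eqIV1} to $\bar{\boldsymbol{J}}_{5,1}\boldsymbol{A}_5^{(K_5)}\boldsymbol{\Phi}_5=\bar{\boldsymbol{J}}_{5,2}\boldsymbol{A}_5^{(K_5)}$, the textbook Vandermonde shift-invariance property. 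Pairing these two reductions gives \eqref{eqIV1} in all five modes.

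The main obstacle is not conceptual but book-keeping: one must verify that each of $\boldsymbol{B}_1^{(N_1)},\ldots,\boldsymbol{B}_4^{(N_4)},\boldsymbol{A}_5^{(K_5)}$ has exactly $L$ columns so the Khatri-Rao product is well-defined, and that the Kronecker-factor structure of $\breve{\boldsymbol{J}}_{n,i}$ (an identity in every mode except mode $n$, where it is $\boldsymbol{L}_{n,i}$ or $\bar{\boldsymbol{J}}_{5,i}$) is aligned block-by-block with the factors of $\boldsymbol{P}$ so that the mixed-product identity applies factor-wise. Once this alignment is explicit, the argument is essentially a one-line reduction to Proposition \ref{Th1} in four cases and to Vandermonde shift-invariance in the fifth.
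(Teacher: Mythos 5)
Your proposal is correct and follows essentially the same route as the paper's proof in Appendix B: both apply the Kronecker--Khatri-Rao mixed-product identity to push $\breve{\boldsymbol{J}}_{n,i}$ onto the $n$-th factor of $\boldsymbol{P}$, absorb the diagonal $\boldsymbol{\Phi}_n$ into that same factor, and then invoke Proposition~\ref{Th1} for $n=1,2,3,4$ (the paper handles $n=5$ by "a similar process," which you merely make explicit as the standard Vandermonde shift invariance). No substantive difference.
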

\begin{proof}
The proof can be found in Appendix \ref{ProofpropA1}.
\end{proof}
Proposition \ref{propA1} indicates that the shift invariance properties can be perfectly restored if the transformation matrix in each mode has the desired shift invariance structure. When the shift invariance structure does not hold, i.e., $\boldsymbol{J}_{n, 1} \boldsymbol{T}_n \ne \boldsymbol{J}_{n, 2} \boldsymbol{T}_n \boldsymbol{F}_n$, we can find an approximate non-singular matrix $\tilde{ \boldsymbol{F}}_n$, such that
$\boldsymbol{J}_{n, 1} \boldsymbol{T}_n \approx \boldsymbol{J}_{n, 2} \boldsymbol{T}_n \tilde{\boldsymbol{F}}_n$ as before. After that, the approximation $\tilde{ \boldsymbol{F}}_n$ is used in proposition \ref{propA1}. Therefore, the proposed framework can be applied to system with arbitrary beamforming and combination matrices.

Performing singular value decomposition on $\boldsymbol{H}$, we have 
\begin{align*}
\boldsymbol{H} = \boldsymbol{U}_{\mathrm{s}} \boldsymbol{\Sigma}_{\mathrm{s}} \boldsymbol{V}_{\mathrm{s}}^{\mathrm{H}}, \numberthis \label{eq23}
\end{align*}
where the $L$ leading left singular vectors in $\boldsymbol{U}_{\mathrm{s}}$ span the column space of $\boldsymbol{H}$. Note the same space is spanned by the columns of $\boldsymbol{P}$, we then have a non-singular matrix $\boldsymbol{E} \in \mathbb{C}^{L\times L}$ such that $\boldsymbol{P} = \boldsymbol{U}_{\mathrm{s}} \boldsymbol{E}$. 
%\begin{align*}
%\boldsymbol{P} = \boldsymbol{U}_{\mathrm{s}} \boldsymbol{E}. \numberthis %\label{eq24}
%\end{align*}
Substituting into \eqref{eqIV1}, we have
\begin{align*}
\breve{\boldsymbol{J}}_{n, 1}\boldsymbol{U}_{\mathrm{s}} \tilde{\boldsymbol{\Gamma}}_n = \breve{\boldsymbol{J}}_{n, 2} \boldsymbol{U}_{\mathrm{s}}, \numberthis \label{eq26}
\end{align*}
where $\tilde{\boldsymbol{\Gamma}}_n \in \mathbb{C}^{L\times L}$ is given by
\begin{align*}
\tilde{\boldsymbol{\Gamma}}_n = \boldsymbol{E} \boldsymbol{\Phi}_n \boldsymbol{E}^{-1}. \numberthis \label{eq27}
\end{align*}
The least-square solution of $\tilde{\boldsymbol{\Gamma}}_n$ from \eqref{eq26} can be obtained as
\begin{align*}
\tilde{\boldsymbol{\Gamma}}_n \approx \left( \breve{\boldsymbol{J}}_{n, 1} \boldsymbol{U}_{\mathrm{s}} \right)^{\dagger} \breve{\boldsymbol{J}}_{n, 2} \boldsymbol{U}_{\mathrm{s}}. \numberthis \label{eq28}
\end{align*}
We perform the eigenvalue decomposition on $\tilde{\boldsymbol{\Gamma}}_n$ obtained from \eqref{eq28}, and the angular frequency in each dimension can be obtained through the eigenvalues as indicated by \eqref{eq27}.

%------------------------------------------------------------------------------------------------------------------------
\subsection{Channel Parameter Estimation}
The independent eigenvalue decomposition operations in each mode cannot achieve auto-pairing. Moreover, when identical or very close angular frequencies in the $n$-th mode is present, the eigenvectors corresponding to the same eigenvalues cannot be distinguished. This could happen, for example, with close scatterer points in the environments. To achieve auto-pairing of the angular frequencies crossing all dimensions, we construct a new matrix $\boldsymbol{K} \in \mathbb{C}^{L\times L}$ as
\begin{align*}
\boldsymbol{K} = \sum_{n=1}^{N} \beta_n \tilde{\boldsymbol{\Gamma}}_n, \numberthis \label{eq29}
\end{align*}
where $\boldsymbol{\beta} = \left[ \beta_1, \beta_2, \cdots, \beta_N \right]^{\mathrm{T}}$ is a random vector. With the results in \eqref{eq27}, we have
\begin{align*}
\boldsymbol{K} = & \sum_{n=1}^{N} \beta_n \boldsymbol{E} \boldsymbol{\Phi}_n \boldsymbol{E}^{-1} = \boldsymbol{E} \boldsymbol{\Lambda} \boldsymbol{E}^{-1}, \numberthis \label{eq30}
\end{align*}
where $\boldsymbol{\Lambda} \in \mathbb{C}^{L\times L}$ is a diagonal matrix given by $\boldsymbol{\Lambda} = \sum_{n=1}^{N} \beta_n \boldsymbol{\Phi}_n$.  
%\begin{align*}
%. \numberthis \label{eq31}
%\end{align*}
This indicates that even if the angular frequencies in the $n$-th dimension are the same, the diagonal elements in $\boldsymbol{\Lambda}$ can be likely distinct, since the angular frequencies in other dimensions can be different. As a result, instead of applying eigenvalue decomposition operations on each $\tilde{\boldsymbol{\Gamma}}_n$, we only need to perform one eigenvalue decomposition on $\boldsymbol{K}$. As long as the diagonal elements in $\boldsymbol{\Lambda}$ are distinct from each other, a unique $\boldsymbol{E}$ can be obtained, which is used to recover the angular frequencies in each mode. This leads to auto-pairing of angular frequencies in all dimensions. With the acquisition of $\boldsymbol{E}$, from \eqref{eq27}, we have
\begin{align*}
\tilde{\boldsymbol{\Phi}}_n = \boldsymbol{E}^{-1} \tilde{\boldsymbol{\Gamma}}_n \boldsymbol{E}. \numberthis \label{eq32}
\end{align*}
The angular frequency $\omega_{l, n}$ is obtained as
\begin{align*}
\omega_{l, n} = \Im \{\ln\left( \tilde{\Phi}_{l, n} \right)\}. \numberthis \label{eq33}
\end{align*}

%------------------------------------------------------------------------------------------------------------------------
\begin{algorithm}[t]
\caption{Matrix-based Beamspace ESPRIT Approach for Channel Parameter Estimation} \label{alg:AFE}
\begin{algorithmic}[1]
\Require \parbox[t]{\dimexpr\linewidth- \algorithmicindent * 1}{Noisy beamspace observation in vector $\tilde{\boldsymbol{h}}^{\left( \mathrm{b} \right)} = \mathrm{vec}_\mathrm{r} \left\{ \hat{\boldsymbol{\mathcal{H}}}^{\left( \mathrm{b}\right) } \right\}$;\strut}
\Ensure \parbox[t]{\dimexpr\linewidth- \algorithmicindent * 1}{Channel parameter estimates: $\phi_{\mathrm{el}}^l$, $\phi_{\mathrm{az}}^l$, $\theta_{\mathrm{el}}^l$, $\theta_{\mathrm{az}}^l$, $\tau_l$, and $\gamma_l$, for $l=1, 2, \cdots, L$.\strut}
\State Choose $L_5$ and let
$K_5 \gets M_5 + 1 - L_5$;
\State Construct the matrix $\tilde{\boldsymbol{H}}$ according to \eqref{eq19} with $\tilde{\boldsymbol{h}}^{\left( \mathrm{b} \right)}$;
\State Perform the SVD of $\tilde{\boldsymbol{H}}$, and extract the signal subspace $\tilde{\boldsymbol{U}}_{\mathrm{s}}$ with the leading $L$ singular vectors;
\State Compute $\tilde{\boldsymbol{ \Gamma}}_n$ from \eqref{eq28} for $n=1, 2, \cdots, N$ with $\tilde{\boldsymbol{U}}_{\mathrm{s}}$;
\State Generate random $\boldsymbol{\beta}$, and compute $\tilde{\boldsymbol{ K}}$ from \eqref{eq29};
\State Perform the ED of $\tilde{\boldsymbol{ K}}$ to extract $\tilde{\boldsymbol{E}}$ as shown in \eqref{eq30};
\State Apply $\tilde{\boldsymbol{E}}$ to compute $\tilde{\boldsymbol{ \Psi}}_n$ from \eqref{eq32} for $n=1, 2, \cdots, N$;
\State Estimate the angular frequencies ${\hat \omega}_{l, n}$ from \eqref{eq33};% for $l=1, 2, \cdots, L$ and $n=1,2, \cdots, N$;
\State Obtain the channel parameter estimates.
\end{algorithmic}
\end{algorithm}
%------------------------------------------------------------------------------------------------------------------------

With the estimated angular frequencies, we reconstruct the matrix $\hat{\boldsymbol{B}} \in \mathbb{C}^{N_1N_2\cdots N_5 \times L}$ as
\begin{align*}
\hat{\boldsymbol{B}} = \hat{\boldsymbol{B}}_1^{\left( N_1 \right)} \odot \hat{\boldsymbol{B}}_2^{\left( N_2 \right)} \odot \cdots \odot \hat{\boldsymbol{B}}_4^{\left( N_4 \right)} \odot \hat{\boldsymbol{A}}_5^{\left( M_5 \right)}, \numberthis \label{eq34}
\end{align*}
where $\hat{\boldsymbol{B}}_n^{\left( N_n \right)}$ is computed from \eqref{eq11}. 
The corresponding channel parameters associated with each propagation path can be obtained from the one-to-one mapping and $\hat{\boldsymbol{\gamma}} =  \left(\hat{\boldsymbol{B}} \right)^{\dagger} \tilde{\boldsymbol{h}}^{\left( \mathrm{b} \right)}$. 
%\begin{align*}
%\phi_{\mathrm{el}}^l = & \arccos \left(\frac{\omega_{l, 2}}{\pi} \right), \numberthis \label{eq35} \\
%\phi_{\mathrm{az}}^l = & \arcsin \left(\frac{\omega_{l, 1}}{\pi \sin \left( \phi_{\mathrm{el}}^l \right) } \right), \numberthis \label{eq36} \\
%\theta_{\mathrm{el}}^l = & \arccos \left(\frac{\omega_{l, 4}}{\pi} \right), \numberthis \label{eq37} \\
%\theta_{\mathrm{az}}^l = & \arcsin \left(\frac{\omega_{l, 3}}{\pi \sin \left( \theta_{\mathrm{el}}^l \right) } \right), \numberthis \label{eq38} \\
%\tau_l = & -\frac{\omega_{l, 5}}{2\pi \Delta f}, \numberthis \label{eq39} \\
%\hat{\boldsymbol{\gamma}} = & \left(\hat{\boldsymbol{B}} \right)^{\dagger} \tilde{\boldsymbol{h}}^{\left( \mathrm{b} \right)}. \numberthis \label{eq40}
%\end{align*}
%
The matrix-based beamspace ESPRIT algorithm is presented in Algorithm \ref{alg:AFE}.

%------------------------------------------------------------------------------------------------------------------------
%\subsection{Localization and Mapping}\label{secIVD}

\subsection{Variations of the Hybrid ESPRIT Approaches}
When $\boldsymbol{T}_n$ does not hold the shift-invariance properties, the approximation $\tilde{\boldsymbol{F}}_n$ from the least-square solution requires the full-column rank conditions of $\boldsymbol{J}_{n, 2}\boldsymbol{T}_n$. However, if the conditions are not valid, for example, in the case $M_n < N_n$, we can apply the hybrid ESPRIT approaches instead. Specifically, when $M_n < N_n$, and $\boldsymbol{T}_n$ is full-row rank\footnote{In the design of the transform matrix $\boldsymbol{T}_n$, we can easily meet the full rank conditions. When using random beams in the system, the full rank conditions are satisfied with high probability.}, we define the tensor $\breve{\boldsymbol{\mathcal{H}}}_n \in \mathbb{C}^{N_1 \times \cdots \times M_n \times \cdots \times N_5}$ as the $n$-th mode product of $\boldsymbol{\mathcal{H}}^{( \mathrm{b} )}$ and the matrix $( \boldsymbol{T}_n^{\mathrm{H}})^{\dagger}$, i.e., 
\begin{align*}
\breve{\boldsymbol{\mathcal{H}}}_n = & \boldsymbol{\mathcal{H}}^{\left( \mathrm{b} \right)} \times_n \left( \boldsymbol{T}_n^{\mathrm{H}}\right)^{\dagger}. \numberthis
\end{align*}
As a result, the corresponding canonical polyadic model of $\breve{\boldsymbol{\mathcal{H}}}_n$ can be given as
\begin{align*}
\breve{\boldsymbol{\mathcal{H}}}_n = & \sum_{l=1}^L \gamma_l \boldsymbol{b}^{\left( N_1 \right)} \left( \omega_{l, 1} \right) \circ \cdots \circ \boldsymbol{a}^{\left( M_n \right)} \left( \omega_{l, n} \right) \circ \cdots \circ \boldsymbol{b}^{\left( N_5 \right)} \left( \omega_{l, 5} \right),
\end{align*}
where we have applied the results $( \boldsymbol{T}_n^{\mathrm{H}} )^{\dagger} \boldsymbol{T}_n^{\mathrm{H}} = \boldsymbol{I}_{M_n}$. Therefore, the angular frequency estimation in the $n$-th mode can be obtained through the element-space ESPRIT approaches as we have done in the frequency domain. For simplicity, we use the new selection matrices $\boldsymbol{L}_{n, 1} \in \mathbb{C}^{(M_n - 1) \times M_n}$ and $\boldsymbol{L}_{n, 2} \in \mathbb{C}^{(M_n - 1) \times M_n}$ as
\begin{align*}
\boldsymbol{L}_{n, i} = & \left\{
\begin{array}{ll}
\left[\boldsymbol{I}_{M_n -1}, \boldsymbol{0}_{\left( M_n -1\right)\times 1 } \right], & i=1; \\
\left[ \boldsymbol{0}_{\left( M_n -1\right)\times 1 }, \boldsymbol{I}_{M_n -1} \right], & i=2,
\end{array}
\right.
\end{align*}
and $\breve{\boldsymbol{\mathcal{H}}}_n$ in the matrix-based beamspace ESPRIT approaches. This process can be repeated if the shift invariance properties does not hold in multiple $\boldsymbol{T}_n$, hence we refer the new approach as to the hybrid ESPRIT approach.

%------------------------------------------------------------------------------------------------------------------------
\subsection{Reduced-Complexity Matrix-Based Beamspace ESPRIT}
By exploiting the structure of $\tilde{\boldsymbol{H}}$, we propose a low-complexity SVD algorithm, which is the computational bottleneck in Algorithm \ref{alg:AFE}.

\subsubsection{Proposed SVD algorithm}
 To begin with, we define a set of matrices $\tilde{\boldsymbol{H}}_{n_1,n_2,n_3,n_4} \in \mathbb{C}^{K_5 \times L_5}$ as
\begin{align*}
\tilde{\boldsymbol{H}}_{n_1,n_2,n_3,n_4} = & \left[ 
\begin{array}{cccc}
\tilde{h}_{n_1,n_2,n_3,n_4,1}^{\left( b \right)} & \cdots & \tilde{h}_{n_1,n_2,n_3,n_4,L_5}^{\left( b \right)} \\
\vdots & \ddots & \vdots \\
\tilde{h}_{n_1,n_2,n_3,n_4,K_5}^{\left( b \right)} & \cdots & \tilde{h}_{n_1,n_2,n_3,n_4,N_5}^{\left( b \right)}
\end{array}
\right].
\end{align*}
It immediately arrives that the matrix $\tilde{\boldsymbol{H}}$ can be constructed as
\begin{align*}
\tilde{\boldsymbol{H}} = \left[ 
\begin{array}{c}
\tilde{\boldsymbol{H}}_{1,1,1,1} \\
\tilde{\boldsymbol{H}}_{1,1,1,2} \\
\vdots \\
%\tilde{\boldsymbol{H}}_{1,1,1,N_4} \\
\tilde{\boldsymbol{H}}_{1,1,2,1} \\
\vdots \\
\tilde{\boldsymbol{H}}_{N_1,N_2,N_3,N_4} \\
\end{array}
\right].
\end{align*}

The SVD of $\tilde{\boldsymbol{H}}$ consists of two steps: Lanczos bidiagonalization and the SVD of bidiagonalization matrix. Specifically, the Lanczos bidiagonalization is to transform $\tilde{\boldsymbol{H}}$ into an upper bidiagonal matrix $\boldsymbol{J} \in \mathbb{C}^{L_5 \times L_5}$ as $\boldsymbol{J} = \boldsymbol{U}_{\mathrm{L}}^{\mathrm{H}} \tilde{\boldsymbol{H}}\boldsymbol{V}_{\mathrm{L}}$, 
%\begin{align*}
%\boldsymbol{J} = \boldsymbol{U}_{\mathrm{L}}^{\mathrm{H}} \tilde{\boldsymbol{H}}\boldsymbol{V}_{\mathrm{L}},
%\end{align*}
where $\boldsymbol{U}_{\mathrm{L}} \in \mathbb{C}^{N_1 N_2 N_3 N_4 K_5 \times L_5}$, and $\boldsymbol{V}_{\mathrm{L}} \in \mathbb{C}^{L_5 \times L_5}$ satisfying
%\begin{align*}
$\boldsymbol{U}_{\mathrm{L}}^{\mathrm{H}} \boldsymbol{U}_{\mathrm{L}} = \boldsymbol{V}_{\mathrm{L}}\boldsymbol{V}_{\mathrm{L}}^{\mathrm{H}} = \boldsymbol{V}_{\mathrm{L}}^{\mathrm{H}}\boldsymbol{V}_{\mathrm{L}} = \boldsymbol{I}_{L_5}$.
%\end{align*}
Let $\boldsymbol{U}_{\mathrm{L}} = [\boldsymbol{u}_1, \boldsymbol{u}_2, \cdots, \boldsymbol{u}_{L_5}]$, and $\boldsymbol{V}_{\mathrm{L}} = [\boldsymbol{v}_1, \boldsymbol{v}_2, \cdots, \boldsymbol{v}_{L_5}]$. The upper bidiagonal matrix $\boldsymbol{J}$ can be represented by two vectors, i.e., $\boldsymbol{a} = [a_1, a_2, \cdots, a_{L_5}]^{\mathrm{T}}$ and $\boldsymbol{b} = [b_1, b_2, \cdots, b_{L_5 - 1}]^{\mathrm{T}}$, and
\begin{align*}
\boldsymbol{J} = \left[ 
\begin{array}{ccccc}
    a_1 & b_1 & & \cdots & 0 \\
    %& \ddots & \ddots & & \vdots \\
    & & \ddots & \ddots & \\
    \vdots & & & \ddots & b_{L_5 -1} \\
    0 & \cdots & & & a_{L_5}
\end{array}
\right].
\end{align*}
The vectors $\boldsymbol{u}_{\ell_5}$ and $\boldsymbol{v}_{\ell_5}$ can be obtained through the Lanczos recursions, i.e.,
\begin{align*} 
a_{\ell_5} \boldsymbol{u}_{\ell_5} = & \tilde{\boldsymbol{H}}\boldsymbol{v}_{\ell_5} - b_{{\ell_5}-1} \boldsymbol{u}_{{\ell_5}-1}, \numberthis\\
b_{\ell_5} \boldsymbol{v}_{{\ell_5}+1} = & \tilde{\boldsymbol{H}}^{\mathrm{H}}\boldsymbol{u}_{\ell_5} - a_{\ell_5} \boldsymbol{v}_{\ell_5}, \numberthis
\end{align*}
given the initialization vector $\boldsymbol{u}_0$ and $b_0$. After the Lanczos bidiagonalization, we perform the SVD of $\boldsymbol{J}$, i.e.,
\begin{align*} 
\boldsymbol{J} = \boldsymbol{U}_{\mathrm{J}} \boldsymbol{\Sigma}_{\mathrm{J}}\boldsymbol{V}_{\mathrm{J}}^{\mathrm{H}}, \numberthis
\end{align*}
where both $\boldsymbol{U}_{\mathrm{J}} \in \mathbb{C}^{L_5 \times L_5}$ and $\boldsymbol{V}_{\mathrm{J}} \in \mathbb{C}^{L_5 \times L_5}$ are unitary matrices. As a result, the left unitary matrix corresponding to the SVD of $\tilde{\boldsymbol{H}}$ can be obtained as
\begin{align*} 
\boldsymbol{U}_{\mathrm{H}} = \boldsymbol{U}_{\mathrm{L}}\boldsymbol{U}_{\mathrm{J}}, \numberthis
\end{align*}
and the leading $L$ singular vectors in $\boldsymbol{U}_{\mathrm{H}}$ are extracted as signal subspace, i.e., $\tilde{\boldsymbol{U}}_{\mathrm{s}} =( \boldsymbol{U}_{\mathrm{H}})_{:, 1:L}$.
Note $\tilde{\boldsymbol{H}}_{n_1,n_2,n_3,n_4}$ is a Hankel matrix, and the matrix-vector product can be efficiently implemented with fast Fourier transform (FFT) and inverse-FFT (IFFT) algorithms \cite{JiaGeZhuWym21, LuXuQia15}. As a result, the SVD of $\tilde{\boldsymbol{H}}$ can be implemented efficiently.

%------------------------------------------------------------------------------------------------------------------------
\subsubsection{Complexity Analysis}\label{secIVF}
In Algorithm \ref{alg:AFE}, we notice that the main complexity comes from the SVD of $\tilde{\boldsymbol{H}}$. From \cite{JiaGeZhuWym21, LuXuQia15}, the matrix-vector product for $\tilde{\boldsymbol{H}}_{n_1,n_2,n_3,n_4}$ can be efficiently implemented with $\mathcal{O} (N_5 \log N_5 )$ computations. Therefore, the matrix-vector product for $\tilde{\boldsymbol{H}}$ requires $\mathcal{O} (J \log N_5 )$ computations where $J=N_1 N_2 N_3 N_4 N_5$. The Lanczos bidiagonalization process can be implemented with $\mathcal{O} (JN_5 \log N_5 )$. In addition, the SVD of a bidiagonal matrix can be implemented with $\mathcal{O}(N_5^2)$. Therefore, the overall complexity to perform the SVD of $\tilde{\boldsymbol{H}}$ is $\mathcal{O} (JN_5 \log N_5 )$. However, we only need to extract the $L$ leading singular vectors; therefore, the overall complexity is further reduced to $\mathcal{O} (LJ \log N_5 )$, which is almost linearly with the $LJ$. On the other hand, the major computations in Algorithm \ref{alg:BTE} comes from the CP decomposition of the tensor $\hat{\boldsymbol{\mathcal{H}}}^{( \mathrm{b})}$ with the size of $N_1\times N_2\times N_3\times N_4\times M_5$, which requires $\mathcal{O} (2^N LJ + NL^3  )$ if the alternating least squares (ALS) algorithm with line search is used \cite{WenKulWitWym19}. As indicated in \cite{Haardt2018}, the tensor-based approaches have the complexity in the same order of the matrix based approaches, but require much more computations, which will be demonstrated in Section \ref{SubSecVIE} by evaluating the program running time. 

%------------------------------------------------------------------------------------------------------------------------
%------------------------------------------------------------------------------------------------------------------------
%------------------------------------------------------------------------------------------------------------------------
\section{Perturbation Analysis}
In this section, we provide the closed-form analytical performance of the channel parameter and location estimation. This can be done by performing the first-order perturbation analysis. It is worth pointing out that the proposed method does not require the statistical knowledge of the noise component since it is directly applied to the beamspace observations. As a result, the Cram\'{e}r–Rao bound (CRB) may not be available if noise statistics are unknown. In addition, as the noise component is not considered, the performance gap between the ESPRIT approaches and the CRB has been clearly identified in existing literature \cite{SahUseCom17, Haardt2018, LiuLiu06, LiuLiuMa07}. On the other hand, the first-order perturbation analysis, as we will show in the next section, matches well with the simulations, which is more meaningful to demonstrate the estimation performance of the proposed method.
\subsection{First-order Perturbation of Channel Parameter Estimation}
With $\boldsymbol{h}^{\left( \mathrm{b} \right) }$, we construct the noiseless matrix $\boldsymbol{H}$ from \eqref{eq19}. The SVD of $\boldsymbol{H}$ is given by
\begin{align*}
\boldsymbol{H} = \boldsymbol{U}_{\mathrm{s}} \boldsymbol{\Sigma}_{\mathrm{s}} \boldsymbol{V}_{\mathrm{s}}^{\mathrm{H}} + \boldsymbol{U}_{\mathrm{n}} \boldsymbol{\Sigma}_{\mathrm{n}} \boldsymbol{V}_{\mathrm{n}}^{\mathrm{H}}, \label{eq47} \numberthis
\end{align*} 
and $\boldsymbol{\Sigma}_{\mathrm{n}} = \boldsymbol{0}$. The estimates, $\tilde{\boldsymbol{H}}$ is expressed as
\begin{align*}
\tilde{\boldsymbol{H}} = \boldsymbol{H} + \Delta \boldsymbol{H}, \label{eq48} \numberthis
\end{align*} 
where $\Delta \boldsymbol{H} = \mathcal{S} (\Delta \boldsymbol{h})$, and $\Delta \boldsymbol{h} \in \mathbb{C}^{N_1N_2N_3N_4M_5}$ is the beamspace channel estimation error.

With the noisy estimates, the first-order perturbation $\Delta \Phi_{l, n} = \tilde{\Phi}_{l, n} - \Phi_{l, n}$ from the proposed algorithm is given as follows.
\begin{lemma}\label{lem1}
$\Delta \Phi_{l, n}$ is given by
\begin{align*}
\Delta \Phi_{l, n} =\frac{1}{\gamma_l} \boldsymbol{b}_l^{\mathrm{T}} \left( \breve{\boldsymbol{J}}_{n, 1} \boldsymbol{P} \right)^{\dagger} \left( \breve{\boldsymbol{J}}_{n, 2} - \Phi_{l, n} \breve{\boldsymbol{J}}_{n, 1} \right) \Delta \boldsymbol{H} \left( \boldsymbol{G}^{\mathrm{T}} \right)^{\dagger} \boldsymbol{b}_l, \label{eq49} \numberthis
\end{align*}
where $\boldsymbol{b}_l$ is the $l$-th column of $\boldsymbol{I}_L$. The matrix $\boldsymbol{P}$ and $\boldsymbol{G}$ are given by \eqref{eq21} and \eqref{eq22}, respectively.
\end{lemma}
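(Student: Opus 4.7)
The plan is to follow the classical ESPRIT perturbation pipeline, but adapted to the beamspace matrix $\boldsymbol{H}$ constructed via spatial smoothing. I would split the argument into four pieces: a first-order SVD perturbation, a pseudoinverse perturbation on $\tilde{\boldsymbol{\Gamma}}_n$, an eigenvalue perturbation, and a final algebraic rewrite that converts the result from an expression in the singular subspace $\boldsymbol{V}_s \boldsymbol{\Sigma}_s^{-1}$ into the clean form involving $(\boldsymbol{G}^{\mathrm{T}})^{\dagger}$ and $\gamma_l$.

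First, I would differentiate the SVD \eqref{eq47} subject to $\tilde{\boldsymbol{H}} = \boldsymbol{H} + \Delta \boldsymbol{H}$. The standard result gives $\Delta \boldsymbol{U}_{\mathrm{s}} = (\boldsymbol{I} - \boldsymbol{U}_{\mathrm{s}} \boldsymbol{U}_{\mathrm{s}}^{\mathrm{H}}) \Delta \boldsymbol{H}\, \boldsymbol{V}_{\mathrm{s}} \boldsymbol{\Sigma}_{\mathrm{s}}^{-1}$ up to a tangential component $\boldsymbol{U}_{\mathrm{s}}\boldsymbol{C}$; I would observe that such a tangential component induces only a similarity transformation on $\tilde{\boldsymbol{\Gamma}}_n = (\breve{\boldsymbol{J}}_{n,1}\tilde{\boldsymbol{U}}_{\mathrm{s}})^{\dagger}\breve{\boldsymbol{J}}_{n,2}\tilde{\boldsymbol{U}}_{\mathrm{s}}$, so it leaves the eigenvalues $\tilde{\Phi}_{l,n}$ untouched and can be dropped. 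Next, I would expand $\tilde{\boldsymbol{\Gamma}}_n$ to first order. Using Proposition \ref{propA1} in the noiseless regime, $\breve{\boldsymbol{J}}_{n,2}\boldsymbol{U}_{\mathrm{s}} = \breve{\boldsymbol{J}}_{n,1}\boldsymbol{U}_{\mathrm{s}}\boldsymbol{\Gamma}_n$ with $\boldsymbol{\Gamma}_n = \boldsymbol{E}\boldsymbol{\Phi}_n\boldsymbol{E}^{-1}$, and the projector $\boldsymbol{I} - (\breve{\boldsymbol{J}}_{n,1}\boldsymbol{U}_{\mathrm{s}})(\breve{\boldsymbol{J}}_{n,1}\boldsymbol{U}_{\mathrm{s}})^{\dagger}$ annihilates $\breve{\boldsymbol{J}}_{n,2}\boldsymbol{U}_{\mathrm{s}}$, so the term coming from $\Delta[(\cdot)^{\dagger}]$ collapses to $\Delta\boldsymbol{\Gamma}_n = (\breve{\boldsymbol{J}}_{n,1}\boldsymbol{U}_{\mathrm{s}})^{\dagger}(\breve{\boldsymbol{J}}_{n,2} - \breve{\boldsymbol{J}}_{n,1}\boldsymbol{\Gamma}_n)\Delta\boldsymbol{U}_{\mathrm{s}}$.

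Then I would apply first-order eigenvalue perturbation of $\boldsymbol{\Gamma}_n$. Because $\boldsymbol{\Gamma}_n = \boldsymbol{E}\boldsymbol{\Phi}_n\boldsymbol{E}^{-1}$, the right eigenvector for $\Phi_{l,n}$ is $\boldsymbol{E}\boldsymbol{b}_l$ and the left eigenvector is $\boldsymbol{b}_l^{\mathrm{T}}\boldsymbol{E}^{-1}$ with unit biorthogonality, giving $\Delta\Phi_{l,n} = \boldsymbol{b}_l^{\mathrm{T}}\boldsymbol{E}^{-1}\Delta\boldsymbol{\Gamma}_n\boldsymbol{E}\boldsymbol{b}_l$. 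Substituting $\boldsymbol{\Gamma}_n\boldsymbol{E}\boldsymbol{b}_l = \Phi_{l,n}\boldsymbol{E}\boldsymbol{b}_l$ turns the middle factor into $(\breve{\boldsymbol{J}}_{n,2} - \Phi_{l,n}\breve{\boldsymbol{J}}_{n,1})$, and the identity $(\breve{\boldsymbol{J}}_{n,1}\boldsymbol{P})^{\dagger} = \boldsymbol{E}^{-1}(\breve{\boldsymbol{J}}_{n,1}\boldsymbol{U}_{\mathrm{s}})^{\dagger}$ (valid because $\boldsymbol{P} = \boldsymbol{U}_{\mathrm{s}}\boldsymbol{E}$ and $\boldsymbol{E}$ is nonsingular) absorbs the outer $\boldsymbol{E}^{-1}$. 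After this, substituting $\Delta\boldsymbol{U}_{\mathrm{s}}$, the piece proportional to $\boldsymbol{U}_{\mathrm{s}}\boldsymbol{U}_{\mathrm{s}}^{\mathrm{H}}$ can be discarded: using $\boldsymbol{U}_{\mathrm{s}} = \boldsymbol{P}\boldsymbol{E}^{-1}$ together with Proposition \ref{propA1}, one finds $\boldsymbol{b}_l^{\mathrm{T}}(\breve{\boldsymbol{J}}_{n,1}\boldsymbol{P})^{\dagger}(\breve{\boldsymbol{J}}_{n,2} - \Phi_{l,n}\breve{\boldsymbol{J}}_{n,1})\boldsymbol{U}_{\mathrm{s}} = \boldsymbol{b}_l^{\mathrm{T}}(\boldsymbol{\Phi}_n - \Phi_{l,n}\boldsymbol{I})\boldsymbol{E}^{-1}$, whose $l$-th row is zero.

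The final and least routine step is converting $\boldsymbol{V}_{\mathrm{s}}\boldsymbol{\Sigma}_{\mathrm{s}}^{-1}\boldsymbol{E}\boldsymbol{b}_l$ into $\gamma_l^{-1}(\boldsymbol{G}^{\mathrm{T}})^{\dagger}\boldsymbol{b}_l$. The plan is to start from \eqref{eq20} and the SVD \eqref{eq23}, multiply on the right by $(\boldsymbol{G}^{\mathrm{T}})^{\dagger}$, and exploit $\boldsymbol{G}^{\mathrm{T}}(\boldsymbol{G}^{\mathrm{T}})^{\dagger} = \boldsymbol{I}_L$ (since $\boldsymbol{G} = \boldsymbol{A}_5^{(L_5)}$ is a Vandermonde matrix with $L_5 \geq L$ and distinct modes, hence full column rank). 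Then multiplication by $\boldsymbol{U}_{\mathrm{s}}^{\mathrm{H}}$ on the left yields $\boldsymbol{V}_{\mathrm{s}}^{\mathrm{H}}(\boldsymbol{G}^{\mathrm{T}})^{\dagger}\boldsymbol{b}_l = \gamma_l\boldsymbol{\Sigma}_{\mathrm{s}}^{-1}\boldsymbol{E}\boldsymbol{b}_l$; finally, since the columns of $(\boldsymbol{G}^{\mathrm{T}})^{\dagger}$ lie in $\mathrm{range}(\boldsymbol{G}^{*}) = \mathrm{range}(\boldsymbol{V}_{\mathrm{s}})$, left-multiplying by $\boldsymbol{V}_{\mathrm{s}}$ gives $\boldsymbol{V}_{\mathrm{s}}\boldsymbol{V}_{\mathrm{s}}^{\mathrm{H}}(\boldsymbol{G}^{\mathrm{T}})^{\dagger} = (\boldsymbol{G}^{\mathrm{T}})^{\dagger}$, and the claim follows. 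The main obstacle I expect is bookkeeping: carefully justifying the dropping of the tangential SVD perturbation and the projection step that removes the $\boldsymbol{U}_{\mathrm{s}}\boldsymbol{U}_{\mathrm{s}}^{\mathrm{H}}$ contribution, since a sloppy treatment of those two simplifications is precisely what typically obscures the clean final form \eqref{eq49}.
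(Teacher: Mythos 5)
Your proposal is correct and follows essentially the same route as the paper's Appendix B proof: the anti-Hermitian/tangential SVD perturbation term and the $\boldsymbol{U}_{\mathrm{s}}\boldsymbol{U}_{\mathrm{s}}^{\mathrm{H}}$ contribution are dropped by the same eigenvector-sandwich cancellation (the paper writes this as $\boldsymbol{\epsilon}_l^{\mathrm{T}}(\boldsymbol{\Gamma}_n\boldsymbol{R}-\boldsymbol{R}\boldsymbol{\Gamma}_n)\boldsymbol{e}_l=0$, you phrase it as a similarity transformation, which is equivalent), and the final rewrite uses the same factorizations $\boldsymbol{U}_{\mathrm{s}}=\boldsymbol{P}\boldsymbol{E}^{-1}$ and $\boldsymbol{\Sigma}_{\mathrm{s}}\boldsymbol{V}_{\mathrm{s}}^{\mathrm{H}}=\boldsymbol{E}\,\mathrm{Diag}(\boldsymbol{\gamma})\boldsymbol{G}^{\mathrm{T}}$. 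The only blemish is the dimensionally loose intermediate expression $(\breve{\boldsymbol{J}}_{n,2}-\breve{\boldsymbol{J}}_{n,1}\boldsymbol{\Gamma}_n)\Delta\boldsymbol{U}_{\mathrm{s}}$ (the $\boldsymbol{\Gamma}_n$ must act on the right of $\Delta\boldsymbol{U}_{\mathrm{s}}$), but your subsequent step applies the eigenvector relation correctly, so this is merely notational.
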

\begin{proof}
The proof can be found in Appendix \ref{ProofLem1}.
\end{proof}
Denote $\boldsymbol{\chi}_{l} \in \mathbb{C}^{L_5 \times 1}$ and $\boldsymbol{\lambda}_{l, n} \in \mathbb{C}^{N_1 N_2 N_3 N_4 K_5 \times 1} $ as
\begin{align*}
\boldsymbol{\lambda}_{l, n }^{\mathrm{H}} = & \boldsymbol{b}_l^{\mathrm{T}} \left( \breve{\boldsymbol{J}}_{n, 1} \boldsymbol{P} \right)^{\dagger} \left( \breve{\boldsymbol{J}}_{n, 2} - \Phi_{l, n} \breve{\boldsymbol{J}}_{n, 1} \right), \numberthis \\
\boldsymbol{\chi}_{l}^* = & \left( \boldsymbol{G}^{\mathrm{T}} \right)^{\dagger} \boldsymbol{b}_l.
\numberthis
\end{align*}
The vector $\boldsymbol{\lambda}_{l, n } $ can also be represented by
\begin{align*}
\boldsymbol{\lambda}_{l, n } = & \Big[ \left( \boldsymbol{\lambda}_{l, n}^{\left(1, 1, 1, 1 \right)} \right)^{\mathrm{T}}, \left( \boldsymbol{\lambda}_{l, n}^{\left( 1, 1, 1, 2\right)} \right)^{\mathrm{T}}, \cdots, \left( \boldsymbol{\lambda}_{l, n}^{\left(1, 1, 1, N_1 \right)} \right)^{\mathrm{T}}, \\
& \hspace{10mm} \left( \boldsymbol{\lambda}_{l, n}^{\left(1, 1, 2, 1 \right)} \right)^{\mathrm{T}}, \cdots, \left( \boldsymbol{\lambda}_{l, n}^{\left(N_1, N_2, N_3, N_4 \right)} \right)^{\mathrm{T}} \Big]^{\mathrm{T}}, \numberthis
\end{align*}
where $\boldsymbol{\lambda}_{l, n}^{\left(n_1, n_2, n_3, n_4 \right)} \in \mathbb{C}^{K_5 \times 1}$. We further define $\boldsymbol{\xi}_{l, n } $ as
\begin{align*}
\boldsymbol{\xi}_{l, n } = & \Big[ \left( \boldsymbol{\xi}_{l, n}^{\left(1, 1, 1, 1 \right)} \right)^{\mathrm{T}}, \left( \boldsymbol{\xi}_{l, n}^{\left(1, 1, 1, 2 \right)} \right)^{\mathrm{T}}, \cdots, \left( \boldsymbol{\xi}_{l, n}^{\left(1, 1, 1, N_1 \right)} \right)^{\mathrm{T}}, \\
& \hspace{10mm} \left( \boldsymbol{\xi}_{l, n}^{\left(1, 1, 2, 1 \right)} \right)^{\mathrm{T}}, \cdots, \left( \boldsymbol{\xi}_{l, n}^{\left(N_1, N_2, N_3, N_4 \right)} \right)^{\mathrm{T}} \Big]^{\mathrm{T}}, \label{eq53} \numberthis
\end{align*}
where $\boldsymbol{\xi}_{l, n}^{\left(n_1, n_2, n_3, n_4 \right)} \in \mathbb{C}^{M_5 \times 1}$ is the convolution of $\boldsymbol{\lambda}_{l, n}^{\left(n_1, n_2, n_3, n_4 \right)}$ and $\boldsymbol{\chi}_{l}$. We then have the following proposition.
\begin{proposition} \label{prop1}
With the beamspace channel estimation error $\Delta \boldsymbol{h}$, the first-order perturbation $\Delta \Phi_{l, n}$ in \eqref{eq49} is equivalent to
\begin{align*}
\Delta \Phi_{l, n} = \frac{\boldsymbol{\xi}_{l, n }^{\mathrm{H}} \Delta \boldsymbol{h}}{\gamma_l}, \label{eq54} \numberthis
\end{align*}
where $\boldsymbol{\xi}_{l, n }$ is given by \eqref{eq53}.
\end{proposition}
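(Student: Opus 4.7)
The plan is to rewrite the expression in Lemma \ref{lem1} as an inner product in $\Delta \boldsymbol{h}$ by exploiting the sliding-window (Hankel-like) block structure that the smoothing operator $\mathcal{S}$ imposes on $\Delta \boldsymbol{H}$. Using the definitions of $\boldsymbol{\lambda}_{l,n}$ and $\boldsymbol{\chi}_l$ given immediately before the proposition, the result of Lemma \ref{lem1} compactly reads $\Delta \Phi_{l,n} = \gamma_l^{-1}\, \boldsymbol{\lambda}_{l,n}^{\mathrm{H}} \Delta \boldsymbol{H}\, \boldsymbol{\chi}_l^{*}$. The task is therefore purely one of algebraic rearrangement: absorb the column-wise action of $\boldsymbol{\chi}_l^{*}$ and the block-Kronecker selection inside $\Delta \boldsymbol{H}$ into a single vector $\boldsymbol{\xi}_{l,n}$ tested against $\Delta \boldsymbol{h}$.

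First, I would expand $\Delta \boldsymbol{H}$ column by column via \eqref{eq19}, giving $\boldsymbol{\lambda}_{l,n}^{\mathrm{H}} \Delta \boldsymbol{H}\, \boldsymbol{\chi}_l^{*} = \sum_{\ell_5=1}^{L_5} \chi_{l,\ell_5}^{*}\, \boldsymbol{\lambda}_{l,n}^{\mathrm{H}}\, \boldsymbol{J}_{N_1,N_2,N_3,N_4,\ell_5}\, \Delta \boldsymbol{h}$. Next, since $\boldsymbol{J}_{N_1,N_2,N_3,N_4,\ell_5} = \boldsymbol{I}_{N_1}\otimes\boldsymbol{I}_{N_2}\otimes\boldsymbol{I}_{N_3}\otimes\boldsymbol{I}_{N_4}\otimes\boldsymbol{J}_{\ell_5}$, I would partition $\boldsymbol{\lambda}_{l,n}$ into the length-$K_5$ sub-blocks $\boldsymbol{\lambda}_{l,n}^{(n_1,n_2,n_3,n_4)}$ already introduced in the proposition's preamble, and correspondingly partition $\Delta \boldsymbol{h}$ into length-$M_5$ sub-blocks $(\Delta \boldsymbol{h})^{(n_1,n_2,n_3,n_4)}$ indexed by the same tuples. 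The Kronecker identity decouples the indices, so the expression reduces to $\sum_{(n_1,\ldots,n_4)} \sum_{\ell_5,k_5} \chi_{l,\ell_5}^{*}\,(\lambda_{l,n,k_5}^{(n_1,\ldots,n_4)})^{*}\, (\Delta h)_{k_5+\ell_5-1}^{(n_1,\ldots,n_4)}$.

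Finally, I would perform the substitution $m_5 = k_5 + \ell_5 - 1$, which collects the coefficient multiplying each entry $(\Delta h)_{m_5}^{(n_1,\ldots,n_4)}$ into a single sum of the form $\sum_{k_5} (\lambda_{l,n,k_5}^{(n_1,\ldots,n_4)})^{*} \chi_{l,m_5-k_5+1}^{*}$; this is precisely the $m_5$-th entry of the conjugate of the convolution of $\boldsymbol{\lambda}_{l,n}^{(n_1,\ldots,n_4)}$ with $\boldsymbol{\chi}_l$, i.e., $(\xi_{l,n,m_5}^{(n_1,\ldots,n_4)})^{*}$ by the definition \eqref{eq53}. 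Stacking the blocks in the same order as in \eqref{eq53} yields $\boldsymbol{\xi}_{l,n}^{\mathrm{H}} \Delta \boldsymbol{h}$, which gives \eqref{eq54}. The main obstacle is bookkeeping rather than any deep step: one must keep the Kronecker partitioning of $\boldsymbol{\lambda}_{l,n}$ and $\Delta \boldsymbol{h}$ aligned, verify that the sliding window range $1 \le k_5 \le K_5$ and $1 \le \ell_5 \le L_5$ covers exactly $1 \le m_5 \le M_5 = K_5 + L_5 - 1$, and track the Hermitian conjugations so that a convolution (rather than a correlation) appears, matching the convention used in the definition of $\boldsymbol{\xi}_{l,n}$.
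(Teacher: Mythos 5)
Your proposal is correct and follows essentially the same route as the paper: the paper isolates the identity $\boldsymbol{a}^{\mathrm{H}} \Delta \boldsymbol{H}\, \boldsymbol{b}^{*} = \boldsymbol{c}^{\mathrm{H}} \Delta \boldsymbol{h}$ (with $\boldsymbol{c}$ built from blockwise convolutions) as Lemma \ref{MlemB4} and proves it by exactly the index change $m_5 = k_5 + \ell_5 - 1$ that you carry out inline. Your bookkeeping of the Kronecker partitioning, the index range $M_5 = K_5 + L_5 - 1$, and the conjugations matches the paper's derivation.
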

\begin{proof}
The proof can be found in Appendix \ref{ProofProp1}.
\end{proof}
\emph{Remarks}: Compared to the first-order perturbation expression in Lemma \ref{lem1}, the results in Proposition \ref{prop1} has a two-fold merit. First of all, the expression in \eqref{eq54} simplifies the first-order perturbation $\Delta \Phi_{l, n} $ in \eqref{eq49}, enabling the perturbation analysis without knowing $\Delta \boldsymbol{H}$. Secondly, equation \eqref{eq54} also simplifies the analysis of the statistic properties (including the mean and variance) of the estimation errors. The latter one is more important in the perturbation analysis of the complex gain and position estimation.

From \eqref{eq33}, the perturbation of $\omega_{l, n}$ is readily given by
\begin{align*}
\Delta \omega_{l, n} = \Im \left\{ \boldsymbol{\upsilon}_{l, n}^{\mathrm{H}} \Delta \boldsymbol{h} \right\}, \label{eq55} \numberthis
\end{align*}
where $\boldsymbol{\upsilon}_{l, n} = \Phi_{l, n} \boldsymbol{\xi}_{l, n }/{\gamma_l^*}$.

We are now able to give the first-order perturbation of the channel parameters for each path.
\begin{lemma} \label{lem2}
The first-order perturbations of the channel parameters can be given as
\begin{align*}
\Delta \phi_{\mathrm{az}}^l = & \Im \left\{ \boldsymbol{\kappa}_{l, 1}^{\mathrm{H}} \Delta \boldsymbol{h} \right\}, \label{eq56} % \numberthis \\
\Delta \phi_{\mathrm{el}}^l =  \Im \left\{ \boldsymbol{\kappa}_{l, 2}^{\mathrm{H}} \Delta \boldsymbol{h} \right\},   \numberthis \\
\Delta \theta_{\mathrm{az}}^l = & \Im \left\{ \boldsymbol{\kappa}_{l, 3}^{\mathrm{H}} \Delta \boldsymbol{h} \right\}, \label{eq58}  %\numberthis \\
\Delta \theta_{\mathrm{el}}^l =  \Im \left\{ \boldsymbol{\kappa}_{l, 4}^{\mathrm{H}} \Delta \boldsymbol{h} \right\},   \numberthis \\
\Delta \tau_l = & \Im \left\{ \boldsymbol{\kappa}_{l, 5}^{\mathrm{H}} \Delta \boldsymbol{h} \right\}, \label{eq60} \numberthis \\
\Delta \gamma_l = & \boldsymbol{b}_l^{\mathrm{T}} \boldsymbol{B}^{\dagger} \Delta \boldsymbol{h} - \sum_{n=1}^5 \boldsymbol{b}_l^{\mathrm{T}} \boldsymbol{\Upsilon}_n \Im \left\{ \boldsymbol{V}_n^{\mathrm{H}} \Delta \boldsymbol{h} \right\}, \label{eq61} \numberthis
\end{align*}
where $\boldsymbol{V}_n = [\boldsymbol{\upsilon}_{1, n}, \boldsymbol{\upsilon}_{2, n}, \cdots, \boldsymbol{\upsilon}_{L, n} ]$. The matrix $\boldsymbol{\kappa}_{l, i} \in \mathbb{C}^{N_1N_2N_3N_4M_5 \times 1}$, and $\boldsymbol{\Upsilon}_n \in \mathbb{C}^{N_1N_2N_3N_4M_5 \times L}$ are given as
\begin{align*}
\boldsymbol{\kappa}_{l, 1} = & \frac{\boldsymbol{\upsilon}_{l,1} }{\pi \cos \left( \phi_{\mathrm{az}}^l \right) \sin \left( \phi_{\mathrm{el}}^l \right)} + \frac{\sin \left( \phi_{\mathrm{az}}^l \right) \cos \left( \phi_{\mathrm{el}}^l \right) \boldsymbol{\upsilon}_{l,2} }{\pi \cos \left( \phi_{\mathrm{az}}^l \right) \sin^2 \left( \phi_{\mathrm{el}}^l \right)}, \label{eq62}  \numberthis \\
\boldsymbol{\kappa}_{l, 2} = & \frac{\boldsymbol{\upsilon}_{l,2}}{\pi \sin \left( \phi_{\mathrm{el}}^l \right)}, \label{eq63}  \numberthis \\
\boldsymbol{\kappa}_{l, 3} = & \frac{\boldsymbol{\upsilon}_{l, 3}}{\pi \cos \left( \theta_{\mathrm{az}}^l \right) \sin \left( \theta_{\mathrm{el}}^l \right)} + \frac{\sin \left( \theta_{\mathrm{az}}^l \right) \cos \left( \theta_{\mathrm{el}}^l \right) \boldsymbol{\upsilon}_{l, 4} }{\pi \cos \left( \theta_{\mathrm{az}}^l \right) \sin^2 \left( \theta_{\mathrm{el}}^l \right)}, \label{eq64}  \numberthis \\
\boldsymbol{\kappa}_{l, 4} = & \frac{\boldsymbol{\upsilon}_{l, 4} }{\pi \sin \left( \theta_{\mathrm{el}}^l \right)}, %\label{eq65}  \numberthis \\
\boldsymbol{\kappa}_{l, 5} =  \frac{\boldsymbol{\upsilon}_{l, 5}}{2\pi \Delta f}, \label{eq66} \numberthis \\
\boldsymbol{\Upsilon}_n = & \boldsymbol{B}^{\dagger} \breve{\boldsymbol{B}}_n \odot \boldsymbol{\gamma}^{\mathrm{T}}, \label{eq67} \numberthis
\end{align*}
where $\boldsymbol{B}$ is given by \eqref{eq34} and $\breve{\boldsymbol{B}}_n$ is given by \eqref{MeqBC1}.
\end{lemma}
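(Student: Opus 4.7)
The plan is to build Lemma \ref{lem2} on top of Proposition \ref{prop1}, which already gives the first-order expression $\Delta \omega_{l,n} = \Im\{\boldsymbol{\upsilon}_{l,n}^{\mathrm{H}} \Delta \boldsymbol{h}\}$ for each of the five angular frequencies. For the geometric parameters $\phi_{\mathrm{az}}^l, \phi_{\mathrm{el}}^l, \theta_{\mathrm{az}}^l, \theta_{\mathrm{el}}^l, \tau_l$, the strategy is a pure chain-rule propagation through the explicit bijections $\omega_{l,1}=\pi\sin(\phi_{\mathrm{az}}^l)\sin(\phi_{\mathrm{el}}^l)$, $\omega_{l,2}=\pi\cos(\phi_{\mathrm{el}}^l)$, $\omega_{l,3}=\pi\sin(\theta_{\mathrm{az}}^l)\sin(\theta_{\mathrm{el}}^l)$, $\omega_{l,4}=\pi\cos(\theta_{\mathrm{el}}^l)$, and $\omega_{l,5}\propto \Delta f\,\tau_l$ given at the beginning of Section II. For the complex gain $\gamma_l$ the strategy is different: I would start from the closed-form estimator $\hat{\boldsymbol{\gamma}}=\hat{\boldsymbol{B}}^{\dagger}\tilde{\boldsymbol{h}}^{(\mathrm{b})}$ that appears just before the lemma and linearise it around the noise-free point.

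For the angles and delay, I would first differentiate $\omega_{l,2}$ to isolate $\Delta\phi_{\mathrm{el}}^l$ in terms of $\Delta\omega_{l,2}$ alone, then differentiate $\omega_{l,1}$ and substitute to obtain $\Delta\phi_{\mathrm{az}}^l$ as a linear combination of $\Delta\omega_{l,1}$ and $\Delta\omega_{l,2}$; this reproduces the two-term structure of \eqref{eq62}. Repeating the same two-step elimination on $(\omega_{l,3},\omega_{l,4})$ gives \eqref{eq64} and the one-line expressions for $\Delta\theta_{\mathrm{el}}^l$ and $\Delta\tau_l$. Inserting $\Delta\omega_{l,n}=\Im\{\boldsymbol{\upsilon}_{l,n}^{\mathrm{H}}\Delta\boldsymbol{h}\}$ and collecting the scalar coefficients into the vectors $\boldsymbol{\kappa}_{l,i}$ delivers \eqref{eq56}--\eqref{eq66}; this is where the definitions $\boldsymbol{\kappa}_{l,1}$ and $\boldsymbol{\kappa}_{l,3}$ receive their two summands.

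For $\Delta\gamma_l$, the plan is to write $\hat{\boldsymbol{B}}=\boldsymbol{B}+\Delta\boldsymbol{B}$ and $\tilde{\boldsymbol{h}}^{(\mathrm{b})}=\boldsymbol{h}^{(\mathrm{b})}+\Delta\boldsymbol{h}=\boldsymbol{B}\boldsymbol{\gamma}+\Delta\boldsymbol{h}$, and expand to first order, giving $\Delta\boldsymbol{\gamma}\approx \boldsymbol{B}^{\dagger}\Delta\boldsymbol{h}+\Delta(\boldsymbol{B}^{\dagger})\boldsymbol{B}\boldsymbol{\gamma}$. The standard pseudoinverse perturbation identity for a full-column-rank $\boldsymbol{B}$ simplifies to $\Delta(\boldsymbol{B}^{\dagger})\boldsymbol{B}=-\boldsymbol{B}^{\dagger}\Delta\boldsymbol{B}$ because the orthogonal-complement term $(\boldsymbol{I}-\boldsymbol{B}\boldsymbol{B}^{\dagger})\boldsymbol{B}$ vanishes. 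Next, $\Delta\boldsymbol{B}\boldsymbol{\gamma}$ is expanded using the product rule on the Khatri-Rao factorisation \eqref{eq34}: each angular frequency $\omega_{l,n}$ perturbs only the $l$-th column of $\boldsymbol{B}$ and only the $n$-th Khatri-Rao factor, so the derivative columns assemble into the matrix $\breve{\boldsymbol{B}}_n$ defined in \eqref{MeqBC1} and the sum becomes $\sum_n(\breve{\boldsymbol{B}}_n\odot \boldsymbol{\gamma}^{\mathrm{T}})\Delta\boldsymbol{\omega}_n$, with $\Delta\boldsymbol{\omega}_n=\Im\{\boldsymbol{V}_n^{\mathrm{H}}\Delta\boldsymbol{h}\}$ by Proposition \ref{prop1}. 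Left-multiplying by $\boldsymbol{B}^{\dagger}$ produces $\boldsymbol{\Upsilon}_n$ of \eqref{eq67} and, after picking out the $l$-th row with $\boldsymbol{b}_l^{\mathrm{T}}$, yields \eqref{eq61}.

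The main obstacle I anticipate is the $\Delta\gamma_l$ step: one must justify that the first-order pseudoinverse identity collapses to $-\boldsymbol{B}^{\dagger}\Delta\boldsymbol{B}\boldsymbol{\gamma}$ (which requires $\boldsymbol{h}^{(\mathrm{b})}$ lying in the column space of $\boldsymbol{B}$ and $\boldsymbol{B}$ being full column rank, both guaranteed here), and one must carefully differentiate the Khatri-Rao product across five modes to obtain the $\breve{\boldsymbol{B}}_n\odot\boldsymbol{\gamma}^{\mathrm{T}}$ contraction. The angle/delay parts are largely bookkeeping once the chain rule is executed in the right order, although the trigonometric coefficient in $\boldsymbol{\kappa}_{l,1}$ and $\boldsymbol{\kappa}_{l,3}$ must be verified with some care because both $\omega_{l,1}$ and $\omega_{l,2}$ (respectively $\omega_{l,3}$ and $\omega_{l,4}$) depend on the same elevation angle.
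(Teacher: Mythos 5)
Your proposal is correct and follows essentially the same route as the paper: chain-rule inversion of the angular-frequency-to-parameter maps combined with $\Delta\omega_{l,n}=\Im\{\boldsymbol{\upsilon}_{l,n}^{\mathrm{H}}\Delta\boldsymbol{h}\}$ for the angles and delay, and linearisation of $\hat{\boldsymbol{\gamma}}=\hat{\boldsymbol{B}}^{\dagger}\tilde{\boldsymbol{h}}^{(\mathrm{b})}$ via the pseudoinverse perturbation identity (with the $(\boldsymbol{I}-\boldsymbol{B}\boldsymbol{B}^{\dagger})\boldsymbol{B}=\boldsymbol{0}$ cancellation) and the Khatri--Rao product rule for the gain. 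The key steps you flag as delicate are exactly the ones the paper works through in Appendix~\ref{ProofLem2}.
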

\begin{proof}
The proof can be found in Appendix \ref{ProofLem2}.
\end{proof}

In addition, when the beamspace channel estimation error is complex circular Gaussian, we have the following proposition.
\begin{proposition} \label{Prop2}
With the proposed algorithm, the expectation and covariance of the first-order perturbation of the channel parameter estimation are given as
\begin{align*}
\mathrm{E} \left( \Delta \phi_{\mathrm{az}}^l \right) = &  0, \quad \mathrm{E} \left( \| \Delta \phi_{\mathrm{az}}^l \| ^2 \right) = \frac{N_0 \left\| \boldsymbol{\kappa}_{l, 1} \right\|^2 }{2N_{\mathrm{p}}E_{\mathrm{s}}} \label{eq68}  \numberthis \\
\mathrm{E} \left( \Delta \phi_{\mathrm{el}}^l \right) = &  0, \quad \mathrm{E} \left( \| \Delta \phi_{\mathrm{el}}^l \| ^2 \right) = \frac{N_0 \left\| \boldsymbol{\kappa}_{l, 2} \right\|^2 }{2N_{\mathrm{p}}E_{\mathrm{s}}}, \label{eq69}  \numberthis \\
\mathrm{E} \left( \Delta \theta_{\mathrm{az}}^l \right) = &  0, \quad \mathrm{E} \left( \| \Delta \theta_{\mathrm{az}}^l \| ^2 \right) = \frac{N_0 \left\| \boldsymbol{\kappa}_{l, 3} \right\|^2 }{2N_{\mathrm{p}}E_{\mathrm{s}}}, \label{eq70}  \numberthis \\
\mathrm{E} \left( \Delta \theta_{\mathrm{el}}^l \right) = &  0, \quad \mathrm{E} \left( \| \Delta \theta_{\mathrm{el}}^l \| ^2 \right) = \frac{N_0 \left\| \boldsymbol{\kappa}_{l, 4} \right\|^2 }{2N_{\mathrm{p}}E_{\mathrm{s}}}, \label{eq71}  \numberthis \\
\mathrm{E} \left( \Delta \tau_l \right) = &  0, \quad \mathrm{E} \left( \|\Delta \tau_l \| ^2 \right) = \frac{N_0 \left\| \boldsymbol{\kappa}_{l, 5} \right\|^2 }{2N_{\mathrm{p}}E_{\mathrm{s}}}, \label{eq72} \numberthis \\
\mathrm{E} \left( \Delta \gamma_l \right) = &  0, \quad \mathrm{E} \left( \| \Delta \gamma_l \| ^2 \right) = \frac{N_0 \left\| \boldsymbol{\Pi}_l \right\|_{\mathrm{F}}^2 }{2N_{\mathrm{p}}E_{\mathrm{s}}}, \label{eq73} \numberthis
\end{align*}
where $\boldsymbol{\kappa}_{l, i}$, $i=1, 2, 3, 4, 5$, is given in Lemma \ref{lem2}, and $\left\| \cdot \right\|_{\mathrm{F}}$ is the Frobenius norm. $\boldsymbol{\Pi}_l$ is given by
\begin{align*}
\boldsymbol{\Pi}_l = & \left[ 
\begin{array}{cc}
    \Re \left\{ \boldsymbol{b}_l^{\mathrm{T}} \boldsymbol{B}^{\dagger} \right\} & - \Im \left\{ \boldsymbol{b}_l^{\mathrm{T}} \boldsymbol{B}^{\dagger} \right\} \\
    \Im \left\{ \boldsymbol{b}_l^{\mathrm{T}} \boldsymbol{B}^{\dagger} \right\} & \Re \left\{ \boldsymbol{b}_l^{\mathrm{T}} \boldsymbol{B}^{\dagger} \right\}
\end{array}
\right] \\
& - \sum_{n=1}^5 \left[ \begin{array}{c}
    \Re \left\{ \boldsymbol{b}_l^{\mathrm{T}} \boldsymbol{\Upsilon}_n \right\} \\
    \Im \left\{ \boldsymbol{b}_l^{\mathrm{T}} \boldsymbol{\Upsilon}_n \right\}
\end{array}
\right] \left[ \begin{array}{cc}
    \Im \left\{ \boldsymbol{V}_n^{\mathrm{H}} \right\} &
    \Re \left\{ \boldsymbol{V}_n^{\mathrm{H}} \right\}
\end{array}
\right]. \label{eq74} \numberthis
\end{align*}
\end{proposition}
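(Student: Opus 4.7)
The plan is to exploit the fact that, by Lemma \ref{lem2}, every first-order perturbation is a (real-)linear functional of the vectorized beamspace noise $\Delta\boldsymbol{h}$, whose entries are, by the observation model around (\ref{eq14}), iid zero-mean circular complex Gaussian with per-entry variance $\sigma^2 = N_0/(N_{\mathrm{p}}E_{\mathrm{s}})$. The zero-mean claims on the left column of (\ref{eq68})--(\ref{eq73}) are then immediate: expectation commutes with $\Re\{\cdot\}$, $\Im\{\cdot\}$, and any deterministic linear map, and $\mathrm{E}\{\Delta\boldsymbol{h}\}=\boldsymbol{0}$. So the real work is entirely in the second moments.

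For the angles and delay, equations (\ref{eq68})--(\ref{eq72}), each perturbation has the common form $\Im\{\boldsymbol{\kappa}_{l,i}^{\mathrm{H}}\Delta\boldsymbol{h}\}$ with $\boldsymbol{\kappa}_{l,i}$ deterministic. I would invoke the textbook identity for circular complex Gaussian vectors: if $\boldsymbol{z}\sim\mathcal{CN}(\boldsymbol{0},\sigma^2\boldsymbol{I})$, then $\boldsymbol{\kappa}^{\mathrm{H}}\boldsymbol{z}$ is a scalar $\mathcal{CN}(0,\sigma^2\|\boldsymbol{\kappa}\|^2)$, and by circularity its real and imaginary parts are independent real Gaussians with variance $\sigma^2\|\boldsymbol{\kappa}\|^2/2$ each. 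Substituting $\sigma^2=N_0/(N_{\mathrm{p}}E_{\mathrm{s}})$ yields each of the five stated second moments on the nose; no further bookkeeping is needed.

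The only non-routine part, and the main obstacle, is (\ref{eq73}) for the complex gain, because (\ref{eq61}) mixes a truly complex linear functional $\boldsymbol{b}_l^{\mathrm{T}}\boldsymbol{B}^{\dagger}\Delta\boldsymbol{h}$ with the five real-valued functionals $\Im\{\boldsymbol{V}_n^{\mathrm{H}}\Delta\boldsymbol{h}\}$; since all six depend on the same $\Delta\boldsymbol{h}$, variances cannot simply be added, and one must track the joint covariance. My plan is to pass to a real-valued representation. Define $\boldsymbol{x}=[\Re\{\Delta\boldsymbol{h}\}^{\mathrm{T}},\Im\{\Delta\boldsymbol{h}\}^{\mathrm{T}}]^{\mathrm{T}}$, which is real Gaussian with covariance $(\sigma^2/2)\boldsymbol{I}$, and $\boldsymbol{y}=[\Re\{\Delta\gamma_l\},\Im\{\Delta\gamma_l\}]^{\mathrm{T}}$. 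Using the elementary block identities
\begin{align*}
\begin{bmatrix}\Re\{\boldsymbol{a}^{\mathrm{T}}\Delta\boldsymbol{h}\}\\ \Im\{\boldsymbol{a}^{\mathrm{T}}\Delta\boldsymbol{h}\}\end{bmatrix} &= \begin{bmatrix}\Re\{\boldsymbol{a}^{\mathrm{T}}\} & -\Im\{\boldsymbol{a}^{\mathrm{T}}\}\\ \Im\{\boldsymbol{a}^{\mathrm{T}}\} & \Re\{\boldsymbol{a}^{\mathrm{T}}\}\end{bmatrix}\boldsymbol{x}, \\
\Im\{\boldsymbol{V}_n^{\mathrm{H}}\Delta\boldsymbol{h}\} &= \begin{bmatrix}\Im\{\boldsymbol{V}_n^{\mathrm{H}}\} & \Re\{\boldsymbol{V}_n^{\mathrm{H}}\}\end{bmatrix}\boldsymbol{x},
\end{align*}
with $\boldsymbol{a}^{\mathrm{T}}=\boldsymbol{b}_l^{\mathrm{T}}\boldsymbol{B}^{\dagger}$ in the first and plugging the second into the sum in (\ref{eq61}), I would collect terms to recognize $\boldsymbol{y}=\boldsymbol{\Pi}_l\boldsymbol{x}$ with $\boldsymbol{\Pi}_l$ exactly as in (\ref{eq74}). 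The conclusion is then the standard real Gaussian identity $\mathrm{E}\{\|\boldsymbol{y}\|^2\}=\mathrm{tr}\bigl(\boldsymbol{\Pi}_l\mathrm{E}\{\boldsymbol{x}\boldsymbol{x}^{\mathrm{T}}\}\boldsymbol{\Pi}_l^{\mathrm{T}}\bigr)=(\sigma^2/2)\|\boldsymbol{\Pi}_l\|_{\mathrm{F}}^2$, which is (\ref{eq73}) after substituting $\sigma^2$. The only thing that actually requires care is verifying the sign pattern in $\boldsymbol{\Pi}_l$ — the off-diagonal $-\Im\{\cdot\}$ in the first block and the swap of $\Re$ and $\Im$ in the second block — which is exactly the content of (\ref{eq74}).
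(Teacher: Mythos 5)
Your proposal is correct and is exactly the computation the paper has in mind: its stated proof is only the one-line remark that the results follow from Lemma \ref{lem2}, and your derivation — zero mean by linearity, the circular-Gaussian identity $\mathrm{E}\{|\Im\{\boldsymbol{\kappa}^{\mathrm{H}}\Delta\boldsymbol{h}\}|^2\}=\sigma^2\|\boldsymbol{\kappa}\|^2/2$ for \eqref{eq68}--\eqref{eq72}, and the real-composite representation $\boldsymbol{y}=\boldsymbol{\Pi}_l\boldsymbol{x}$ with $\mathrm{E}\{\boldsymbol{x}\boldsymbol{x}^{\mathrm{T}}\}=(\sigma^2/2)\boldsymbol{I}$ for \eqref{eq73} — is the natural (and essentially unique) way to fill that in. The sign pattern you flag in $\boldsymbol{\Pi}_l$ checks out against \eqref{eq61} and \eqref{eq74}, so nothing is missing.
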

\begin{proof}
The results are straightforward based on Lemma \ref{lem2}.
\end{proof}
% With estimated channel parameters, we reconstruct the channel vector $\tilde{\boldsymbol{h}}$, and the first-order perturbation is given as follows.
%  \begin{lemma}\label{lem3}
% The first-order perturbation of the channel estimate, $\Delta \boldsymbol{h} = \tilde{\boldsymbol{h}} - \boldsymbol{h}$, is given by
% \end{lemma}
% \begin{proof}
% The proof can be found in Appendix \ref{ProofLem3}.
% \end{proof}

\subsection{First-order Perturbation of Position Estimation}
With the estimates of the channel parameters, we perform the positioning with the method in Section \ref{secIVD}. Given the first-order perturbation of the channel parameters, we obtain the first-order perturbation of the position estimation as follows.
\begin{lemma}\label{lem3}
The first-order perturbation of the position estimate, $\Delta \boldsymbol{p}_{\mathrm{R}} = \hat{\boldsymbol{p}}_{\mathrm{R}} - \boldsymbol{p}_{\mathrm{R}}$, is given by
\begin{align*}
\Delta \boldsymbol{p}_{\mathrm{R}} = & \Im \left\{ \boldsymbol{\Psi} \Delta \boldsymbol{h} \right\}, \label{eq75} \numberthis
\end{align*}
where $\boldsymbol{\Psi} \in \mathbb{C}^{3\times N_1N_2N_3N_4M_5}$ is given by
\begin{align*}
\boldsymbol{\Psi} = \sum_{l=1}^L & \breve{\boldsymbol{D}}_l \left[ \boldsymbol{\kappa}_{l, 3}, \boldsymbol{\kappa}_{l, 4}, \boldsymbol{\kappa}_{l, 5} \right]^{\mathrm{H}} \\
& + \breve{\boldsymbol{E}}_l \left[
\boldsymbol{\kappa}_{l, 1}, \boldsymbol{\kappa}_{l, 2}, \boldsymbol{\kappa}_{l, 3}, \boldsymbol{\kappa}_{l, 4}, \boldsymbol{\kappa}_{l, 5} \right]^{\mathrm{H}}, \label{eq76} \numberthis
\end{align*}
where $\boldsymbol{\kappa}_{l, i}$, $i=1, 2, 3, 4, 5$, are given in Lemma \ref{lem2}. $\breve{\boldsymbol{D}}_l \in \mathbb{R}^{3\times 3}$ and $\breve{\boldsymbol{E}}_l\in \mathbb{R}^{3\times 5}$ are given by \eqref{MeqC4} and \eqref{MeqC5}, respectively.
\end{lemma}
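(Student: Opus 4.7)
The plan is to linearize the closed-form estimator \eqref{eq:PosEst} around the true channel parameters and then substitute the first-order perturbations given by Lemma \ref{lem2}. Write $\boldsymbol{A} = \sum_{l=1}^L \boldsymbol{C}_l$ and $\boldsymbol{b} = \sum_{l=1}^L \boldsymbol{C}_l \boldsymbol{\delta}_l$ so that $\hat{\boldsymbol{p}}_{\mathrm{R}} = \boldsymbol{A}^{-1}\boldsymbol{b}$. Using $\Delta(\boldsymbol{A}^{-1}\boldsymbol{b}) = \boldsymbol{A}^{-1}(\Delta\boldsymbol{b} - \Delta\boldsymbol{A}\,\boldsymbol{p}_{\mathrm{R}})$ at the true value, and expanding $\boldsymbol{\delta}_l$ and $\boldsymbol{C}_l$ separately, I would arrive at
\begin{align*}
\Delta\boldsymbol{p}_{\mathrm{R}} = \boldsymbol{A}^{-1}\sum_{l=1}^L \bigl[ \boldsymbol{C}_l\,\Delta\boldsymbol{\delta}_l + \Delta\boldsymbol{C}_l\,(\boldsymbol{\delta}_l - \boldsymbol{p}_{\mathrm{R}})\bigr],
\end{align*}
which naturally splits the contribution into a \emph{delay/AOA piece} (through $\boldsymbol{\delta}_l = \boldsymbol{p}_{\mathrm{T}} - c\hat\tau_l\boldsymbol{f}_{\mathrm{R},l}$) and a \emph{projection piece} (through $\boldsymbol{\mu}_l = c\hat\tau_l(\boldsymbol{f}_{\mathrm{T},l}+\boldsymbol{f}_{\mathrm{R},l})$). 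The first sum will give rise to the $\breve{\boldsymbol{D}}_l[\boldsymbol{\kappa}_{l,3},\boldsymbol{\kappa}_{l,4},\boldsymbol{\kappa}_{l,5}]^{\mathrm{H}}$ block of \eqref{eq76}, the second to the $\breve{\boldsymbol{E}}_l[\boldsymbol{\kappa}_{l,1},\ldots,\boldsymbol{\kappa}_{l,5}]^{\mathrm{H}}$ block.

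Next I would compute the per-parameter Jacobians needed for the chain rule. For the first piece, $\Delta\boldsymbol{\delta}_l = -c\,\Delta\tau_l\,\boldsymbol{f}_{\mathrm{R},l} - c\tau_l\,\Delta\boldsymbol{f}_{\mathrm{R},l}$, where $\Delta\boldsymbol{f}_{\mathrm{R},l}$ is obtained by differentiating the unit vector with respect to $\theta_{\mathrm{az}}^l$ and $\theta_{\mathrm{el}}^l$; these three real derivatives collected into a $3\times 3$ matrix form $\breve{\boldsymbol{D}}_l$ (cf.\ \eqref{MeqC4}). For the second piece, $\Delta\boldsymbol{C}_l = -\iota_l\,\Delta\!\bigl(\boldsymbol{\mu}_l\boldsymbol{\mu}_l^{\mathrm{T}}/\|\boldsymbol{\mu}_l\|^2\bigr)$, where $\Delta\boldsymbol{\mu}_l$ depends linearly on all five channel parameters through $\boldsymbol{f}_{\mathrm{T},l}$, $\boldsymbol{f}_{\mathrm{R},l}$ and $\tau_l$; multiplying by the fixed geometric vector $(\boldsymbol{\delta}_l - \boldsymbol{p}_{\mathrm{R}})$ produces the $3\times 5$ Jacobian $\breve{\boldsymbol{E}}_l$ (cf.\ \eqref{MeqC5}). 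At this stage the result reads
\begin{align*}
\Delta\boldsymbol{p}_{\mathrm{R}} = \sum_{l=1}^L \breve{\boldsymbol{D}}_l \!\begin{bmatrix}\Delta\theta_{\mathrm{az}}^l\\ \Delta\theta_{\mathrm{el}}^l\\ \Delta\tau_l\end{bmatrix}
+ \breve{\boldsymbol{E}}_l \!\begin{bmatrix}\Delta\phi_{\mathrm{az}}^l\\ \Delta\phi_{\mathrm{el}}^l\\ \Delta\theta_{\mathrm{az}}^l\\ \Delta\theta_{\mathrm{el}}^l\\ \Delta\tau_l\end{bmatrix},
\end{align*}
where the $\boldsymbol{A}^{-1}$ factor has been absorbed into $\breve{\boldsymbol{D}}_l$ and $\breve{\boldsymbol{E}}_l$ by construction.

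The final step is to substitute the perturbation expressions from Lemma \ref{lem2}, each of the form $\Delta(\cdot) = \Im\{\boldsymbol{\kappa}_{l,i}^{\mathrm{H}}\Delta\boldsymbol{h}\}$. Because $\breve{\boldsymbol{D}}_l$ and $\breve{\boldsymbol{E}}_l$ are real-valued, the imaginary-part operator factors out, so the sum collapses into a single $\Im\{\boldsymbol{\Psi}\,\Delta\boldsymbol{h}\}$ with $\boldsymbol{\Psi}$ as in \eqref{eq76}.

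The routine part is the algebra; the main obstacles are two. First, carefully computing $\Delta(\boldsymbol{\mu}_l\boldsymbol{\mu}_l^{\mathrm{T}}/\|\boldsymbol{\mu}_l\|^2)$ and verifying that the resulting $3\times 5$ matrix matches the stated $\breve{\boldsymbol{E}}_l$: the projection matrix has a non-trivial derivative and one must use $\Delta\bigl(\boldsymbol{\mu}\boldsymbol{\mu}^{\mathrm{T}}/\|\boldsymbol{\mu}\|^2\bigr) = (\boldsymbol{I}-\boldsymbol{\mu}\boldsymbol{\mu}^{\mathrm{T}}/\|\boldsymbol{\mu}\|^2)\Delta\boldsymbol{\mu}\,\boldsymbol{\mu}^{\mathrm{T}}/\|\boldsymbol{\mu}\|^2 + \text{transpose}$, then right-multiply by $(\boldsymbol{\delta}_l - \boldsymbol{p}_{\mathrm{R}})$ and apply $\boldsymbol{A}^{-1}$. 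Second, justifying that the weights $\iota_l$ may be treated as constants to first order (they depend on SNR and hence do not contribute at this order), so that all randomness enters only through the channel parameters — this is what legitimizes the clean separation into $\breve{\boldsymbol{D}}_l$ and $\breve{\boldsymbol{E}}_l$ and allows the compact form \eqref{eq75}.
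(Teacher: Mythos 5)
Your proposal is correct and follows essentially the same route as the paper's proof: linearize the weighted least-squares estimator to get $\Delta\boldsymbol{p}_{\mathrm{R}} = \boldsymbol{C}^{-1}\sum_l\bigl[\boldsymbol{C}_l\Delta\boldsymbol{\delta}_l + \Delta\boldsymbol{C}_l(\boldsymbol{\delta}_l-\boldsymbol{p}_{\mathrm{R}})\bigr]$, compute the Jacobians of $\boldsymbol{\delta}_l$ and $\boldsymbol{\mu}_l$ (hence of the projection $\boldsymbol{C}_l$) with respect to the five channel parameters to identify $\breve{\boldsymbol{D}}_l$ and $\breve{\boldsymbol{E}}_l$, and then chain with Lemma \ref{lem2}. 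The only cosmetic difference is that you obtain the linearization via the differential identity $\Delta(\boldsymbol{A}^{-1}\boldsymbol{b})=\boldsymbol{A}^{-1}(\Delta\boldsymbol{b}-\Delta\boldsymbol{A}\,\boldsymbol{p}_{\mathrm{R}})$, whereas the paper expands the perturbed inverse as a Neumann series and discards higher-order terms — the resulting first-order expression is identical.
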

\begin{proof}
The proof can be found in Appendix \ref{ProofLem3}.
\end{proof}
When the beamspace channel estimation error is complex circular Gaussian, we have the following proposition.
\begin{proposition} \label{Prop3}
With the proposed positioning method, the expectation and covariance of the first-order perturbation of the position estimate are given by
\begin{align*}
\mathrm{E} \left( \Delta \boldsymbol{p}_{\mathrm{R}} \right) = & \boldsymbol{0}, \quad \mathrm{E} \left( \| \Delta \boldsymbol{p}_{\mathrm{R}} \| ^2 \right) = \frac{N_0 \left\| \boldsymbol{\Psi} \right\|_{\mathrm{F}}^2 }{2N_{\mathrm{p}}E_{\mathrm{s}}}. \label{eq81}  \numberthis
\end{align*}
\end{proposition}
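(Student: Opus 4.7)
The plan is to combine Lemma \ref{lem3} with the complex circular Gaussian assumption on $\Delta\boldsymbol{h}$, which we already know entrywise: each component of $\Delta\boldsymbol{h}$ is i.i.d.\ zero-mean complex Gaussian with variance $N_0/(N_{\mathrm{p}}E_{\mathrm{s}})$ by the model in \eqref{eq14}. Since $\boldsymbol{\Psi}$ is deterministic (it depends only on the true channel parameters and the system geometry through the $\boldsymbol{\kappa}_{l,i}$, $\breve{\boldsymbol{D}}_l$, and $\breve{\boldsymbol{E}}_l$), the entire analysis reduces to computing the first two moments of the imaginary part of a linear transformation of a proper complex Gaussian vector.

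First, I would establish the zero-mean statement. From Lemma \ref{lem3}, $\Delta\boldsymbol{p}_{\mathrm{R}} = \Im\{\boldsymbol{\Psi}\Delta\boldsymbol{h}\}$, and by linearity of expectation together with $\mathrm{E}\{\Delta\boldsymbol{h}\}=\boldsymbol{0}$, we immediately obtain $\mathrm{E}\{\Delta\boldsymbol{p}_{\mathrm{R}}\}=\boldsymbol{0}$. Second, I would compute the covariance by writing $\Im\{\boldsymbol{\Psi}\Delta\boldsymbol{h}\} = \tfrac{1}{2\jmath}(\boldsymbol{\Psi}\Delta\boldsymbol{h} - \boldsymbol{\Psi}^{*}\Delta\boldsymbol{h}^{*})$ and using the circularity identities $\mathrm{E}\{\Delta\boldsymbol{h}\Delta\boldsymbol{h}^{\mathrm{H}}\} = \tfrac{N_0}{N_{\mathrm{p}}E_{\mathrm{s}}}\boldsymbol{I}$ and $\mathrm{E}\{\Delta\boldsymbol{h}\Delta\boldsymbol{h}^{\mathrm{T}}\}=\boldsymbol{0}$. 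The two cross terms vanish thanks to circularity, and the standard computation yields
\begin{align*}
\mathrm{E}\bigl\{\Delta\boldsymbol{p}_{\mathrm{R}}\Delta\boldsymbol{p}_{\mathrm{R}}^{\mathrm{T}}\bigr\}
=\frac{N_0}{2N_{\mathrm{p}}E_{\mathrm{s}}}\,\Re\bigl\{\boldsymbol{\Psi}\boldsymbol{\Psi}^{\mathrm{H}}\bigr\}.
\end{align*}
Taking the trace and noting that $\mathrm{tr}\Re\{\boldsymbol{\Psi}\boldsymbol{\Psi}^{\mathrm{H}}\}=\Re\{\mathrm{tr}(\boldsymbol{\Psi}\boldsymbol{\Psi}^{\mathrm{H}})\}=\|\boldsymbol{\Psi}\|_{\mathrm{F}}^{2}$, which is real-valued, gives $\mathrm{E}\{\|\Delta\boldsymbol{p}_{\mathrm{R}}\|^{2}\}=\tfrac{N_0\|\boldsymbol{\Psi}\|_{\mathrm{F}}^{2}}{2N_{\mathrm{p}}E_{\mathrm{s}}}$ as claimed.

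There is no real obstacle here: the proposition is essentially a corollary of Lemma \ref{lem3} once the noise model from \eqref{eq14} is invoked, and it parallels Proposition \ref{Prop2} exactly. The only mild subtlety is the usual factor of $1/2$ from taking imaginary parts of a proper complex Gaussian, which is why the statement by the authors as "The results are straightforward based on Lemma~\ref{lem3}" is expected to suffice. Thus my proof would be a single short paragraph: invoke linearity for the mean, apply the circular Gaussian second-moment identities to $\Im\{\boldsymbol{\Psi}\Delta\boldsymbol{h}\}$ to obtain the covariance $\tfrac{N_0}{2N_{\mathrm{p}}E_{\mathrm{s}}}\Re\{\boldsymbol{\Psi}\boldsymbol{\Psi}^{\mathrm{H}}\}$, and take the trace.
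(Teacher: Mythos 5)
Your proposal is correct and follows exactly the route the paper intends: the paper's own proof is the one-line remark that the result is straightforward from Lemma \ref{lem3}, and your computation (linearity for the mean, the circularity identities $\mathrm{E}\{\Delta\boldsymbol{h}\Delta\boldsymbol{h}^{\mathrm{H}}\}=\tfrac{N_0}{N_{\mathrm{p}}E_{\mathrm{s}}}\boldsymbol{I}$ and $\mathrm{E}\{\Delta\boldsymbol{h}\Delta\boldsymbol{h}^{\mathrm{T}}\}=\boldsymbol{0}$ for the covariance, then the trace) is precisely the standard argument being elided, mirroring Proposition \ref{Prop2}. No gaps.
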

\begin{proof}
The results are straightforward based on Lemma \ref{lem3}.
\end{proof}
%------------------------------------------------------------------------------------------------------------------------
%------------------------------------------------------------------------------------------------------------------------
%------------------------------------------------------------------------------------------------------------------------
\section{Performance Evaluation and Discussions}
%\textcolor{red}{Expected Results:
%\begin{itemize}
%    \item Validation of channel and position estimation performance;
%    \item Channel and position estimation performance versus the range of user location uncertainty;
%    \item Channel and position estimation performance versus the range of scatter location uncertainty;
%    \item Channel and position estimation performance versus the number of pilots;
%    \item SLAC performance evaluation.
%\end{itemize}
%}

%----------------------------------------- Fig. 2: Set up ---------------------------------------------------------------
\begin{figure}
\begin{minipage}[!htb]{0.5\textwidth}
    \centering
    \input{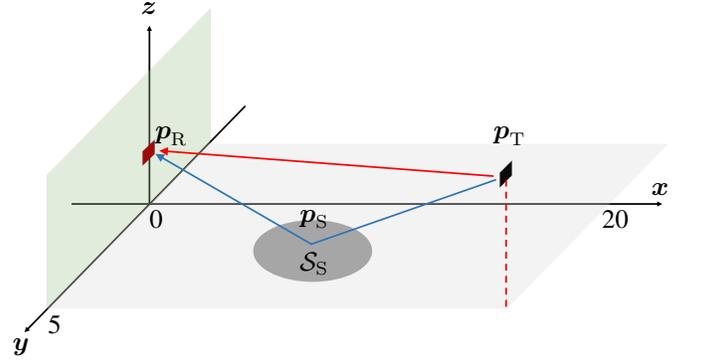}
    \caption{Simulation setup: 1 LOS and 1 NLOS paths, scatter point $\boldsymbol{p}_{\mathrm{S}}$ located at the the gray region $\mathcal{S}_{\mathrm{S}}$. The base station and vehicular user are equipped with URAs in $y-z$ plane.}
    \label{fig:3DSetup}
\end{minipage}
\end{figure}
%------------------------------------------------------------------------------------------------------------------------

\setcounter{table}{0}
%----------------------------------------- TABLE I: Simulation Parameters -----------------------------------------------
\begin{table}%[!htb]
\caption{Parameters in Simulations}
\centering
    \begin{tabular}{l|l}
    \hline \hline
    Parameters & Values \\
    \hline
    Carrier frequency $f_\mathrm{c}$ & 30 GHz \\
    \hline
    Bandwidth & 400 MHz \\
    \hline
    Subcarrier spacing $\Delta f$ & 120 kHz \\
    \hline 
    Number of subcarriers $M_5$ & 500 \\
    \hline 
    Wavelength $\lambda$ & 1 cm \\
    \hline
    Size of URA at transmitter $M_1 \times M_2$ & $8 \times 8$ \\
    \hline 
    Size of URA at receiver $M_3 \times M_4$ & $8 \times 8$ \\
    \hline 
    Antenna spacing $\Delta d$ & 0.5 cm\\
    \hline 
    Number of RF chains at transmitter $N_{\mathrm{T}}$ & 16 \\
    \hline 
    Number of RF chains at receiver $N_{\mathrm{R}}$ & 16 \\
    \hline
    Number of transmit beams $N_1 \times N_2$ & $4 \times 4$ \\
    \hline 
    Number of receive beams $N_3 \times N_4$ & $4\times 4$ \\
    \hline
    Base station location $\boldsymbol{p}_{\mathrm{T}}$ & $[20, 5, 8]^{\mathrm{T}}$ \\
    \hline
    %Scatter area $\mathcal{S}_{\mathrm{S}}$ & $r_{\mathrm{S}} = 2$ \\
    %\hline 
    %User location area $\mathcal{S}_{\mathrm{R}}$ & $r_{\mathrm{S}} = 1$ \\
    %\hline
    Number of pilots $N_{\mathrm{P}}$ & 32\\
    \hline
    Coherence time & 5 ms \\
    \hline \hline
    \end{tabular}
    \label{Tab1}
\end{table} 
%------------------------------------------------------------------------------------------------------------------------
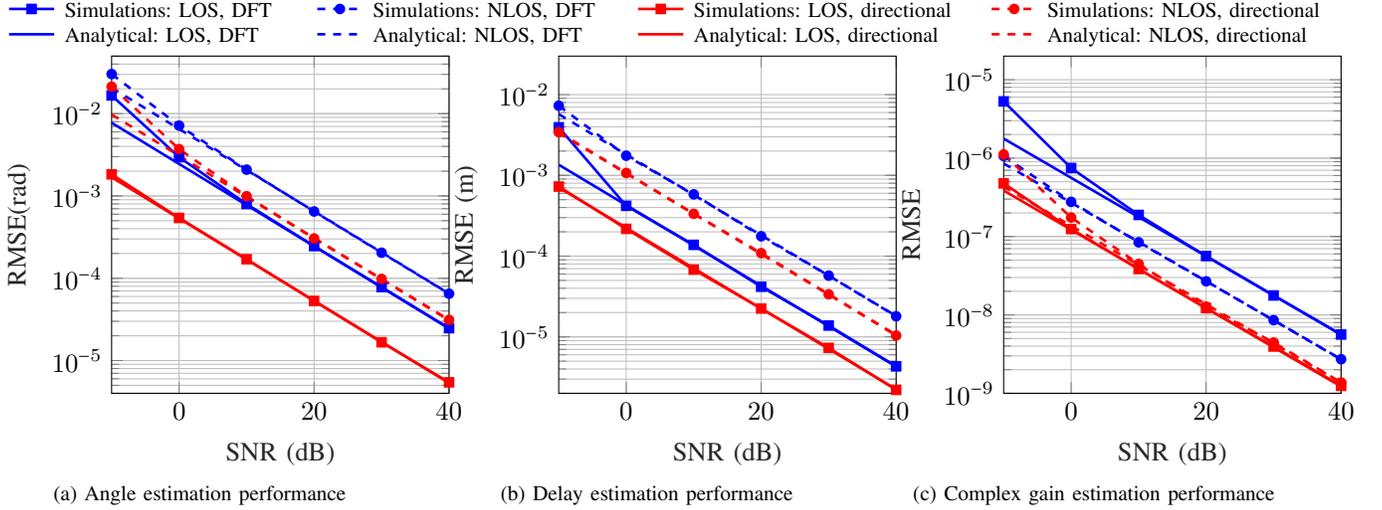
\begin{figure*}
\begin{minipage}[!htb]{0.95\textwidth}
\begin{subfigure}[t]{\textwidth}
    \centering 
    % This file was created by matlab2tikz.
%
%The latest updates can be retrieved from
%  http://www.mathworks.com/matlabcentral/fileexchange/22022-matlab2tikz-matlab2tikz
%where you can also make suggestions and rate matlab2tikz.
%
\begin{tikzpicture}
%\draw [help lines] (0,0) grid (9,2);
\node (Box1) [draw=none,fill=none] at (0.8, 0.3) {\shortstack[l]{ \ref{Num_LOS_DFT} \footnotesize Simulations: LOS, DFT \\
\ref{Ana_LOS_DFT} \footnotesize Analytical: LOS, DFT}};
\node (Box2) [draw=none,fill=none] at (5, 0.3) {\shortstack[l]{ \ref{Num_NLOS_DFT} \footnotesize Simulations: NLOS, DFT \\
\ref{Ana_NLOS_DFT} \footnotesize Analytical: NLOS, DFT}};
\node (Box3) [draw=none,fill=none] at (9.5, 0.3) {\shortstack[l]{ \ref{Num_LOS_Sum} \footnotesize Simulations: LOS, directional \\
\ref{Ana_LOS_Sum} \footnotesize Analytical: LOS, directional}};
\node (Box1) [draw=none,fill=none] at (14.3, 0.3) {\shortstack[l]{ \ref{Num_NLOS_Sum} \footnotesize Simulations: NLOS, directional \\
\ref{Ana_NLOS_Sum} \footnotesize Analytical: NLOS, directional}};
\end{tikzpicture}% \vspace{-5mm}
    \centering
\end{subfigure}
\begin{subfigure}[b]{0.31\textwidth}
    \centering
    % This file was created by matlab2tikz.
%
%The latest updates can be retrieved from
%  http://www.mathworks.com/matlabcentral/fileexchange/22022-matlab2tikz-matlab2tikz
%where you can also make suggestions and rate matlab2tikz.
%
\begin{tikzpicture}

\begin{axis}[%
width=0.84\textwidth,
height=0.84\textwidth,
at={(2,0.5)},
scale only axis,
xmin=-10,
xmax=40,
xlabel style={font=\color{white!15!black}},
xlabel={SNR (dB)},
ymode=log,
ymin=4e-06,
ymax=0.05,
yminorticks=true,
axis background/.style={fill=white},
ylabel style={font=\color{white!15!black}},
ylabel={RMSE(rad)},
xmajorgrids,
ymajorgrids,
yminorgrids
]
\addplot [color=blue, line width=1.0pt, mark size=1.5pt, mark=square*, mark options={solid, fill=blue, draw=blue}, forget plot]
  table[row sep=crcr]{%
-10	0.0166289266099174\\
0	0.00295839492121321\\
10	0.000797340834262308\\
20	0.000246063191287389\\
30	7.81662726061573e-05\\
40	2.47420588238407e-05\\
};\label{Num_LOS_DFT}
\addplot [color=blue, line width=1.0pt, forget plot]
  table[row sep=crcr]{%
-10	0.00774284794848202\\
0	0.00244850350935653\\
10	0.000774284794848202\\
20	0.000244850350935653\\
30	7.74284794848202e-05\\
40	2.44850350935653e-05\\
};\label{Ana_LOS_DFT}
\addplot [color=blue, dashed, line width=1.0pt, mark size=1.5pt, mark=*, mark options={solid, fill=blue, draw=blue}, forget plot]
  table[row sep=crcr]{%
-10	0.0302595995679146\\
0	0.00715003646732065\\
10	0.00208572757603315\\
20	0.000647611155392436\\
30	0.000204293634532548\\
40	6.48844360249945e-05\\
};\label{Num_NLOS_DFT}
\addplot [color=blue, dashed, line width=1.0pt, forget plot]
  table[row sep=crcr]{%
-10	0.0206018221036155\\
0	0.00651486817970265\\
10	0.00206018221036155\\
20	0.000651486817970265\\
30	0.000206018221036155\\
40	6.51486817970265e-05\\
};\label{Ana_NLOS_DFT}
\addplot [color=red, line width=1.0pt, mark size=1.5pt, mark=square*, mark options={solid, fill=red, draw=red}, forget plot]
  table[row sep=crcr]{%
-10	0.00182915309024986\\
0	0.000540336652250406\\
10	0.000170366487904021\\
20	5.29929323718008e-05\\
30	1.66912637967335e-05\\
40	5.43469992646086e-06\\
};\label{Num_LOS_Sum}
\addplot [color=red, line width=1.0pt, forget plot]
  table[row sep=crcr]{%
-10	0.00168629335935188\\
0	0.000533252781876873\\
10	0.000168629335935188\\
20	5.33252781876873e-05\\
30	1.68629335935188e-05\\
40	5.33252781876873e-06\\
};\label{Ana_LOS_Sum}
\addplot [color=red, dashed, line width=1.0pt, mark size=1.5pt, mark=*, mark options={solid, fill=red, draw=red}, forget plot]
  table[row sep=crcr]{%
-10	0.0212827764359188\\
0	0.0037064950973023\\
10	0.000992540781052765\\
20	0.000304238641436052\\
30	9.82473885808755e-05\\
40	3.12237025077887e-05\\
};\label{Num_NLOS_Sum}
\addplot [color=red, dashed, line width=1.0pt, forget plot]
  table[row sep=crcr]{%
-10	0.00975113773146095\\
0	0.00308358050094239\\
10	0.000975113773146095\\
20	0.000308358050094239\\
30	9.75113773146095e-05\\
40	3.08358050094239e-05\\
};\label{Ana_NLOS_Sum}
\end{axis}
%\node (Box1) [draw=none,fill=none] at (2.7, 4.35) {\shortstack[l]{ \ref{Num_LOS_DFT} \footnotesize Simulations: LOS, DFT \\
%\ref{Ana_LOS_DFT} \footnotesize Analytical: LOS, DFT}};
%\node (Box1) [draw=none,fill=none] at (1.9, 0.5) {\shortstack[l]{ \ref{Num_NLOS_DFT} \footnotesize Simulations: NLOS, DFT \\
%\ref{Ana_NLOS_DFT} \footnotesize Analytical: NLOS, DFT}};
\end{tikzpicture}% \vspace{-5mm}
    \centering
    \caption{Angle estimation performance}
    \label{fig:Angles}
\end{subfigure}
\hfill
\begin{subfigure}[b]{0.31\textwidth}
    \centering
    % This file was created by matlab2tikz.
%
%The latest updates can be retrieved from
%  http://www.mathworks.com/matlabcentral/fileexchange/22022-matlab2tikz-matlab2tikz
%where you can also make suggestions and rate matlab2tikz.
%
\begin{tikzpicture}

\begin{axis}[%
width=0.84\textwidth,
height=0.84\textwidth,
at={(2,0.5)},
scale only axis,
xmin=-10,
xmax=40,
xlabel style={font=\color{white!15!black}},
xlabel={SNR (dB)},
ymode=log,
ymin=2e-06,
ymax=0.03,
yminorticks=true,
ylabel style={font=\color{white!15!black}},
ylabel={RMSE (m)},
axis background/.style={fill=white},
xmajorgrids,
ymajorgrids,
yminorgrids
]
\addplot [color=blue, line width=1.0pt, mark size=1.5pt, mark=square*, mark options={solid, fill=blue, draw=blue}, forget plot]
  table[row sep=crcr]{%
-10	0.00393704403896897\\
0	0.000419774640808397\\
10	0.000137972563597526\\
20	4.17633166401588e-05\\
30	1.38109483951015e-05\\
40	4.31550169921436e-06\\
};
\addplot [color=blue, line width=1.0pt, forget plot]
  table[row sep=crcr]{%
-10	0.00135494631301014\\
0	0.000428471645635948\\
10	0.000135494631301014\\
20	4.28471645635948e-05\\
30	1.35494631301014e-05\\
40	4.28471645635948e-06\\
};
\addplot [color=blue, dashed, line width=1.0pt, mark size=1.5pt, mark=*, mark options={solid, fill=blue, draw=blue}, forget plot]
  table[row sep=crcr]{%
-10	0.00729998321717056\\
0	0.00175137303928206\\
10	0.000582412191078538\\
20	0.000176236824520099\\
30	5.73601880702922e-05\\
40	1.79809224156489e-05\\
};
\addplot [color=blue, dashed, line width=1.0pt, forget plot]
  table[row sep=crcr]{%
-10	0.00573455825175701\\
0	0.00181342654504654\\
10	0.000573455825175701\\
20	0.000181342654504654\\
30	5.73455825175701e-05\\
40	1.81342654504654e-05\\
};
\addplot [color=red, line width=1.0pt, mark size=1.5pt, mark=square*, mark options={solid, fill=red, draw=red}, forget plot]
  table[row sep=crcr]{%
-10	0.00072729248971496\\
0	0.000217732081860057\\
10	6.80288244564833e-05\\
20	2.24289559688516e-05\\
30	7.2997027290757e-06\\
40	2.20819107902246e-06\\
};
\addplot [color=red, line width=1.0pt, forget plot]
  table[row sep=crcr]{%
-10	0.000708398311151128\\
0	0.000224015215385422\\
10	7.08398311151128e-05\\
20	2.24015215385422e-05\\
30	7.08398311151128e-06\\
40	2.24015215385422e-06\\
};
\addplot [color=red, dashed, line width=1.0pt, mark size=1.5pt, mark=*, mark options={solid, fill=red, draw=red}, forget plot]
  table[row sep=crcr]{%
-10	0.00342970829955676\\
0	0.0010714836639038\\
10	0.000334278231183624\\
20	0.000108887151228694\\
30	3.36911195634681e-05\\
40	1.03766742685901e-05\\
};
\addplot [color=red, dashed, line width=1.0pt, forget plot]
  table[row sep=crcr]{%
-10	0.00335273710130492\\
0	0.00106022856358743\\
10	0.000335273710130492\\
20	0.000106022856358743\\
30	3.35273710130492e-05\\
40	1.06022856358743e-05\\
};
\end{axis}
% \node (Box1) [draw=none,fill=none] at (2.3, 4.35) {\shortstack[l]{ \ref{Num_LOS_Sum} \footnotesize Simulations: LOS, directional \\
%\ref{Ana_LOS_Sum} \footnotesize Analytical: LOS, directional}};
\end{tikzpicture}% \vspace{-5mm}
    \centering
    \caption{Delay estimation performance}
    \label{fig:Delay}
\end{subfigure}
\hfill
\begin{subfigure}[b]{0.31\textwidth}
    \centering
    % This file was created by matlab2tikz.
%
%The latest updates can be retrieved from
%  http://www.mathworks.com/matlabcentral/fileexchange/22022-matlab2tikz-matlab2tikz
%where you can also make suggestions and rate matlab2tikz.
%
\begin{tikzpicture}

\begin{axis}[%
width=0.84\textwidth,
height=0.84\textwidth,
at={(2,0.5)},
scale only axis,
xmin=-10,
xmax=40,
xlabel style={font=\color{white!15!black}},
xlabel={SNR (dB)},
ymode=log,
ymin=1e-09,
ymax=2e-05,
yminorticks=true,
ylabel style={font=\color{white!15!black}},
ylabel={RMSE},
axis background/.style={fill=white},
xmajorgrids,
ymajorgrids,
yminorgrids
]
\addplot [color=blue, line width=1.0pt, mark size=1.5pt, mark=square*, mark options={solid, fill=blue, draw=blue}, forget plot]
  table[row sep=crcr]{%
-10	5.28754164111112e-06\\
0	7.46718988704069e-07\\
10	1.88530896399381e-07\\
20	5.63156355405104e-08\\
30	1.76980060728802e-08\\
40	5.62200799836243e-09\\
};
\addplot [color=blue, line width=1.0pt, forget plot]
  table[row sep=crcr]{%
-10	1.77351838459984e-06\\
0	5.60835756751798e-07\\
10	1.77351838459984e-07\\
20	5.60835756751798e-08\\
30	1.77351838459984e-08\\
40	5.60835756751798e-09\\
};
\addplot [color=blue, dashed, line width=1.0pt, mark size=1.5pt, mark=*, mark options={solid, fill=blue, draw=blue}, forget plot]
  table[row sep=crcr]{%
-10	1.06315994560941e-06\\
0	2.76672561618466e-07\\
10	8.43394785955596e-08\\
20	2.68153744648724e-08\\
30	8.56562791167274e-09\\
40	2.71931213966723e-09\\
};
\addplot [color=blue, dashed, line width=1.0pt, forget plot]
  table[row sep=crcr]{%
-10	8.54236788583744e-07\\
0	2.70133391303235e-07\\
10	8.54236788583744e-08\\
20	2.70133391303235e-08\\
30	8.54236788583744e-09\\
40	2.70133391303235e-09\\
};
\addplot [color=red, line width=1.0pt, mark size=1.5pt, mark=square*, mark options={solid, fill=red, draw=red}, forget plot]
  table[row sep=crcr]{%
-10	4.75945946112396e-07\\
0	1.24563442443699e-07\\
10	3.85033644272191e-08\\
20	1.21922941963722e-08\\
30	3.94905777591874e-09\\
40	1.24627540434237e-09\\
};
\addplot [color=red, line width=1.0pt, forget plot]
  table[row sep=crcr]{%
-10	3.86059202196754e-07\\
0	1.22082639060922e-07\\
10	3.86059202196754e-08\\
20	1.22082639060922e-08\\
30	3.86059202196754e-09\\
40	1.22082639060922e-09\\
};
\addplot [color=red, dashed, line width=1.0pt, mark size=1.5pt, mark=*, mark options={solid, fill=red, draw=red}, forget plot]
  table[row sep=crcr]{%
-10	1.1146776820333e-06\\
0	1.74683807967667e-07\\
10	4.47194128344663e-08\\
20	1.27968008562644e-08\\
30	4.45060546506217e-09\\
40	1.37430544363747e-09\\
};
\addplot [color=red, dashed, line width=1.0pt, forget plot]
  table[row sep=crcr]{%
-10	4.32636398959421e-07\\
0	1.36811641940507e-07\\
10	4.32636398959421e-08\\
20	1.36811641940507e-08\\
30	4.32636398959421e-09\\
40	1.36811641940507e-09\\
};
\end{axis}
%\node (Box1) [draw=none,fill=none] at (2.3, 4.35) {\shortstack[l]{ \ref{Num_NLOS_Sum} \footnotesize Simulations: NLOS, directional \\
%\ref{Ana_NLOS_Sum} \footnotesize Analytical: NLOS, directional}};
\end{tikzpicture}% \vspace{-5mm}
    \centering
    \caption{Complex gain estimation performance}
    \label{fig:CG}
\end{subfigure}
\end{minipage}
\caption{Analytical performance validation: 1 LOS path and 1 NLOS path, $N_{\mathrm{P}} = 32$, DFT beams and directional beams.} 
\label{fig:PerValidation}
\end{figure*}
%------------------------------------------------------------------------------------------------------------------------
\subsection{Simulation Setup}
In the simulations, we consider the simultaneous localization and communications for a vehicular user, 
with  simulation parameters in Table \ref{Tab1}. %A single base station is located at $\boldsymbol{p}_{\mathrm{T}} = [20, 5, 8]^{\mathrm{T}}$. A multipath propagation environment is considered. 
Specifically, we consider one line-of-sight (LOS) and NLOS path from the ground as shown in Fig. \ref{fig:3DSetup}. We denote $\mathcal{S}_{\mathrm{S}} \triangleq \{(x, y, z) \in \mathbb{R}^3 | z=0, (x-10)^2 + (y-2.5)^2 \le r_{\mathrm{S}}^2 \}$ as the area where the scatter point is located.
The coherence time is set to 5 ms, indicating 600 coherent OFDM blocks. Among them, the first $N_{\mathrm{P}}$ OFDM blocks are used for pilot transmission, while the remaining blocks are used for data transmission.
The overall bandwidth of the system is 400 MHz, using 3300 subcarriers starting from 28 GHz. Among them, we use the every 6-th subcarrier for pilot transmission. As examples to validate the performance of the proposed method, we evaluate the estimation performance with both the DFT beams (with beam separation of $\pi/4$ in each dimension) and the directional beams with a closer beam separation of $\pi/8$ at both transmitter and receiver sides. The estimation of the clock bias and orientation of the vehicular user is out of the scope in this work that we assume perfect synchronization and known orientation at receiver in the framework. All simulations are performed in MATLAB on a laptop with a 2.3 GHz Quad-Core Intel Core i7 processor and 16 GB memory.

The geometric relationship between the positions and the angles/delays are the same as that in \cite{GeWenKimZhuJiaKimSveWym20, WenKulWitWym19}. The path loss $\rho_l$ for the $l$-th path follows the models in \cite{GeWenKimZhuJiaKimSveWym20, WenKulWitWym19}, given by
\begin{align*}
\rho_l \propto & \varsigma_l \left( \frac{\lambda }{ 4\pi d_l}\right)^2, \numberthis
\end{align*}
where $d_l$ is the propagation distance, and $\varsigma_l$ indicates the power of the $l$-th multipath component after scattering. The SNR at receiver side before signal combination is given by
\begin{align*}
\mathrm{SNR} = & \frac{ \sum_{m_5 = 1}^{M_5} \left\| \boldsymbol{H}_{m_5} \boldsymbol{F} \right\|_{\mathrm{F}}^2 E_{\mathrm{s}}}{\sum_{m_5 = 1}^{M_5} \mathrm{E}\left( \left\| \boldsymbol{Z}_{m_5} \right\|^2 \right) }, \numberthis
\end{align*}
where $\boldsymbol{H}_{m_5} \in \mathbb{C}^{M_3M_4 \times M_1M_2}$ is the channel matrix over the $m_5$-th subcarrier. 

For evaluation, we use the root mean-square error metric for both channel parameter and position estimation and compare to the closed-form first-order perturbation analytical results. In addition, we also compare the estimation and complexity performance of the proposed scheme to the existing tensor decomposition method in \cite{WenGarKulWitWym18, WenSoWym20}.

%------------------------------------------------------------------------------------------------------------------------
%\begin{figure*}
%\begin{minipage}[!htb]{0.95\textwidth}
%\begin{subfigure}[b]{0.3\textwidth}
%    \centering
%    \input{Fig/AODAz_Sum} \vspace{-5mm}
%    \centering
%    \caption{AOD azimuth angle estimation}
%    \label{fig:AODAz_Sum}
%\end{subfigure}
%\hfill
%\begin{subfigure}[b]{0.3\textwidth}
%    \centering
%    \input{Fig/AODEl_Sum} \vspace{-5mm}
%    \centering
%    \caption{AOD elevation angle estimation}
%    \label{fig:AODEl_Sum}
%\end{subfigure}
%\hfill
%\begin{subfigure}[b]{0.3\textwidth}
%    \centering
%    \input{Fig/AOAAz_Sum} \vspace{-5mm}
%    \centering
%    \caption{AOA azimuth angle estimation}
%    \label{fig:AOAAz_2Path}
%\end{subfigure}
%\end{minipage}
%\caption{Channel parameter estimation with sum beams: 1 LOS and 1 NLOS, $N_{\mathrm{P}} = 32$.} 
%\label{fig:ChPara_Sum}
%\end{figure*}

\subsection{Channel Parameter Estimation Performance}
We first evaluate the analytical perturbation performance and the numerical performance of the proposed beamspace ESPRIT approach with the system configuration given in Table \ref{Tab1}. In Fig. \ref{fig:PerValidation}, we present the channel parameter estimation performance with both the DFT beams and the directional beams in each mode at transmitter and receiver sides.\footnote{The transformation matrices are selected to cover the area of interest in the setup. In this case, the selections are made with the prior user and scatter location information.} The following observations can be seen from Fig. \ref{fig:PerValidation}.
%------------------------------------------------------------------------------------------------------------------------
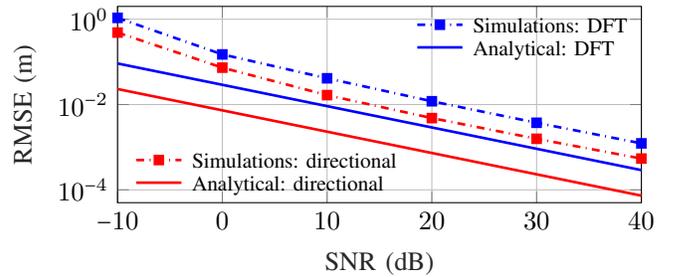
\begin{figure}
\begin{minipage}[!htb]{0.48\textwidth}
    \centering
    % This file was created by matlab2tikz.
%
%The latest updates can be retrieved from
%  http://www.mathworks.com/matlabcentral/fileexchange/22022-matlab2tikz-matlab2tikz
%where you can also make suggestions and rate matlab2tikz.
%
\begin{tikzpicture}

\begin{axis}[%
width=0.8\textwidth,
height=0.3\textwidth,
at={(2,0.5)},
scale only axis,
xmin=-10,
xmax=40,
xlabel style={font=\color{white!15!black}},
xlabel={SNR (dB)},
ymin=0.00005,
ymax=2,
ymode=log,
ylabel style={font=\color{white!15!black}},
ylabel={RMSE (m)},
axis background/.style={fill=white},
xmajorgrids,
ymajorgrids
]
\addplot [color=blue, dashdotted, line width=1.0pt, mark size=1.5pt, mark=square*, mark options={solid, fill=blue, draw=blue}, forget plot]
  table[row sep=crcr]{%
-10	1.075765464132\\
0	0.149639362389055\\
10	0.0412247494097862\\
20	0.0119399513438065\\
30	0.0037350028629964\\
40	0.00122952461557811\\
}; \label{Num_Pos_60MHz}
\addplot [color=blue, line width=1.0pt, forget plot]
  table[row sep=crcr]{%
-10	0.0917737815410388\\
0	0.02902141791564\\
10	0.00917737815410388\\
20	0.002902141791564\\
30	0.000917737815410388\\
40	0.0002902141791564\\
}; \label{Ana_Pos_60MHz}
\addplot [color=red, dashdotted, line width=1.0pt, mark size=1.5pt, mark=square*, mark options={solid, fill=red, draw=red}, forget plot]
  table[row sep=crcr]{%
-10	0.486932525966813\\
0	0.0731994029192112\\
10	0.0166380200963032\\
20	0.00480806716565456\\
30	0.00157404663102637\\
40	0.000541172423780855\\
};\label{Num_Pos_Sum_60MHz}
\addplot [color=red, line width=1.0pt, forget plot]
  table[row sep=crcr]{%
-10	0.023107490116249\\
0	0.00730722997771758\\
10	0.0023107490116249\\
20	0.000730722997771758\\
30	0.00023107490116249\\
40	7.30722997771758e-05\\
};\label{Ana_Pos_Sum_60MHz}
\end{axis}
\node (Box1) [draw=none,fill=none] at (5.4, 2.3) {\shortstack[l]{\ref{Num_Pos_60MHz} \footnotesize Simulations: DFT \\
\ref{Ana_Pos_60MHz} \footnotesize Analytical: DFT}};
\node (Box1) [draw=none,fill=none] at (2, 0.5) {\shortstack[l]{ \ref{Num_Pos_Sum_60MHz} \footnotesize Simulations: directional \\
\ref{Ana_Pos_Sum_60MHz} \footnotesize Analytical: directional }};
\end{tikzpicture}%
    % \caption{Sum beams}
    \label{fig:Pos_Sum}
\caption{Localization performance: DFT beams and directional beams, 1 LOS and 1 NLOS, $N_{\mathrm{P}} = 32$.}
\label{fig:Pos}
\end{minipage}
\end{figure}
%------------------------------------------------------------------------------------------------------------------------
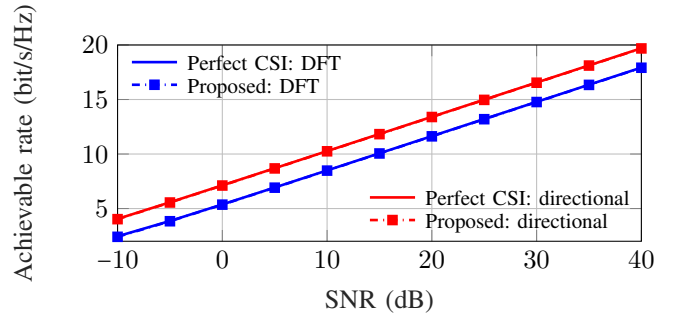
\begin{figure}
\begin{minipage}[!htb]{0.48\textwidth}
\centering
% This file was created by matlab2tikz.
%
%The latest updates can be retrieved from
%  http://www.mathworks.com/matlabcentral/fileexchange/22022-matlab2tikz-matlab2tikz
%where you can also make suggestions and rate matlab2tikz.
%
\begin{tikzpicture}

\begin{axis}[%
width=0.8\textwidth,
height=0.3\textwidth,
at={(2,0.5)},
scale only axis,
xmin=-10,
xmax=40,
xlabel style={font=\color{white!15!black}},
xlabel={SNR (dB)},
ymin=2,
ymax=20,
ylabel style={font=\color{white!15!black}},
ylabel={Achievable rate (bit/s/Hz)},
axis background/.style={fill=white},
xmajorgrids,
ymajorgrids
]
\addplot [color=blue, dashdotted, line width=1.0pt, mark size=1.5pt, mark=square*, mark options={solid, fill=blue, draw=blue}, forget plot]
  table[row sep=crcr]{%
-10	2.40783975009046\\
-5	3.83626676748899\\
0	5.35736009099592\\
5	6.91417172930717\\
10	8.47841007281352\\
15	10.0507518066472\\
20	11.6220546287136\\
25	13.1939277157471\\
30	14.7676355663213\\
35	16.3394923427042\\
40	17.9124452799092\\
}; \label{SR_DFT_M}
\addplot [color=blue, line width=1.0pt, forget plot]
  table[row sep=crcr]{%
-10	2.45210256407635\\
-5	3.85987208544347\\
0	5.37577855441752\\
5	6.92980240782167\\
10	8.49632548435801\\
15	10.0668476443501\\
20	11.638639137953\\
25	13.2108325033865\\
30	14.7831529992465\\
35	16.3555137020189\\
40	17.9278871198076\\
}; \label{SR_DFT_P}
\addplot [color=red, dashdotted, line width=1.0pt, mark size=1.5pt, mark=square*, mark options={solid, fill=red, draw=red}, forget plot]
  table[row sep=crcr]{%
-10	4.02693832446251\\
-5	5.55399731171515\\
0	7.11064704305106\\
5	8.68011258033255\\
10	10.2503348454655\\
15	11.8231272647636\\
20	13.3942925616287\\
25	14.9651945526848\\
30	16.5396587033713\\
35	18.1130977865722\\
40	19.6832811362986\\
}; \label{SR_SUM_M}
\addplot [color=red, line width=1.0pt, forget plot]
  table[row sep=crcr]{%
-10	4.04902118273021\\
-5	5.57236440026958\\
0	7.1288634148343\\
5	8.6961822529165\\
10	10.2669573829626\\
15	11.8388290062922\\
20	13.4110477243514\\
25	14.9833762387545\\
30	16.5557394773468\\
35	18.1281136970456\\
40	19.7004913893104\\
}; \label{SR_SUM_P}
\end{axis}

%\begin{axis}[%
%width=0.2\textwidth,
%height=0.3\textwidth,
%at={(0.57\textwidth,0.06\textwidth)},
%scale only axis,
%xmin=-0.2,
%xmax=0.2,
%ymin=5.25,
%ymax=7.25,
%axis background/.style={fill=white},
%xmajorgrids,
%ymajorgrids
%]
%\addplot [color=blue, dashdotted, line width=1.0pt, mark size=1.5pt, mark=square*, mark options={solid, fill=blue, draw=blue}, forget plot]
%  table[row sep=crcr]{%
%-5	3.83626676748899\\
%0	5.35736009099592\\
%5	6.91417172930717\\
%};
%\addplot [color=blue, line width=1.0pt, forget plot]
%  table[row sep=crcr]{%
%-5	3.85987208544347\\
%0	5.37577855441752\\
%5	6.92980240782167\\
%};
%\addplot [color=red, dashdotted, line width=1.0pt, mark size=1.5pt, mark=square*, mark options={solid, fill=red, draw=red}, forget plot]
%  table[row sep=crcr]{%
%-5	5.55399731171515\\
%0	7.11064704305106\\
%5	8.68011258033255\\
%};
%\addplot [color=red, line width=1.0pt, forget plot]
%  table[row sep=crcr]{%
%-5	5.57236440026958\\
%0	7.1288634148343\\
%5	8.6961822529165\\
%};
%\end{axis}
%\draw [help lines] (0,0) grid (8, 5);
\node (Box1) [draw=none,fill=none] at (1.6, 2.2) {\shortstack[l]{ \ref{SR_DFT_P} \footnotesize Perfect CSI: DFT \\
\ref{SR_DFT_M} \footnotesize Proposed: DFT}}; 
\node (Box1) [draw=none,fill=none] at (5.1, 0.4) {\shortstack[l]{\ref{SR_SUM_P} \footnotesize Perfect CSI: directional \\
\ref{SR_SUM_M} \footnotesize Proposed: directional}}; 
\end{tikzpicture}%
\caption{Achievable rate performance: 1 LOS and 1 NLOS, $N_{\mathrm{P}} = 32$, DFT beams and directional beams.}
\label{fig:SumRate}
\end{minipage}
\end{figure}
\begin{itemize}
\item The analytical and numerical results for channel parameter estimation show to be well-matched, particularly in high SNR regions. This is consistent with the first-order perturbation analysis, since higher-order perturbations can be reasonably ignored when SNR is high.
\item The angle and delay estimations of the LOS path are better than that of the NLOS components. This is reasonable as the LOS path has relatively higher SNR.
\item The channel parameter estimation performance is significantly improved if the directional beams with closer beam separation is used. Particularly, if the directional beams with closer separation are used, the effective SNR can be boosted and the angle estimation performance is significantly improved, leading to the improvement of other channel parameter estimation performance.
\end{itemize} 
The first-order perturbation of angular frequency estimation has been widely studied in the literature, for example, \cite{RoyKai89, SahUseCom17, LiuLiu06, LiuLiuMa07, SorLat16, Haardt2018}; however, it is worth pointing out that the first-order perturbation of the channel parameter estimation, especially the complex gain estimation, is rarely studied. Our closed-form first-order perturbation of the channel parameter estimation matches very well with the simulations.

% \begin{figure}
% \begin{minipage}[!htb]{0.48\textwidth}
% \begin{subfigure}[b]{\textwidth}
%    \centering
%    \input{Fig/Localization_2Path}
%    \caption{DFT beams}
%    \label{fig:Pos_DFT}
% \end{subfigure}
% \begin{subfigure}[b]{\textwidth}
%    \centering
%    \input{Fig/Localization_Sum}
%    \caption{Sum beams}
%    \label{fig:Pos_Sum}
% \end{subfigure}
% \caption{Localization performance: 1 LOS and 1 NLOS, 12 MHz and 60 MHz, $N_{\mathrm{P}} = 32$.}
% \label{fig:Pos}
% \end{minipage}
% \end{figure}
%------------------------------------------------------------------------------------------------------------------------
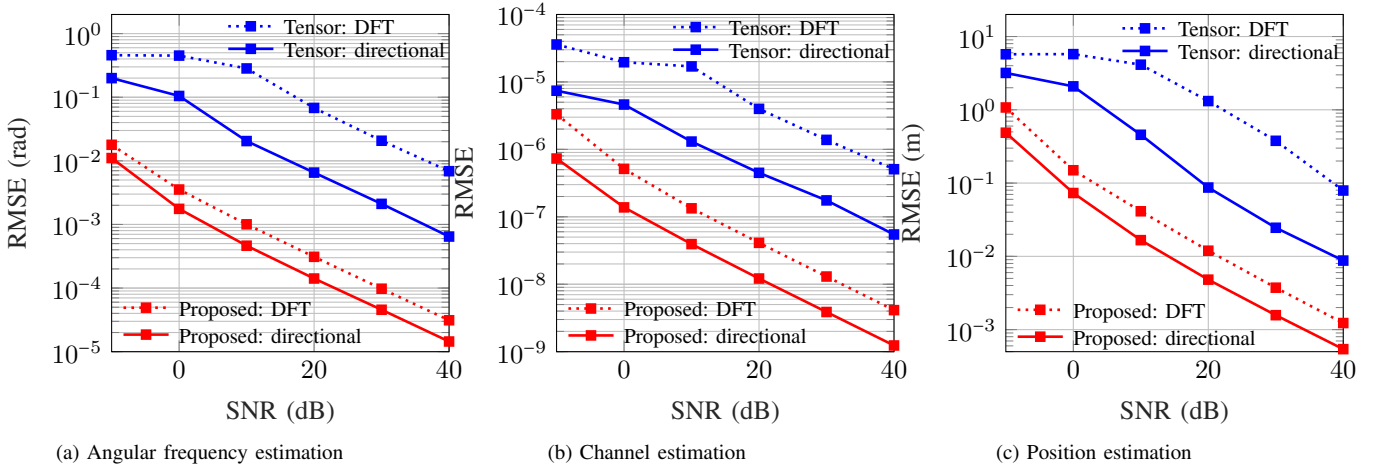
\begin{figure*}
\begin{minipage}[!htb]{0.95\textwidth}
\begin{subfigure}[b]{0.31\textwidth}
    \centering
    % This file was created by matlab2tikz.
%
%The latest updates can be retrieved from
%  http://www.mathworks.com/matlabcentral/fileexchange/22022-matlab2tikz-matlab2tikz
%where you can also make suggestions and rate matlab2tikz.
%
\begin{tikzpicture}

\begin{axis}[%
width=0.84\textwidth,
height=0.84\textwidth,
at={(2,0.5)},
scale only axis,
xmin=-10,
xmax=40,
xlabel style={font=\color{white!15!black}},
xlabel={SNR (dB)},
ymode=log,
ymin=1e-05,
ymax=2,
yminorticks=true,
ylabel style={font=\color{white!15!black}},
ylabel={RMSE (rad)},
axis background/.style={fill=white},
xmajorgrids,
ymajorgrids,
yminorgrids
]
\addplot [color=blue, dotted, line width=1.0pt, mark size=1.5pt, mark=square*, mark options={solid, fill=blue, draw=blue}, forget plot]
  table[row sep=crcr]{%
-10	0.457697799767582\\
0	0.451332851762914\\
10	0.283590415222115\\
20	0.0678996360111299\\
30	0.0207054944363707\\
40	0.00687745561736842\\
}; \label{AFT_DFT}
\addplot [color=red, dotted, line width=1.0pt, mark size=1.5pt, mark=square*, mark options={solid, fill=red, draw=red}, forget plot]
  table[row sep=crcr]{%
-10	0.01787202133827\\
0	0.00354202977000938\\
10	0.00100501319874493\\
20	0.000309951418157199\\
30	9.77369968573625e-05\\
40	3.11914010475277e-05\\
}; \label{AFM_DFT}
\addplot [color=blue, line width=1.0pt, mark size=1.5pt, mark=square*, mark options={solid, fill=blue, draw=blue}, forget plot]
  table[row sep=crcr]{%
-10	0.199708948856902\\
0	0.1047144555302\\
10	0.0204652733899348\\
20	0.00653541120180269\\
30	0.00210815812210959\\
40	0.000645751645029958\\
}; \label{AFT_Sum}
\addplot [color=red, line width=1.0pt, mark size=1.5pt, mark=square*, mark options={solid, fill=red, draw=red}, forget plot]
  table[row sep=crcr]{%
-10	0.0110658117805005\\
0	0.00175937896913574\\
10	0.000463412731567352\\
20	0.000141323993374912\\
30	4.56154913229398e-05\\
40	1.44523506492468e-05\\
}; \label{AFM_Sum}
\end{axis}
% \draw [help lines] (0,0) grid (7, 6);
\node (Box1) [draw=none,fill=none] at (3, 4.35) {\shortstack[l]{ \ref{AFT_DFT} \footnotesize Tensor: DFT \\
\ref{AFT_Sum} \footnotesize Tensor: directional}};
\node (Box2) [draw=none,fill=none] at (1.75, 0.55) {\shortstack[l]{ \ref{AFM_DFT} \footnotesize Proposed: DFT \\
\ref{AFM_Sum} \footnotesize Proposed: directional}};
\end{tikzpicture}% \vspace{-5mm}
    \centering
    \caption{Angular frequency estimation}
    \label{fig:AFCMP}
\end{subfigure}
\hfill
\begin{subfigure}[b]{0.31\textwidth}
    \centering
    % This file was created by matlab2tikz.
%
%The latest updates can be retrieved from
%  http://www.mathworks.com/matlabcentral/fileexchange/22022-matlab2tikz-matlab2tikz
%where you can also make suggestions and rate matlab2tikz.
%
\begin{tikzpicture}

\begin{axis}[%
width=0.84\textwidth,
height=0.84\textwidth,
at={(2,0.5)},
scale only axis,
xmin=-10,
xmax=40,
xlabel style={font=\color{white!15!black}},
xlabel={SNR (dB)},
ymode=log,
ymin=1e-09,
ymax=0.0001,
yminorticks=true,
ylabel style={font=\color{white!15!black}},
ylabel={RMSE},
axis background/.style={fill=white},
xmajorgrids,
ymajorgrids,
yminorgrids
]
\addplot [color=blue, dotted, line width=1.0pt, mark size=1.5pt, mark=square*, mark options={solid, fill=blue, draw=blue}, forget plot]
  table[row sep=crcr]{%
-10	3.59190575606405e-05\\
0	1.95913571856364e-05\\
10	1.69867188565612e-05\\
20	3.98903720258639e-06\\
30	1.38290506515947e-06\\
40	5.07646479440679e-07\\
};
\addplot [color=red, dotted, line width=1.0pt, mark size=1.5pt, mark=square*, mark options={solid, fill=red, draw=red}, forget plot]
  table[row sep=crcr]{%
-10	3.31884511938869e-06\\
0	5.13652615762445e-07\\
10	1.336072351043e-07\\
20	4.11067600701878e-08\\
30	1.30101999567375e-08\\
40	4.13242972442655e-09\\
};
\addplot [color=blue, line width=1.0pt, mark size=1.5pt, mark=square*, mark options={solid, fill=blue, draw=blue}, forget plot]
  table[row sep=crcr]{%
-10	7.40836167054411e-06\\
0	4.61909783405366e-06\\
10	1.30503028631667e-06\\
20	4.49135644016814e-07\\
30	1.75053118575867e-07\\
40	5.47061330618007e-08\\
};
\addplot [color=red, line width=1.0pt, mark size=1.5pt, mark=square*, mark options={solid, fill=red, draw=red}, forget plot]
  table[row sep=crcr]{%
-10	7.31654961996003e-07\\
0	1.37923757958165e-07\\
10	3.93954114056924e-08\\
20	1.21562911446767e-08\\
30	3.8776720347882e-09\\
40	1.23575406810249e-09\\
};
\end{axis}
\node (Box1) [draw=none,fill=none] at (3, 4.35) {\shortstack[l]{ \ref{AFT_DFT} \footnotesize Tensor: DFT \\
\ref{AFT_Sum} \footnotesize Tensor: directional}};
\node (Box2) [draw=none,fill=none] at (1.75, 0.55) {\shortstack[l]{ \ref{AFM_DFT} \footnotesize Proposed: DFT \\
\ref{AFM_Sum} \footnotesize Proposed: directional}};
\end{tikzpicture}% \vspace{-5mm}
    \centering
    \caption{Channel estimation}
    \label{fig:CHCMP}
\end{subfigure}
\hfill
\begin{subfigure}[b]{0.31\textwidth}
    \centering
    % This file was created by matlab2tikz.
%
%The latest updates can be retrieved from
%  http://www.mathworks.com/matlabcentral/fileexchange/22022-matlab2tikz-matlab2tikz
%where you can also make suggestions and rate matlab2tikz.
%
\begin{tikzpicture}

\begin{axis}[%
width=0.84\textwidth,
height=0.84\textwidth,
at={(2,0.5)},
scale only axis,
xmin=-10,
xmax=40,
xlabel style={font=\color{white!15!black}},
xlabel={SNR (dB)},
ymin=0.0005,
ymax=20,
ymode=log,
ylabel style={font=\color{white!15!black}},
ylabel={RMSE (m)},
axis background/.style={fill=white},
xmajorgrids,
ymajorgrids
]
\addplot [color=blue, dotted, line width=1.0pt, mark size=1.5pt, mark=square*, mark options={solid, fill=blue, draw=blue}, forget plot]
  table[row sep=crcr]{%
-10	5.7289439583794\\
0	5.75206674794225\\
10	4.1297974344296\\
20	1.31549457200253\\
30	0.377569035507074\\
40	0.0791274695435019\\
};\label{POST_DFT}
\addplot [color=red, dotted, line width=1.0pt, mark size=1.5pt, mark=square*, mark options={solid, fill=red, draw=red}, forget plot]
  table[row sep=crcr]{%
-10	1.075765464132\\
0	0.149639362389055\\
10	0.0412247494097862\\
20	0.0119399513438065\\
30	0.0037350028629964\\
40	0.00122952461557811\\
};\label{POSM_DFT}
\addplot [color=blue, line width=1.0pt, mark size=1.5pt, mark=square*, mark options={solid, fill=blue, draw=blue}, forget plot]
  table[row sep=crcr]{%
-10	3.18609262625823\\
0	2.08866344645454\\
10	0.456297348643844\\
20	0.0868798248910842\\
30	0.0245597723378527\\
40	0.008754185657945\\
};\label{POST_SUM}
\addplot [color=red, line width=1.0pt, mark size=1.5pt, mark=square*, mark options={solid, fill=red, draw=red}, forget plot]
  table[row sep=crcr]{%
-10	0.486932525966813\\
0	0.0731994029192112\\
10	0.0166380200963032\\
20	0.00480806716565456\\
30	0.00157404663102637\\
40	0.000541172423780855\\
};\label{POSM_SUM}
\end{axis}
\node (Box1) [draw=none,fill=none] at (3, 4.35) {\shortstack[l]{ \ref{POST_DFT} \footnotesize Tensor: DFT \\
\ref{POST_SUM} \footnotesize Tensor: directional}};
\node (Box2) [draw=none,fill=none] at (1.75, 0.55) {\shortstack[l]{ \ref{POSM_DFT} \footnotesize Proposed: DFT \\
\ref{POSM_SUM} \footnotesize Proposed: directional}};
\end{tikzpicture}% \vspace{-5mm}
    \centering
    \caption{Position estimation}
    \label{fig:AOAAz_Sum}
\end{subfigure}
\end{minipage}
\caption{Performance comparison: 1 LOS and 1 NLOS, $N_{\mathrm{P}} = 32$, DFT beams and directional beams.} 
\label{fig:PerCMP}
\end{figure*}

\begin{figure}
\begin{minipage}[!htb]{0.48\textwidth}
    \centering
    % This file was created by matlab2tikz.
%
%The latest updates can be retrieved from
%  http://www.mathworks.com/matlabcentral/fileexchange/22022-matlab2tikz-matlab2tikz
%where you can also make suggestions and rate matlab2tikz.
%
\begin{tikzpicture}

\begin{axis}[%
width=0.8\textwidth,
height=0.3\textwidth,
at={(2,0.5)},
scale only axis,
xmin=2,
xmax=6,
xlabel style={font=\color{white!15!black}},
xlabel={Number of resolvable paths},
ymin=0.02,
ymax=2.5,
ymode=log,
ylabel style={font=\color{white!15!black}},
ylabel={CPU running time (s)},
axis background/.style={fill=white},
xmajorgrids,
ymajorgrids
]
\addplot [color=red, dotted, line width=1.0pt, mark size=1.5pt, mark=square*, mark options={solid, fill=red, draw=red}, forget plot]
  table[row sep=crcr]{%
2	0.0601329447350001\\
3	0.075777296626\\
4	0.103645846876\\
5	0.434756917563\\
6	2.332743532996\\
}; \label{RT_Tensor}
\addplot [color=blue, line width=1.0pt, mark size=1.5pt, mark=square*, mark options={solid, fill=blue, draw=blue}, forget plot]
  table[row sep=crcr]{%
2	0.049123353445\\
3	0.060994484258\\
4	0.094701099493\\
5	0.12006507139\\
6	0.129127882872\\
};\label{RT_Pro}
\end{axis}

%\begin{axis}[%
%width=0.2\textwidth,
%height=0.3\textwidth,
%at={(0.16\textwidth,0.245\textwidth)},
%scale only axis,
%xmin=2,
%xmax=4,
%ymin=0.04,
%ymax=0.103645846876,
%axis background/.style={fill=white},
%xmajorgrids,
%ymajorgrids
%]
%\addplot [color=red, dotted, line width=1.0pt, mark size=1.5pt, mark=square*, mark options={solid, fill=red, draw=red}, forget plot]
%  table[row sep=crcr]{%
%2	0.0601329447350001\\
%3	0.075777296626\\
%4	0.103645846876\\
%};
%\addplot [color=blue, line width=1.0pt, mark size=1.5pt, mark=square*, mark options={solid, fill=blue, draw=blue}, forget plot]
%  table[row sep=crcr]{%
%2	0.049123353445\\
%3	0.060994484258\\
%4	0.094701099493\\
%};
%\end{axis}
%\draw [help lines] (0,0) grid (8, 5);
\node (Box1) [draw=none,fill=none] at (1.2, 2.5) {\shortstack[l]{ \ref{RT_Tensor} \footnotesize Tensor \\
\ref{RT_Pro} \footnotesize Proposed }}; 
\end{tikzpicture}%
    \caption{Program running time performance.}
    \label{fig:RT}
\end{minipage}
\end{figure}
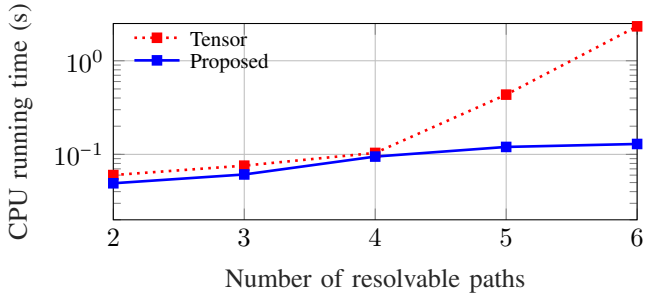
%------------------------------------------------------------------------------------------------------------------------

%------------------------------------------------------------------------------------------------------------------------
\subsection{Simultaneous Localization and Communication}
\subsubsection{Localization Performance}
In Fig. \ref{fig:Pos}, we present the localization performance with the same setup in Fig. \ref{fig:PerValidation}. From Fig. \ref{fig:Pos}, it is clear that the first-order perturbation of the position estimation is accurate (the absolute value of the difference is decreasing with SNR.) in high SNR region since higher orders can be reasonably ignored. Comparatively, the directional beams with closer beam separation can improve position estimation performance, which agrees with the analytical results. This can be attributed to the channel parameter estimation performance improvements as shown in Fig. \ref{fig:PerValidation}.
To better illustrate the gap between the analytical and the numerical results in position estimation performance, we refer to the approximation made in \eqref{MeqC2nd}, where the last three components are the second-order perturbations. Since $\Delta \boldsymbol{\mu}_l$ and $\Delta \boldsymbol{\delta}_l$ are correlated as indicated by \eqref{MeqC1} and \eqref{MeqC2}, the mean and covariance of the second-order perturbations are not zero, leading to the gap shown in Fig. \ref{fig:Pos}. This also indicates that a higher order perturbation analysis (for example, second order) is required for position estimation, which is left for future work.

\subsubsection{Achievable Rate}
Next, we evaluate the achievable rate performance of the system. The  effective achievable rate is computed as in \eqref{eq:achievableRate}. In Fig. \ref{fig:SumRate}, we show the effective achievable rate performance in the following conditions: 1) perfect channel state information (CSI) knowledge, 2) estimated CSI with the proposed method. It is obvious that the proposed beamspace ESPRIT approach achieves near-optimal achievable rate performance. This indicates that the proposed schemes reconstruct the optimal CSI for communications, agreeing with the channel estimation performance validated in Fig. \ref{fig:CHCMP}. In addition, the effective achievable rate performance with directional beams ($\pi/8$ beam separation) outperforms that with the DFT beams ($\pi/4$ beam separation), which is attributed to high effective SNR with the closer beam separation. Together with the localization performance in Fig. \ref{fig:Pos}, we conclude that the optimal spatial signal design for simultaneous localization and communications is required, which can be a future research topic.

%------------------------------------------------------------------------------------------------------------------------
\subsection{Performance Comparison to Existing Work}\label{SubSecVIE}
We now compare the estimation performance of the proposed method to existing work in \cite{WenGarKulWitWym18, WenSoWym20}. It is worth pointing out that the work in \cite{WenGarKulWitWym18, WenSoWym20} explores the ESPRIT approaches based on tensor decomposition, and has been applied to DFT beams and random beams for angular frequency estimations. We extend the tensor-based ESPRIT approaches in \cite{WenGarKulWitWym18, WenSoWym20} to arbitrary directional beams, and use it for channel parameter and position estimation. In the proposed method, we have applied the spatial smoothing over frequency domain, which potentially utilize the frequency domain samples more efficiently. As a result, the delay estimation performance in the proposed method can be significantly improved. Because of the improved delay estimation, the average angular frequency estimation performance over all dimensions is significantly improved, as shown in Fig. \ref{fig:AFCMP}. With the estimated channel parameters, we can also reconstruct the high dimensional channel $\boldsymbol{\mathcal{H}}$. It is clear from Fig. \ref{fig:CHCMP} that the proposed method achieves significantly improved channel estimation performance over the tensor decomposition based methods in \cite{WenGarKulWitWym18, WenSoWym20}. The channel parameter estimations can be used to infer user location, and it is quite obvious that the proposed method show much improved localization performance over the existing work. It is also worth pointing out that the directional beams with closer beam separation can significantly improve the estimation performance of the tensor-based ESPRIT approach. This is because in the tensor-based ESPRIT approach \cite{WenGarKulWitWym18, WenSoWym20}, using closer beam separation can significantly improve the angle estimation for multipath components. On the other hand, in the proposed scheme, we have employed the spatial smoothing techniques, which has improved the delay solution for multipath components. As a result, the performance improvement gap of using the directional beams with closer beam separation over the DFT beams is not as much as the tensor-based methods.

As indicated in Section \ref{secIVF}, the proposed method enjoys low complexity, particularly in multi-path scenarios. This can be well demonstrated with the CPU running time of the program, as shown in Fig. \ref{fig:RT}. We generate multiple scatter points from the scatter region, and evaluate the run time of the proposed method and the tensor-based method in \cite{WenGarKulWitWym18, WenSoWym20}. It is obvious that when the number of multi-path components increases, the tensor-based method requires considerable computation time, which is attributed to the high complexity in CP decomposition algorithm for high order tensors. On the other hand, the proposed method utilizes the FFT/IFFT algorithms to obtain the signal subspace, and the complexity almost grows linearly to the number of multi-path components. The results indicate that the proposed beamspace ESPRIT method has great potential in real-time positioning applications.

\section{Conclusion}
In this paper, we have proposed a novel beamspace multi-dimensional ESPRIT method for simultaneous localization and communications. The proposed method employs spatial smoothing techniques over the frequency domain and can work with arbitrary beamforming and combining matrices used at transmitter and receiver sides. In addition, we have proposed a low-complexity method to obtain the signal subspace by using the properties of Hankel matrices. With the first-order perturbation analysis, we have successfully obtained the closed-form estimation performance of the proposed method. Through numerical simulations, we have shown that the proposed method achieves promising estimation performance with both DFT and directional beams used in the system, and the numerical results match well with the analytical results. Moreover, the directional beams with closer beam separation can further improve the estimation performance. We have also shown that the program running time of the proposed method outperforms existing tensor decomposition based ESPRIT method, especially in reasonable rich multi-path scenarios. Finally, we have investigated the achievable rate performance and shown that the proposed method achieves near-optimal achievable rate performance.

\appendices
\section{Factorization of $\boldsymbol{H}$}\label{AppendixA}
From \eqref{eq10} and \eqref{eq19}, the $\ell_5$-th column of $\boldsymbol{H}$ is given by
\begin{align*}
\boldsymbol{J}_{M_1, M_2, M_3, M_4, \ell_5} \boldsymbol{h}^{\left( \mathrm{b} \right) } = & \left( \boldsymbol{I}_{M_1} \otimes \boldsymbol{I}_{M_2} \otimes \boldsymbol{I}_{M_3} \otimes \boldsymbol{I}_{M_4} \otimes \boldsymbol{J}_{\ell_5} \right) \\
& \times \left( \boldsymbol{B}_1^{\left( N_1 \right)} \odot \boldsymbol{B}_2^{\left( N_2 \right)} \odot \cdots \odot \boldsymbol{B}_5^{\left( N_5 \right)} \right) \boldsymbol{\gamma} \\
= & \left( \boldsymbol{B}_1^{\left( N_1 \right)} \odot \boldsymbol{B}_2^{\left( N_2 \right)} \odot \cdots \odot \boldsymbol{A}_5^{\left( K_5 \right)} \right) \\
& \times \mathrm{Diag}\left( \boldsymbol{\omega}_5^{\ell_5 - 1} \right) \boldsymbol{\gamma},
\end{align*}
where 
%\begin{align*}
$\boldsymbol{\omega}_5^{\ell} = \left[ \exp \left( \jmath \ell \omega_{1, 5} \right), \exp \left( \jmath \ell \omega_{2, 5} \right), \cdots, \exp \left( \jmath \ell \omega_{L, 5} \right) \right]^{\mathrm{T}}$.  
%\end{align*}
Therefore, we have
\begin{align*}
\boldsymbol{H} = & \left[ \boldsymbol{P} \mathrm{Diag}\left( \boldsymbol{\omega}_5^0 \right) \boldsymbol{\gamma}, \boldsymbol{P} \mathrm{Diag}\left( \boldsymbol{\omega}_5^1 \right) \boldsymbol{\gamma}, \cdots, \boldsymbol{P} \mathrm{Diag}\left( \boldsymbol{\omega}_5^{L_5 - 1} \right) \boldsymbol{\gamma}
\right] \\
= & \boldsymbol{P} \mathrm{Diag}\left( \boldsymbol{\gamma} \right) \left[\boldsymbol{\omega}_5^{0}, \boldsymbol{\omega}_5^{1}, \cdots, \boldsymbol{\omega}_5^{L_5 - 1} \right] \\
= & \boldsymbol{P} \mathrm{Diag}\left( \boldsymbol{\gamma} \right) \boldsymbol{G},
\end{align*}
where $\boldsymbol{P}$ and $\boldsymbol{G}$ are given in \eqref{eq21} and \eqref{eq22}, respectively.

\section{Proofs} \label{AppendixB}
%-----------------------------------------------------------------------------
\subsection{Proof of Proposition \ref{propA1}} \label{ProofpropA1}
For $n= 1, 2, 3, 4$, using the properties of the Khatri-Rao products \cite{LiuLiu06}, we have
\begin{align*}
\breve{\boldsymbol{J}}_{n, 1}\boldsymbol{P} \boldsymbol{\Phi}_n & = \left( \boldsymbol{I}_{N_1} \otimes \cdots \otimes \boldsymbol{L}_{n, 1} \otimes \cdots \otimes \boldsymbol{I}_{N_4} \otimes \boldsymbol{I}_{K_5} \right) \\
& \hspace{-5mm}\times \left( \boldsymbol{B}_1^{\left( N_1 \right)} \odot \cdots \odot \boldsymbol{B}_n^{\left( N_n \right)} \odot \cdots \odot \boldsymbol{B}_4^{\left( N_4 \right)} \odot \boldsymbol{A}_5^{\left( K_5 \right)} \right) \boldsymbol{\Phi}_n \\
& \hspace{-5mm} = \boldsymbol{B}_1^{\left( N_1 \right)} \odot \cdots \odot \boldsymbol{L}_{n, 1} \boldsymbol{B}_n^{\left( N_n \right)}\boldsymbol{\Phi}_n \odot \cdots \odot \boldsymbol{B}_4^{\left( N_4 \right)} \odot \boldsymbol{A}_5^{\left( K_5 \right)} \\
& \hspace{-5mm} = \boldsymbol{B}_1^{\left( N_1 \right)} \odot \cdots \odot \boldsymbol{L}_{n, 2} \boldsymbol{B}_n^{\left( N_n \right)} \odot \cdots \odot \boldsymbol{B}_4^{\left( N_4 \right)} \odot \boldsymbol{A}_5^{\left( K_5 \right)} \\
& \hspace{-5mm} = \breve{\boldsymbol{J}}_{n, 2}\boldsymbol{P},
\end{align*}
where the results $\boldsymbol{L}_{n, 1} \boldsymbol{B}_n^{\left( N_n \right)}\boldsymbol{\Phi}_n = \boldsymbol{L}_{n, 2} \boldsymbol{B}_n^{\left( N_n \right)}$ in Proposition \ref{Th1} is used in the above derivation. Following the similar process, we also have $\breve{\boldsymbol{J}}_{5, 1}\boldsymbol{P} \boldsymbol{\Phi}_5 = \breve{\boldsymbol{J}}_{5, 2}\boldsymbol{P}$.

%-----------------------------------------------------------------------------
\subsection{Proof of Lemma \ref{lem1}} \label{ProofLem1}
To begin with, we introduce the following results for the perturbation of signal subspace $\boldsymbol{U}_{\mathrm{s}}$.
\begin{lemma} \label{MlemB1}
Given the perturbed $\tilde{\boldsymbol{H}}$ in \eqref{eq48}, and the subspace decomposition as
\begin{align*}
\tilde{\boldsymbol{H}} = \tilde{\boldsymbol{U}}_{\mathrm{s}} \tilde{\boldsymbol{\Sigma}}_{\mathrm{s}} \tilde{\boldsymbol{V}}_{\mathrm{s}}^{\mathrm{H}} + \tilde{\boldsymbol{U}}_{\mathrm{n}} \tilde{\boldsymbol{\Sigma}}_{\mathrm{n}} \tilde{\boldsymbol{V}}_{\mathrm{n}}^{\mathrm{H}},
\end{align*}
the first-order approximation of the $\Delta \boldsymbol{U}_{\mathrm{s}} = \tilde{\boldsymbol{U}}_{\mathrm{s}} - \boldsymbol{U}_{\mathrm{s}}$ is given by
\begin{align*}
\Delta \boldsymbol{U}_{\mathrm{s}} = \boldsymbol{U}_{\mathrm{n}}\boldsymbol{U}_{\mathrm{n}}^{\mathrm{H}} \Delta \boldsymbol{H} \boldsymbol{V}_{\mathrm{s}} \boldsymbol{\Sigma}_{\mathrm{s}}^{-1} + \boldsymbol{U}_{\mathrm{s}} \boldsymbol{R}, \label{MeqB1} \numberthis
\end{align*}
where $\boldsymbol{R}$ is an anti-Hermitian matrix that depends on $\Delta \boldsymbol{H}$.
\end{lemma}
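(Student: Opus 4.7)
The plan is to treat this as a standard first-order matrix perturbation argument for the left singular subspace, working directly from the SVD defining equations rather than invoking an external theorem. The key idea is to expand the perturbed left singular vectors in the complete orthonormal basis $[\boldsymbol{U}_{\mathrm{s}}, \boldsymbol{U}_{\mathrm{n}}]$ of the noiseless problem, constrain the in-subspace part by orthonormality, and pin down the out-of-subspace part by the defining relation $\boldsymbol{H}\boldsymbol{V}_{\mathrm{s}}=\boldsymbol{U}_{\mathrm{s}}\boldsymbol{\Sigma}_{\mathrm{s}}$ perturbed to first order.

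Concretely, I would start from $\tilde{\boldsymbol{H}}\tilde{\boldsymbol{V}}_{\mathrm{s}}=\tilde{\boldsymbol{U}}_{\mathrm{s}}\tilde{\boldsymbol{\Sigma}}_{\mathrm{s}}$ and the noiseless counterpart, write $\tilde{\boldsymbol{U}}_{\mathrm{s}}=\boldsymbol{U}_{\mathrm{s}}+\Delta\boldsymbol{U}_{\mathrm{s}}$, and similarly for $\boldsymbol{V}_{\mathrm{s}}$ and $\boldsymbol{\Sigma}_{\mathrm{s}}$. Subtracting and retaining only first-order terms gives
\begin{align*}
\boldsymbol{H}\,\Delta\boldsymbol{V}_{\mathrm{s}} + \Delta\boldsymbol{H}\,\boldsymbol{V}_{\mathrm{s}} = \Delta\boldsymbol{U}_{\mathrm{s}}\,\boldsymbol{\Sigma}_{\mathrm{s}} + \boldsymbol{U}_{\mathrm{s}}\,\Delta\boldsymbol{\Sigma}_{\mathrm{s}}.
\end{align*}
Using the completeness relation $\boldsymbol{U}_{\mathrm{s}}\boldsymbol{U}_{\mathrm{s}}^{\mathrm{H}} + \boldsymbol{U}_{\mathrm{n}}\boldsymbol{U}_{\mathrm{n}}^{\mathrm{H}} = \boldsymbol{I}$, I decompose $\Delta\boldsymbol{U}_{\mathrm{s}} = \boldsymbol{U}_{\mathrm{s}}\boldsymbol{R} + \boldsymbol{U}_{\mathrm{n}}\boldsymbol{S}$ with $\boldsymbol{R}=\boldsymbol{U}_{\mathrm{s}}^{\mathrm{H}}\Delta\boldsymbol{U}_{\mathrm{s}}$ and $\boldsymbol{S}=\boldsymbol{U}_{\mathrm{n}}^{\mathrm{H}}\Delta\boldsymbol{U}_{\mathrm{s}}$; the orthonormality constraint $\tilde{\boldsymbol{U}}_{\mathrm{s}}^{\mathrm{H}}\tilde{\boldsymbol{U}}_{\mathrm{s}}=\boldsymbol{I}$ linearized gives $\boldsymbol{R}+\boldsymbol{R}^{\mathrm{H}}=\boldsymbol{0}$, so $\boldsymbol{R}$ is anti-Hermitian, which yields the second term in the claimed expression.

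The remaining step is to isolate $\boldsymbol{S}$. I would pre-multiply the linearized SVD equation above by $\boldsymbol{U}_{\mathrm{n}}^{\mathrm{H}}$. Because $\boldsymbol{U}_{\mathrm{n}}^{\mathrm{H}}\boldsymbol{H} = \boldsymbol{\Sigma}_{\mathrm{n}}\boldsymbol{V}_{\mathrm{n}}^{\mathrm{H}} = \boldsymbol{0}$ (the noise singular values vanish in the noiseless $\boldsymbol{H}$) and $\boldsymbol{U}_{\mathrm{n}}^{\mathrm{H}}\boldsymbol{U}_{\mathrm{s}}=\boldsymbol{0}$, the equation collapses to $\boldsymbol{U}_{\mathrm{n}}^{\mathrm{H}}\Delta\boldsymbol{H}\,\boldsymbol{V}_{\mathrm{s}} = \boldsymbol{S}\,\boldsymbol{\Sigma}_{\mathrm{s}}$. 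Solving for $\boldsymbol{S}$ and substituting $\boldsymbol{U}_{\mathrm{n}}\boldsymbol{S}=\boldsymbol{U}_{\mathrm{n}}\boldsymbol{U}_{\mathrm{n}}^{\mathrm{H}}\Delta\boldsymbol{H}\,\boldsymbol{V}_{\mathrm{s}}\,\boldsymbol{\Sigma}_{\mathrm{s}}^{-1}$ into the decomposition of $\Delta\boldsymbol{U}_{\mathrm{s}}$ delivers exactly \eqref{MeqB1}.

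There is no truly hard step here; the only subtlety I expect is bookkeeping the first-order expansion while remembering that $\boldsymbol{V}_{\mathrm{s}},\boldsymbol{\Sigma}_{\mathrm{s}}$ are also perturbed, and justifying that their perturbations drop out of the $\boldsymbol{U}_{\mathrm{n}}$-component via the orthogonality $\boldsymbol{U}_{\mathrm{n}}^{\mathrm{H}}\boldsymbol{U}_{\mathrm{s}}=\boldsymbol{0}$ and the vanishing of $\boldsymbol{\Sigma}_{\mathrm{n}}$. Invertibility of $\boldsymbol{\Sigma}_{\mathrm{s}}$ is guaranteed by the fact that its diagonal entries are the $L$ nonzero singular values of $\boldsymbol{H}$, which holds under the generic identifiability assumptions on $\boldsymbol{P}$ and $\boldsymbol{G}$ used earlier in the paper; the $\boldsymbol{R}$ term, being arbitrary anti-Hermitian up to second-order corrections, cancels in any subsequent use of $\Delta\boldsymbol{U}_{\mathrm{s}}$ where $\breve{\boldsymbol{J}}_{n,2}-\Phi_{l,n}\breve{\boldsymbol{J}}_{n,1}$ multiplies $\boldsymbol{U}_{\mathrm{s}}$, which is precisely how Lemma \ref{lem1} will eliminate it.
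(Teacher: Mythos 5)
Your derivation is correct. Note, however, that the paper does not actually prove this lemma in-line: its entire ``proof'' is a citation to the reference \cite{SahUseCom17}, so there is no internal argument to compare yours against. What you supply is the standard first-order perturbation of the left singular subspace, and every step checks out: the linearized relation $\boldsymbol{H}\,\Delta\boldsymbol{V}_{\mathrm{s}} + \Delta\boldsymbol{H}\,\boldsymbol{V}_{\mathrm{s}} = \Delta\boldsymbol{U}_{\mathrm{s}}\,\boldsymbol{\Sigma}_{\mathrm{s}} + \boldsymbol{U}_{\mathrm{s}}\,\Delta\boldsymbol{\Sigma}_{\mathrm{s}}$ is right; the decomposition $\Delta\boldsymbol{U}_{\mathrm{s}} = \boldsymbol{U}_{\mathrm{s}}\boldsymbol{R} + \boldsymbol{U}_{\mathrm{n}}\boldsymbol{S}$ with $\boldsymbol{R}+\boldsymbol{R}^{\mathrm{H}}=\boldsymbol{0}$ from linearized orthonormality is the key observation; and projecting onto $\boldsymbol{U}_{\mathrm{n}}$ correctly kills both $\boldsymbol{U}_{\mathrm{n}}^{\mathrm{H}}\boldsymbol{H}$ (because $\boldsymbol{\Sigma}_{\mathrm{n}}=\boldsymbol{0}$ for the noiseless rank-$L$ matrix, as the paper states after \eqref{eq47}) and the $\Delta\boldsymbol{\Sigma}_{\mathrm{s}}$, $\Delta\boldsymbol{V}_{\mathrm{s}}$ terms, isolating $\boldsymbol{S}=\boldsymbol{U}_{\mathrm{n}}^{\mathrm{H}}\Delta\boldsymbol{H}\,\boldsymbol{V}_{\mathrm{s}}\boldsymbol{\Sigma}_{\mathrm{s}}^{-1}$. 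Your closing remarks are also apt: $\boldsymbol{\Sigma}_{\mathrm{s}}$ is invertible because $\boldsymbol{H}=\boldsymbol{P}\,\mathrm{Diag}(\boldsymbol{\gamma})\,\boldsymbol{G}^{\mathrm{T}}$ has rank $L$ under the standing assumptions, and the anti-Hermitian $\boldsymbol{R}$ absorbs the inherent non-uniqueness of the perturbed singular basis and is indeed cancelled downstream in the proof of Lemma \ref{lem1}. The one thing your write-up gains over the paper is self-containedness; the one thing it costs is length, which is presumably why the authors chose to cite rather than derive.
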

\begin{proof}
The proof can be seen in \cite{SahUseCom17}.
\end{proof}
With $\Delta \boldsymbol{U}_{\mathrm{s}}$, the first-order approximation of $\Delta \boldsymbol{\Gamma}_n = \tilde{\boldsymbol{\Gamma}}_n - \boldsymbol{\Gamma}_n$ can be obtained as follows. 
\begin{lemma} \label{MlemB2}
$\Delta \boldsymbol{\Gamma}_n$ is given by
\begin{align*}
\Delta \boldsymbol{\Gamma}_n = \left( \breve{\boldsymbol{J}}_{n, 1}\boldsymbol{U}_{\mathrm{s}} \right)^{\dagger} \left(\breve{\boldsymbol{J}}_{n, 2} \Delta \boldsymbol{U}_{\mathrm{s}} - \breve{\boldsymbol{J}}_{n, 1} \Delta \boldsymbol{U}_{\mathrm{s}} \boldsymbol{\Gamma}_n \right). \label{MeqB2} \numberthis
\end{align*}
\end{lemma}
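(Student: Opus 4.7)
The plan is to derive the perturbation $\Delta\boldsymbol{\Gamma}_n$ by substituting the perturbed quantities into the defining shift-invariance relation \eqref{eq26}, expanding, and then keeping only first-order terms.

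First I would start from the noise-free identity $\breve{\boldsymbol{J}}_{n,1}\boldsymbol{U}_{\mathrm{s}}\boldsymbol{\Gamma}_n=\breve{\boldsymbol{J}}_{n,2}\boldsymbol{U}_{\mathrm{s}}$, which holds exactly in the absence of noise by Proposition \ref{propA1} combined with $\boldsymbol{P}=\boldsymbol{U}_{\mathrm{s}}\boldsymbol{E}$ and \eqref{eq27}. The perturbed counterpart is $\breve{\boldsymbol{J}}_{n,1}\tilde{\boldsymbol{U}}_{\mathrm{s}}\tilde{\boldsymbol{\Gamma}}_n=\breve{\boldsymbol{J}}_{n,2}\tilde{\boldsymbol{U}}_{\mathrm{s}}$ (in the least-squares sense) with $\tilde{\boldsymbol{U}}_{\mathrm{s}}=\boldsymbol{U}_{\mathrm{s}}+\Delta\boldsymbol{U}_{\mathrm{s}}$ and $\tilde{\boldsymbol{\Gamma}}_n=\boldsymbol{\Gamma}_n+\Delta\boldsymbol{\Gamma}_n$. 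Substituting and expanding yields four terms on the left and two on the right; cancelling the zeroth-order terms via the unperturbed identity and discarding the second-order product $\breve{\boldsymbol{J}}_{n,1}\Delta\boldsymbol{U}_{\mathrm{s}}\Delta\boldsymbol{\Gamma}_n$ leaves
\begin{equation*}
\breve{\boldsymbol{J}}_{n,1}\boldsymbol{U}_{\mathrm{s}}\,\Delta\boldsymbol{\Gamma}_n=\breve{\boldsymbol{J}}_{n,2}\Delta\boldsymbol{U}_{\mathrm{s}}-\breve{\boldsymbol{J}}_{n,1}\Delta\boldsymbol{U}_{\mathrm{s}}\boldsymbol{\Gamma}_n.
\end{equation*}

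The final step is to left-multiply by $(\breve{\boldsymbol{J}}_{n,1}\boldsymbol{U}_{\mathrm{s}})^{\dagger}$. This requires that $\breve{\boldsymbol{J}}_{n,1}\boldsymbol{U}_{\mathrm{s}}$ is full column rank; I would justify this by noting that $\boldsymbol{U}_{\mathrm{s}}$ spans the same $L$-dimensional column space as $\boldsymbol{P}$, and under the shift-invariance hypothesis of Proposition \ref{propA1} the matrix $\breve{\boldsymbol{J}}_{n,1}\boldsymbol{P}$ is full column rank for generic angular frequencies, so the same holds for $\breve{\boldsymbol{J}}_{n,1}\boldsymbol{U}_{\mathrm{s}}$ (they differ by the invertible change of basis $\boldsymbol{E}$). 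Applying the pseudoinverse then yields the claimed expression for $\Delta\boldsymbol{\Gamma}_n$.

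The main potential obstacle is subtle rather than computational: one must be careful that the least-squares relation \eqref{eq28} defining $\tilde{\boldsymbol{\Gamma}}_n$ is only approximate in the noisy case, so the starting identity $\breve{\boldsymbol{J}}_{n,1}\tilde{\boldsymbol{U}}_{\mathrm{s}}\tilde{\boldsymbol{\Gamma}}_n=\breve{\boldsymbol{J}}_{n,2}\tilde{\boldsymbol{U}}_{\mathrm{s}}$ is not exact. The standard resolution is to argue that the residual lies in the orthogonal complement of the range of $\breve{\boldsymbol{J}}_{n,1}\tilde{\boldsymbol{U}}_{\mathrm{s}}$ and is therefore annihilated by the left-multiplication with the pseudoinverse, so that the displayed first-order expression still holds up to $o(\|\Delta\boldsymbol{H}\|)$. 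Beyond this bookkeeping, the derivation is essentially algebraic and mirrors the classical ESPRIT perturbation arguments in \cite{LiuLiu06,SahUseCom17}.
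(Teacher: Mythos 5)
Your derivation is correct and reaches the stated formula, but it travels a slightly different road than the paper. The paper starts from the closed-form least-squares expression \eqref{eq28}, applies the product rule together with the explicit first-order expansion of the Moore--Penrose pseudoinverse, $\Delta ( \boldsymbol{A}^{\dagger} )=-\boldsymbol{A}^{\dagger} \Delta\boldsymbol{A}\, \boldsymbol{A}^{\dagger} + ( \boldsymbol{A}^{\mathrm{H}} \boldsymbol{A} )^{-1} \Delta\boldsymbol{A}^{\mathrm{H}} ( \boldsymbol{I} - \boldsymbol{A} \boldsymbol{A}^{\dagger} )$ with $\boldsymbol{A}=\breve{\boldsymbol{J}}_{n,1}\boldsymbol{U}_{\mathrm{s}}$, and then kills the second term of that expansion using the exact range identity $( \boldsymbol{I} - \breve{\boldsymbol{J}}_{n, 1} \boldsymbol{U}_{\mathrm{s}} ( \breve{\boldsymbol{J}}_{n, 1} \boldsymbol{U}_{\mathrm{s}} )^{\dagger} ) \breve{\boldsymbol{J}}_{n, 2} \boldsymbol{U}_{\mathrm{s}} = \boldsymbol{0}$, which follows from the noise-free invariance $\breve{\boldsymbol{J}}_{n,2}\boldsymbol{U}_{\mathrm{s}}=\breve{\boldsymbol{J}}_{n,1}\boldsymbol{U}_{\mathrm{s}}\boldsymbol{\Gamma}_n$. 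You instead perturb the invariance relation \eqref{eq26} itself, cancel the zeroth-order terms, drop the quadratic term $\breve{\boldsymbol{J}}_{n,1}\Delta\boldsymbol{U}_{\mathrm{s}}\Delta\boldsymbol{\Gamma}_n$, and left-multiply by the pseudoinverse at the end. The two routes are equivalent in substance and give identical first-order results; what each buys is a different place to pay the technical toll. The paper needs the pseudoinverse-perturbation lemma but works with an exact definition of $\tilde{\boldsymbol{\Gamma}}_n$ throughout, so no residual bookkeeping arises. Your route avoids that lemma entirely, but must confront the fact that $\breve{\boldsymbol{J}}_{n,1}\tilde{\boldsymbol{U}}_{\mathrm{s}}\tilde{\boldsymbol{\Gamma}}_n=\breve{\boldsymbol{J}}_{n,2}\tilde{\boldsymbol{U}}_{\mathrm{s}}$ holds only up to a least-squares residual; you correctly flag this, and the argument closes because the residual is exactly annihilated by $(\breve{\boldsymbol{J}}_{n,1}\tilde{\boldsymbol{U}}_{\mathrm{s}})^{\dagger}$, while the pseudoinverse you actually apply, $(\breve{\boldsymbol{J}}_{n,1}\boldsymbol{U}_{\mathrm{s}})^{\dagger}$, differs from it only at first order, so the contribution is $O(\|\Delta\boldsymbol{H}\|^2)$ (strictly speaking it is not ``annihilated'' but suppressed to second order). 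Your full-column-rank remark on $\breve{\boldsymbol{J}}_{n,1}\boldsymbol{P}$, hence on $\breve{\boldsymbol{J}}_{n,1}\boldsymbol{U}_{\mathrm{s}}=\breve{\boldsymbol{J}}_{n,1}\boldsymbol{P}\boldsymbol{E}^{-1}$, is an assumption the paper also needs but leaves implicit.
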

\begin{proof}
Using the fact that $\Delta \left( \boldsymbol{A}^{\dagger} \right)=-\boldsymbol{A}^{\dagger} \Delta\boldsymbol{A} \boldsymbol{A}^{\dagger} +  \left( \boldsymbol{A}^{\mathrm{H}} \boldsymbol{A} \right)^{-1} \Delta\boldsymbol{A}^{\mathrm{H}} \left( \boldsymbol{I} - \boldsymbol{A} \boldsymbol{A}^{\dagger} \right)$, we find  
from \eqref{eq28}  and using $( \boldsymbol{I} - \breve{\boldsymbol{J}}_{n, 1} \boldsymbol{U}_{\mathrm{s}} ( \breve{\boldsymbol{J}}_{n, 1} \boldsymbol{U}_{\mathrm{s}} ) ^{\dagger} ) \breve{\boldsymbol{J}}_{n, 2} \boldsymbol{U}_{\mathrm{s}} = \boldsymbol{0}$, we obtain
\begin{align*}
\Delta \boldsymbol{\Gamma}_n = & \Delta \left( \breve{\boldsymbol{J}}_{n, 1} \boldsymbol{U}_{\mathrm{s}} \right)^{\dagger} \breve{\boldsymbol{J}}_{n, 2} \boldsymbol{U}_{\mathrm{s}} + \left( \breve{\boldsymbol{J}}_{n, 1} \boldsymbol{U}_{\mathrm{s}} \right)^{\dagger} \breve{\boldsymbol{J}}_{n, 2} \Delta \boldsymbol{U}_{\mathrm{s}} 
\end{align*}
from which \eqref{MeqB2} follows. 
%\begin{align*}
%\left( \boldsymbol{I} - \breve{\boldsymbol{J}}_{n, 1} \boldsymbol{U}_{\mathrm{s}} \left( \breve{\boldsymbol{J}}_{n, 1} \boldsymbol{U}_{\mathrm{s}} \right) ^{\dagger} \right) \breve{\boldsymbol{J}}_{n, 2} \boldsymbol{U}_{\mathrm{s}} = \boldsymbol{0}.
%\end{align*}
\end{proof}
Denote $\boldsymbol{E} = [\boldsymbol{e}_1, \boldsymbol{e}_2, \cdots, \boldsymbol{e}_L]$ and $\boldsymbol{E}^{-1} = [\boldsymbol{\epsilon}_1, \boldsymbol{\epsilon}_2, \cdots, \boldsymbol{\epsilon}_L]^{\mathrm{T}}$. Then the first-order approximation of $\Delta \Phi_{l, n} = \tilde{\Phi}_{l, n} - {\Phi}_{l, n}$ is given as follows.
\begin{lemma} \label{MlemB3}
$\Delta \Phi_{l, n}$ is given by
\begin{align*}
\Delta \Phi_{l, n} = \boldsymbol{\epsilon}_l^{\mathrm{T}} \Delta \boldsymbol{\Gamma}_n \boldsymbol{e}_l. \label{MeqB3} \numberthis
\end{align*}
\end{lemma}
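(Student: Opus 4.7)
\textbf{Proof proposal for Lemma \ref{MlemB3}.} The plan is to treat this as a textbook first-order eigenvalue perturbation argument, once the correct biorthogonal interpretation of $\boldsymbol{E}$ and $\boldsymbol{E}^{-1}$ is identified. From \eqref{eq27} we have the exact decomposition $\boldsymbol{\Gamma}_n = \boldsymbol{E}\boldsymbol{\Phi}_n \boldsymbol{E}^{-1}$, so the columns $\boldsymbol{e}_l$ of $\boldsymbol{E}$ are right eigenvectors of $\boldsymbol{\Gamma}_n$ with eigenvalues $\Phi_{l,n}$, the rows $\boldsymbol{\epsilon}_l^{\mathrm{T}}$ of $\boldsymbol{E}^{-1}$ are the corresponding left eigenvectors, and $\boldsymbol{E}^{-1}\boldsymbol{E}=\boldsymbol{I}_L$ yields the biorthogonality $\boldsymbol{\epsilon}_l^{\mathrm{T}}\boldsymbol{e}_k=\delta_{lk}$.

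First I would write the perturbed right eigenproblem $(\boldsymbol{\Gamma}_n+\Delta\boldsymbol{\Gamma}_n)(\boldsymbol{e}_l+\Delta\boldsymbol{e}_l) = (\Phi_{l,n}+\Delta\Phi_{l,n})(\boldsymbol{e}_l+\Delta\boldsymbol{e}_l)$, expand, drop second-order terms, and subtract the zeroth-order identity $\boldsymbol{\Gamma}_n\boldsymbol{e}_l=\Phi_{l,n}\boldsymbol{e}_l$, which leaves
\begin{equation*}
\boldsymbol{\Gamma}_n\Delta\boldsymbol{e}_l + \Delta\boldsymbol{\Gamma}_n\boldsymbol{e}_l = \Phi_{l,n}\Delta\boldsymbol{e}_l + \Delta\Phi_{l,n}\boldsymbol{e}_l.
\end{equation*}
Premultiplying by $\boldsymbol{\epsilon}_l^{\mathrm{T}}$ and using the left eigenvector relation $\boldsymbol{\epsilon}_l^{\mathrm{T}}\boldsymbol{\Gamma}_n=\Phi_{l,n}\boldsymbol{\epsilon}_l^{\mathrm{T}}$ cancels the $\Delta\boldsymbol{e}_l$ contributions, and biorthogonality $\boldsymbol{\epsilon}_l^{\mathrm{T}}\boldsymbol{e}_l=1$ yields \eqref{MeqB3} directly.

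The only subtlety, and the place where care is needed, is that Algorithm \ref{alg:AFE} does not diagonalize each $\tilde{\boldsymbol{\Gamma}}_n$ independently; instead it extracts $\tilde{\boldsymbol{E}}$ from the eigendecomposition of $\tilde{\boldsymbol{K}}$ in \eqref{eq30} and then reads off $\tilde{\Phi}_{l,n}$ as the $(l,l)$ diagonal entry of $\tilde{\boldsymbol{E}}^{-1}\tilde{\boldsymbol{\Gamma}}_n\tilde{\boldsymbol{E}}$, as stated in \eqref{eq32}. I would address this by noting that in the unperturbed case $\boldsymbol{E}$ simultaneously diagonalizes all $\boldsymbol{\Gamma}_n$ (they share eigenvectors by construction), so to first order the shared eigenvector matrix $\tilde{\boldsymbol{E}}$ still acts as a left/right eigenvector basis for each $\tilde{\boldsymbol{\Gamma}}_n$, and $\tilde{\Phi}_{l,n}=\tilde{\boldsymbol{\epsilon}}_l^{\mathrm{T}}\tilde{\boldsymbol{\Gamma}}_n\tilde{\boldsymbol{e}}_l$. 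Expanding this bilinear form and keeping only first-order terms gives exactly $\Phi_{l,n}+\boldsymbol{\epsilon}_l^{\mathrm{T}}\Delta\boldsymbol{\Gamma}_n\boldsymbol{e}_l$, because the two cross terms $\Delta\boldsymbol{\epsilon}_l^{\mathrm{T}}\boldsymbol{\Gamma}_n\boldsymbol{e}_l$ and $\boldsymbol{\epsilon}_l^{\mathrm{T}}\boldsymbol{\Gamma}_n\Delta\boldsymbol{e}_l$ combine via the eigenvalue relations into $\Phi_{l,n}\Delta(\boldsymbol{\epsilon}_l^{\mathrm{T}}\boldsymbol{e}_l)$, which vanishes since $\boldsymbol{\epsilon}_l^{\mathrm{T}}\boldsymbol{e}_l\equiv 1$ is preserved under perturbation. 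The hardest (but still short) part of the argument is formalizing this cancellation cleanly so that one does not have to track $\Delta\boldsymbol{e}_l$ explicitly; the biorthogonality constraint is the lever that makes it work.
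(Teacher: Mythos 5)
Your proof is correct and is essentially the paper's own argument: the paper differentiates the similarity relation $\boldsymbol{\Phi}_n=\boldsymbol{E}^{-1}\boldsymbol{\Gamma}_n\boldsymbol{E}$ from \eqref{eq32} to obtain $\Delta\boldsymbol{\Phi}_n=\boldsymbol{E}^{-1}\Delta\boldsymbol{\Gamma}_n\boldsymbol{E}+\boldsymbol{\Phi}_n\boldsymbol{E}^{-1}\Delta\boldsymbol{E}-\boldsymbol{E}^{-1}\Delta\boldsymbol{E}\boldsymbol{\Phi}_n$ and notes that the last two terms cancel on the diagonal, which is precisely your element-wise cancellation $\Phi_{l,n}\Delta(\boldsymbol{\epsilon}_l^{\mathrm{T}}\boldsymbol{e}_l)=0$. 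Your closing bilinear-form expansion of $\tilde{\boldsymbol{\epsilon}}_l^{\mathrm{T}}\tilde{\boldsymbol{\Gamma}}_n\tilde{\boldsymbol{e}}_l$ (rather than the opening eigenproblem expansion) is the part that matches the paper's use of \eqref{eq32}, and you were right to flag that $\tilde{\boldsymbol{E}}$ comes from $\tilde{\boldsymbol{K}}$ rather than from each $\tilde{\boldsymbol{\Gamma}}_n$ individually, a point the paper's proof leaves implicit.
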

\begin{proof}
From \eqref{eq32}, we have
\begin{align*}
\Delta \boldsymbol{\Phi}_n = & \boldsymbol{E}^{-1} \Delta \boldsymbol{\Gamma}_n \boldsymbol{E} + \boldsymbol{E}^{-1} \boldsymbol{\Gamma}_n \Delta \boldsymbol{E} - \boldsymbol{E}^{-1} \Delta \boldsymbol{E} \boldsymbol{E}^{-1} \boldsymbol{\Gamma}_n \boldsymbol{E} \\
= & \boldsymbol{E}^{-1} \Delta \boldsymbol{\Gamma}_n \boldsymbol{E} + \boldsymbol{\Phi}_n \boldsymbol{E}^{-1} \Delta \boldsymbol{E} - \boldsymbol{E}^{-1} \Delta \boldsymbol{E} \boldsymbol{\Phi}_n. \numberthis
\end{align*} 
We extract the diagonal element from the right hand side, and the last two terms are canceled, leading to the results in \eqref{MeqB3}.
\end{proof}

In order to prove Lemma \ref{lem1}, we apply Lemma \ref{MlemB1} to Lemmas \ref{MlemB2}--\ref{MlemB3}. We define $\boldsymbol{C} = \Delta \boldsymbol{H} \boldsymbol{V}_{\mathrm{s}} \boldsymbol{\Sigma}_{\mathrm{s}}^{-1}$, which allows us to write: %Using the results in \eqref{MeqB1}, we then have $\Delta \boldsymbol{\Gamma}_n$ as
\begin{align}
\Delta \boldsymbol{\Gamma}_n% = & \left( \breve{\boldsymbol{J}}_{n, 1} \boldsymbol{U}_{\mathrm{s}} \right)^{\dagger} \big(\breve{\boldsymbol{J}}_{n, 2} \left( \boldsymbol{I} - \boldsymbol{U}_{\mathrm{s}}\boldsymbol{U}_{\mathrm{s}}^{\mathrm{H}} \right) \boldsymbol{C} - \breve{\boldsymbol{J}}_{n, 1} \left( \boldsymbol{I} - \boldsymbol{U}_{\mathrm{s}} \boldsymbol{U}_{\mathrm{s}}^{\mathrm{H}} \right) \boldsymbol{C} \boldsymbol{\Gamma}_n \big) \\
% & + \left( \breve{\boldsymbol{J}}_{n, 1} \boldsymbol{U}_{\mathrm{s}} \right)^{\dagger} \left( \breve{\boldsymbol{J}}_{n, 2} \boldsymbol{U}_{\mathrm{s}} \boldsymbol{R} - \breve{\boldsymbol{J}}_{n, 1} \boldsymbol{U}_{\mathrm{s}} \boldsymbol{R} \boldsymbol{\Gamma}_n \right) \\
= & \left( \breve{\boldsymbol{J}}_{n, 1} \boldsymbol{U}_{\mathrm{s}} \right)^{\dagger} \left( \breve{\boldsymbol{J}}_{n, 2} \boldsymbol{C} - \breve{\boldsymbol{J}}_{n, 1} \boldsymbol{C} \boldsymbol{\Gamma}_n \right) - \boldsymbol{\Gamma}_n \boldsymbol{U}_{\mathrm{s}}^{\mathrm{H}} \boldsymbol{C} \\
& + \boldsymbol{U}_{\mathrm{s}}^{\mathrm{H}} \boldsymbol{C} \boldsymbol{\Gamma}_n + \boldsymbol{\Gamma}_n \boldsymbol{R} - \boldsymbol{R} \boldsymbol{\Gamma}_n.\notag 
\\
\Delta \Phi_{l, n} =& \boldsymbol{\epsilon}_l^{\mathrm{T}} \left( \breve{\boldsymbol{J}}_{n, 1} \boldsymbol{U}_{\mathrm{s}} \right)^{\dagger} \left( \breve{\boldsymbol{J}}_{n, 2} - \Phi_{l, n} \breve{\boldsymbol{J}}_{n, 1} \right) \boldsymbol{C} \boldsymbol{e}_l,
\end{align} 
%With the result in \eqref{MeqB3}, $\Delta \Phi_{l, n}$ is further obtained as
%\begin{align*}
%\Delta \Phi_{l, n} = \boldsymbol{\epsilon}_l^{\mathrm{T}} \left( \breve{\boldsymbol{J}}_{n, 1} \boldsymbol{U}_{\mathrm{s}} \right)^{\dagger} \left( \breve{\boldsymbol{J}}_{n, 2} - \Phi_{l, n} \breve{\boldsymbol{J}}_{n, 1} \right) \boldsymbol{C} \boldsymbol{e}_l,
%\numberthis
%\end{align*} 
where $\boldsymbol{\Gamma}_n \boldsymbol{e}_l = \Phi_{l, n} \boldsymbol{e}_l$ and $\boldsymbol{\epsilon}_l \boldsymbol{\Gamma}_n  = \Phi_{l, n} \boldsymbol{\epsilon}_l$ are used in the derivation. Finally, from \eqref{eq20} and \eqref{eq23}, we have $\boldsymbol{U}_{\mathrm{s}} = \boldsymbol{P} \boldsymbol{E}^{-1}$, and $\boldsymbol{\Sigma}_{\mathrm{s}} \boldsymbol{V}_{\mathrm{s}}^{\mathrm{H}} = \boldsymbol{E} \mathrm{Diag} \left( \boldsymbol{\gamma} \right) \boldsymbol{G}^{\mathrm{T}} $ leading to 
\begin{align*}
\Delta \Phi_{l, n} = & \boldsymbol{\epsilon}_l^{\mathrm{T}} \left( \breve{\boldsymbol{J}}_{n, 1} \boldsymbol{P} \boldsymbol{E}^{-1} \right)^{\dagger} \left( \breve{\boldsymbol{J}}_{n, 2} - \Phi_{l, n} \breve{\boldsymbol{J}}_{n, 1} \right) \Delta \boldsymbol{H} \left( \boldsymbol{\Sigma}_{\mathrm{s}} \boldsymbol{V}_{\mathrm{s}}^{\mathrm{H}} \right)^{\dagger} \boldsymbol{e}_l \\
%= & \boldsymbol{\epsilon}_l^{\mathrm{T}} \boldsymbol{E} \left( \breve{\boldsymbol{J}}_{n, 1} \boldsymbol{P} \right)^{\dagger} \left( \breve{\boldsymbol{J}}_{n, 2} - \Phi_{l, n} \breve{\boldsymbol{J}}_{n, 1} \right) \Delta \boldsymbol{H} \left( \boldsymbol{G}^{\mathrm{T}} \right)^{\dagger} \\
%& \left( \mathrm{Diag} \left( \boldsymbol{\gamma} \right) \right)^{-1} \boldsymbol{E}^{-1} \boldsymbol{e}_l \\
= & \frac{1}{\gamma_l} \boldsymbol{b}_l^{\mathrm{T}} \left( \breve{\boldsymbol{J}}_{n, 1} \boldsymbol{P} \right)^{\dagger} \left( \breve{\boldsymbol{J}}_{n, 2} - \Phi_{l, n} \breve{\boldsymbol{J}}_{n, 1} \right) \Delta \boldsymbol{H} \left( \boldsymbol{G}^{\mathrm{T}} \right)^{\dagger} \boldsymbol{b}_l,
\end{align*} 
which is the result in \eqref{eq49}.

%-----------------------------------------------------------------------------
\subsection{Proof of Proposition \ref{prop1}} \label{ProofProp1}
We first present the following lemma.
\begin{lemma} \label{MlemB4}
Given $\boldsymbol{a} = [\boldsymbol{a}_{1, 1, 1, 1}^{\mathrm{T}}, \boldsymbol{a}_{1, 1, 1, 2}^{\mathrm{T}}, \cdots, \boldsymbol{a}_{N_1, N_2, N_3, N_4}^{\mathrm{T}} ]^{\mathrm{T}}$
%\begin{align*}
%\boldsymbol{a} = [\boldsymbol{a}_{1, 1, 1, 1}^{\mathrm{T}}, \boldsymbol{a}_{1, 1, 1, 2}^{\mathrm{T}}, \cdots, \boldsymbol{a}_{1, 1, 1, N_4}^{\mathrm{T}}, \boldsymbol{a}_{1, 1, 2, 1}^{\mathrm{T}}, \cdots, \boldsymbol{a}_{N_1, N_2, N_3, N_4}^{\mathrm{T}} ]^{\mathrm{T}},
%\end{align*}
where $\boldsymbol{a}_{n_1, n_2, n_3, n_4} \in \mathbb{C}^{K_5 \times 1}$ and $\boldsymbol{b} \in \mathbb{C}^{L_5 \times 1}$, we have $\boldsymbol{a}^{\mathrm{H}} \Delta \boldsymbol{H} \boldsymbol{b}^* = \boldsymbol{c}^{\mathrm{H}} \Delta \boldsymbol{h}$, 
%\begin{align*}
%\boldsymbol{a}^{\mathrm{H}} \Delta \boldsymbol{H} \boldsymbol{b}^* = \boldsymbol{c}^{\mathrm{H}} \Delta \boldsymbol{h}, \label{BB1} \numberthis
%\end{align*}
where $\boldsymbol{c} = [\boldsymbol{c}_{1, 1, 1, 1}^{\mathrm{T}}, \boldsymbol{c}_{1, 1, 1, 2}^{\mathrm{T}},  \cdots, \boldsymbol{c}_{N_1, N_2, N_3, N_4}^{\mathrm{T}} ]^{\mathrm{T}}$
%\begin{align*}
%\boldsymbol{c} = [\boldsymbol{c}_{1, 1, 1, 1}^{\mathrm{T}}, \boldsymbol{c}_{1, 1, 1, 2}^{\mathrm{T}}, \cdots, \boldsymbol{c}_{1, 1, 1, N_4}^{\mathrm{T}}, \boldsymbol{c}_{1, 1, 2, 1}^{\mathrm{T}}, \cdots, \boldsymbol{c}_{N_1, N_2, N_3, N_4}^{\mathrm{T}} ]^{\mathrm{T}},
%\end{align*}
and $\boldsymbol{c}_{n_1, n_2, n_3, n_4} \in \mathbb{C}^{M_5 \times 1}$ is the convolution of $\boldsymbol{a}_{n_1, n_2, n_3, n_4}$ and $\boldsymbol{b}$.
\end{lemma}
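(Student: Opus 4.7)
The plan is to exploit the block structure of $\Delta\boldsymbol{H}$ that is inherited from the spatial smoothing operator $\mathcal{S}(\cdot)$ in \eqref{eq19}. Since each selection matrix $\boldsymbol{J}_{N_1,N_2,N_3,N_4,\ell_5} = \boldsymbol{I}_{N_1}\otimes\boldsymbol{I}_{N_2}\otimes\boldsymbol{I}_{N_3}\otimes\boldsymbol{I}_{N_4}\otimes\boldsymbol{J}_{\ell_5}$ acts as the identity on the four spatial indices and only slides a $K_5$-long window along the frequency index $m_5$, the matrix $\Delta\boldsymbol{H}$ decomposes into $N_1 N_2 N_3 N_4$ vertically stacked blocks $\Delta\boldsymbol{H}_{n_1,n_2,n_3,n_4}\in\mathbb{C}^{K_5\times L_5}$, each of which is Hankel with $(k,\ell)$-entry $\Delta h^{(\mathrm{b})}_{n_1,n_2,n_3,n_4,k+\ell-1}$ (this is exactly the structure already used in the reduced-complexity SVD of Section~IV-E).

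Using the compatible block partitions of $\boldsymbol{a}$ and $\boldsymbol{c}$, the bilinear form splits as
\begin{equation*}
\boldsymbol{a}^{\mathrm{H}}\Delta\boldsymbol{H}\boldsymbol{b}^{*}
=\sum_{n_1,n_2,n_3,n_4}\boldsymbol{a}_{n_1,n_2,n_3,n_4}^{\mathrm{H}}\,\Delta\boldsymbol{H}_{n_1,n_2,n_3,n_4}\,\boldsymbol{b}^{*}.
\end{equation*}
For a single block, the Hankel structure lets me re-index by $m=k+\ell-1$:
\begin{equation*}
\boldsymbol{a}_{n_1,n_2,n_3,n_4}^{\mathrm{H}}\Delta\boldsymbol{H}_{n_1,n_2,n_3,n_4}\boldsymbol{b}^{*}
=\sum_{m=1}^{M_5}\overline{c_{n_1,n_2,n_3,n_4,m}}\,\Delta h^{(\mathrm{b})}_{n_1,n_2,n_3,n_4,m},
\end{equation*}
where the coefficient $c_{n_1,n_2,n_3,n_4,m}=\sum_{k+\ell-1=m}a_{n_1,n_2,n_3,n_4,k}\,b_\ell$ is precisely the $m$-th sample of the linear convolution of $\boldsymbol{a}_{n_1,n_2,n_3,n_4}$ with $\boldsymbol{b}$, as specified in the lemma.

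Summing the block identities and invoking the row-major vectorization of $\boldsymbol{h}^{(\mathrm{b})}$ (with $m_5$ as the fastest-varying index) collapses the double sum into $\boldsymbol{c}^{\mathrm{H}}\Delta\boldsymbol{h}$, which is the desired identity. The main obstacle I anticipate is purely bookkeeping: one must verify that the row ordering of $\Delta\boldsymbol{H}$ induced by stacking the Hankel blocks matches the row-major vectorization convention used for $\boldsymbol{c}$ and $\Delta\boldsymbol{h}$, and that the Hankel index map $(k,\ell)\mapsto k+\ell-1$ coming from the columns of $\boldsymbol{J}_{\ell_5}$ is exactly the one producing linear (not circular) convolution. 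Once those conventions are pinned down, the proof reduces to the standard identity that a sesquilinear form in a Hankel matrix equals an inner product with the convolution of its two factor vectors, which is the very property later exploited to carry out the matrix--vector products via FFT/IFFT in the complexity analysis.
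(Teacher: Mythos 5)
Your proposal is correct and follows essentially the same route as the paper's own proof: split $\boldsymbol{a}^{\mathrm{H}}\Delta\boldsymbol{H}\boldsymbol{b}^{*}$ over the $N_1N_2N_3N_4$ stacked Hankel blocks, use the entry map $(k_5,\ell_5)\mapsto k_5+\ell_5-1$ to reindex each block's sesquilinear form as $\sum_{m_5}\Delta h_{n_1,n_2,n_3,n_4,m_5}\sum_{\ell_5}a^{*}_{n_1,n_2,n_3,n_4,m_5+1-\ell_5}b^{*}_{\ell_5}$, and recognize the inner sum as the conjugated convolution coefficient $c^{*}_{n_1,n_2,n_3,n_4,m_5}$. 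The bookkeeping concerns you flag (row-major ordering with $m_5$ fastest, linear rather than circular convolution) are exactly the conventions the paper's derivation implicitly uses, so nothing is missing.
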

\begin{proof}
Let $\boldsymbol{y} = \Delta \boldsymbol{H}\boldsymbol{b}^*$, where $\boldsymbol{y} \in \mathbb{C}^{N_1N_2N_3N_4K_5 \times 1}$ is given by $\boldsymbol{y} = [\boldsymbol{y}_{1, 1, 1, 1}^{\mathrm{T}}, \boldsymbol{y}_{1, 1, 1, 2}^{\mathrm{T}},\cdots, \boldsymbol{y}_{N_1, N_2, N_3, N_4}^{\mathrm{T}} ]^{\mathrm{T}},$
%\begin{align*}
%\boldsymbol{y} = [\boldsymbol{y}_{1, 1, 1, 1}^{\mathrm{T}}, \boldsymbol{y}_{1, 1, 1, 2}^{\mathrm{T}}, \cdots, \boldsymbol{y}_{1, 1, 1, N_4}^{\mathrm{T}}, \boldsymbol{y}_{1, 1, 2, 1}^{\mathrm{T}}, \cdots, \boldsymbol{y}_{N_1, N_2, N_3, N_4}^{\mathrm{T}} ]^{\mathrm{T}},
%\end{align*}
and $\boldsymbol{y}_{n_1, n_2, n_3, n_4} \in \mathbb{C}^{K_5 \times 1}$. Accordingly, we have
%\begin{align*}
$\boldsymbol{y}_{n_1, n_2, n_3, n_4} = \Delta \boldsymbol{H}_{n_1, n_2, n_3, n_4} \boldsymbol{b}^*$. 
%\end{align*}
As a result, 
\begin{align*}
\boldsymbol{a}^{\mathrm{H}} \Delta \boldsymbol{H} \boldsymbol{b}^* %= & \sum_{n_1 = 1}^{N_1} \sum_{n_2 = 1}^{N_2} \sum_{n_3 = 1}^{N_3} \sum_{n_4 = 1}^{N_4} \boldsymbol{a}_{n_1, n_2, n_3, n_4}^{\mathrm{H}} \boldsymbol{y}_{n_1, n_2, n_3, n_4} \\
%= & \sum_{n_1 = 1}^{N_1} \sum_{n_2 = 1}^{N_2} \sum_{n_3 = 1}^{N_3} \sum_{n_4 = 1}^{N_4} \sum_{k_5 = 1}^{K_5} a_{n_1, n_2, n_3, n_4, k_5}^* \\
%& \hspace{8mm}\cdot \sum_{\ell_5 = 1}^{L_5} \Delta h_{n_1, n_2, n_3, n_4, k_5 + \ell_5 - 1} b_{\ell_5}^* \\
= & \sum_{n_1 = 1}^{N_1} \sum_{n_2 = 1}^{N_2} \sum_{n_3 = 1}^{N_3} \sum_{n_4 = 1}^{N_4} \sum_{m_5 = 1}^{M_5} \Delta h_{n_1, n_2, n_3, n_4, m_5} \\
& \hspace{8mm}\underbrace{\sum_{\ell_5 = 1}^{L_5} a_{n_1, n_2, n_3, n_4, m_5 + 1 - \ell_5}^* b_{\ell_5}^*}_{c_{n_1, n_2, n_3, n_4, m_5}^*} %\\
%= & \boldsymbol{c}^{\mathrm{H}} \Delta \boldsymbol{h}, \numberthis
\end{align*}
which shows that $\boldsymbol{a}^{\mathrm{H}} \Delta \boldsymbol{H} \boldsymbol{b}^*=\boldsymbol{c}^{\mathrm{H}} \Delta \boldsymbol{h}$.
\end{proof}
With the results in Lemma \ref{lem1}, we obtain the results in \eqref{eq54} by applying the results in Lemma \ref{MlemB4}.
%-----------------------------------------------------------------------------
\subsection{Proof of Lemma \ref{lem2}} \label{ProofLem2}
%The relationship between the angular frequencies and the channel parameters are given as follows.
%\begin{align*}
%\omega_{l, 1} = & \pi \sin \left( \phi_{\mathrm{az}}^l \right) \sin \left( \phi_{\mathrm{el}}^l \right), \\
%\omega_{l, 2} = & \pi \cos \left(\phi_{\mathrm{el}}^l \right), \\
%\omega_{l, 3} = & \pi \sin \left( \theta_{\mathrm{az}}^l \right) \sin \left( \theta_{\mathrm{el}}^l \right), \\
%\omega_{l, 4} = & \pi \cos \left(\theta_{\mathrm{el}}^l \right), \\
%\omega_{l, 5} = & -2\pi \Delta f \tau_l.
%\end{align*}
With the relationship between the angular frequencies and channel parameters, we establish the following first-order perturbations:
$\Delta \omega_{l, 1} =  \pi \cos \left( \phi_{\mathrm{az}}^l \right) \sin \left( \phi_{\mathrm{el}}^l \right) \Delta \phi_{\mathrm{az}}^l + \pi \sin \left( \phi_{\mathrm{az}}^l \right) \cos \left( \phi_{\mathrm{el}}^l \right) \Delta \phi_{\mathrm{el}}^l$, 
$\Delta \omega_{l, 2} =  - \pi \sin \left( \phi_{\mathrm{el}}^l \right) \Delta \phi_{\mathrm{el}}^l$, 
$\Delta \omega_{l, 3} =  \pi \cos \left( \theta_{\mathrm{az}}^l \right) \sin \left( \theta_{\mathrm{el}}^l \right) \Delta \theta_{\mathrm{az}}^l  + \pi \sin \left( \theta_{\mathrm{az}}^l \right) \cos \left( \theta_{\mathrm{el}}^l \right) \Delta \theta_{\mathrm{el}}^l$, 
$\Delta \omega_{l, 4} =  - \pi \sin \left( \theta_{\mathrm{el}}^l \right) \Delta \theta_{\mathrm{el}}^l$, and 
$\Delta \omega_{l, 5} =  -2\pi \Delta f \Delta \tau_l$.
%\begin{align*}
%\Delta \omega_{l, 1} = & \pi \cos \left( \phi_{\mathrm{az}}^l \right) \sin \left( \phi_{\mathrm{el}}^l \right) \Delta \phi_{\mathrm{az}}^l \\
%& + \pi \sin \left( \phi_{\mathrm{az}}^l \right) \cos \left( \phi_{\mathrm{el}}^l \right) \Delta \phi_{\mathrm{el}}^l, \\
%\Delta \omega_{l, 2} = & - \pi \sin \left( \phi_{\mathrm{el}}^l \right) \Delta \phi_{\mathrm{el}}^l, \\
%\Delta \omega_{l, 3} = & \pi \cos \left( \theta_{\mathrm{az}}^l \right) \sin \left( \theta_{\mathrm{el}}^l \right) \Delta \theta_{\mathrm{az}}^l \\
%& + \pi \sin \left( \theta_{\mathrm{az}}^l \right) \cos \left( \theta_{\mathrm{el}}^l \right) \Delta \theta_{\mathrm{el}}^l, \\
%\Delta \omega_{l, 4} = & - \pi \sin \left( \theta_{\mathrm{el}}^l \right) \Delta \theta_{\mathrm{el}}^l, \\
%\Delta \omega_{l, 5} = & -2\pi \Delta f \Delta \tau_l.
%\end{align*}
Using the results in \eqref{eq55}, it is straightforward to obtain the results in \eqref{eq56}--\eqref{eq60}. 

Obtaining \eqref{eq61} is more involved. 
First of all, with $\Delta \omega_{l, n}$, we have $\Delta \left( \exp \left( \jmath m_n \omega_{l, n} \right) \right) = \jmath m_n \exp \left(\jmath m_n \omega_{l, n} \right) \Delta \omega_{l, n}$, 
%\begin{align*}
%\Delta \left( \exp \left( \jmath m_n \omega_{l, n} \right) \right) = \jmath m_n \exp \left(\jmath m_n \omega_{l, n} \right) \Delta \omega_{l, n}.
%\end{align*}
so that the first-order perturbation of $\boldsymbol{A}_n^{\left( M_n \right) }$ is given by $\Delta \boldsymbol{A}_n^{\left( M_n \right) } = \jmath \mathrm{Diag} \left( \boldsymbol{m}_n \right) \boldsymbol{A}_n^{\left( M_n \right) } \mathrm{Diag} \left( \Delta \boldsymbol{\omega}_n \right)$. 
%\begin{align*}
%\Delta \boldsymbol{A}_n^{\left( M_n \right) } = \jmath \mathrm{Diag} \left( \boldsymbol{m}_n \right) \boldsymbol{A}_n^{\left( M_n \right) } \mathrm{Diag} \left( \Delta \boldsymbol{\omega}_n \right). \numberthis
%\end{align*}
where $\boldsymbol{m}_n = [0, 1, \cdots, M_n - 1 ]^{\mathrm{T}}$ and $\Delta \boldsymbol{\omega}_n = [ \Delta \omega_{1, n}, \Delta \omega_{2, n}, \cdots, \Delta \omega_{L, n} ]^{\mathrm{T}} = \Im \{ \boldsymbol{V}_n^{\mathrm{H}} \Delta \boldsymbol{h} \}$.
From this, we see that the first-order perturbation of $\boldsymbol{B}_n^{\left( N_n \right) }$ is given by $\Delta \boldsymbol{B}_n^{\left( N_n \right) } = \boldsymbol{T}_n^{\mathrm{H}} \Delta \boldsymbol{A}_n^{\left( M_n \right) } = \breve{\boldsymbol{T}}_n^{\mathrm{H}} \boldsymbol{A}_n^{\left( M_n \right) } \odot \Delta \boldsymbol{\omega}_n^{\mathrm{T}}$
%\begin{align*}
%\Delta \boldsymbol{B}_n^{\left( N_n \right) } = \boldsymbol{T}_n^{\mathrm{H}} \Delta \boldsymbol{A}_n^{\left( M_n \right) } = \breve{\boldsymbol{T}}_n^{\mathrm{H}} \boldsymbol{A}_n^{\left( M_n \right) } \odot \Delta \boldsymbol{\omega}_n^{\mathrm{T}}, \numberthis
%\end{align*}
where $\breve{\boldsymbol{T}}_n = - \jmath \mathrm{Diag} \left( \boldsymbol{m}_n \right) \boldsymbol{T}_n$. 
%\begin{align*}
%\breve{\boldsymbol{T}}_n = - \jmath \mathrm{Diag} \left( \boldsymbol{m}_n \right) \boldsymbol{T}_n.
%\end{align*}
From \eqref{eq34}, we derive the first-order perturbation $\Delta \boldsymbol{B} $ as $\Delta \boldsymbol{B} = \sum_{n=1}^5 \breve{\boldsymbol{B}}_n \odot \Delta \boldsymbol{\omega}_n^{\mathrm{T}}$, 
%%\begin{align*}
%\Delta \boldsymbol{B} = \sum_{n=1}^5 \breve{\boldsymbol{B}}_n %\odot \Delta \boldsymbol{\omega}_n^{\mathrm{T}}, \numberthis
%\end{align*}
where
\begin{align*}
\breve{\boldsymbol{B}}_n = & \boldsymbol{B}_1^{\left( N_1 \right)} \odot \cdots \odot \breve{\boldsymbol{T}}_n^{\mathrm{H}} \boldsymbol{A}_n^{\left( M_n \right) }  \odot \cdots \odot \boldsymbol{B}_5^{\left( N_5 \right)}. \label{MeqBC1} \numberthis
\end{align*}
From the proof of Lemma \ref{MlemB2}, we recall that $\Delta \left( \boldsymbol{B}^{\dagger} \right) = -\boldsymbol{B}^{\dagger} \Delta\boldsymbol{B} \boldsymbol{B}^{\dagger} +  \left( \boldsymbol{B}^{\mathrm{H}} \boldsymbol{B} \right)^{-1} \Delta\boldsymbol{B}^{\mathrm{H}} \left( \boldsymbol{I} - \boldsymbol{B} \boldsymbol{B}^{\dagger} \right)$.
%\begin{align*}
%\Delta \left( \boldsymbol{B}^{\dagger} \right) = -\boldsymbol{B}^{\dagger} \Delta\boldsymbol{B} \boldsymbol{B}^{\dagger} +  \left( \boldsymbol{B}^{\mathrm{H}} \boldsymbol{B} \right)^{-1} \Delta\boldsymbol{B}^{\mathrm{H}} \left( \boldsymbol{I} - \boldsymbol{B} \boldsymbol{B}^{\dagger} \right). \numberthis
%\end{align*}
Secondly, we note that the estimate of the channel gain $\boldsymbol{\gamma}$ is given as
\begin{align*}
\hat{\boldsymbol{\gamma}} %= & \left( \boldsymbol{B}^{\dagger} + \Delta \left( \boldsymbol{B}^{\dagger} \right) \right) \left( \boldsymbol{B} \boldsymbol{\gamma} + \Delta \boldsymbol{h} \right) \\
= & \boldsymbol{\gamma} + \boldsymbol{B}^{\dagger} \Delta \boldsymbol{h} + \Delta \left( \boldsymbol{B}^{\dagger} \right) \boldsymbol{B} \boldsymbol{\gamma} + \Delta \left( \boldsymbol{B}^{\dagger} \right) \Delta \boldsymbol{h}, \numberthis
\end{align*}
which indicates that the first-order perturbation of the channel gain estimation is
\begin{align*}
\Delta \boldsymbol{\gamma} = & \boldsymbol{B}^{\dagger} \Delta \boldsymbol{h} + \Delta \left( \boldsymbol{B}^{\dagger} \right) \boldsymbol{B} \boldsymbol{\gamma} \\
%= & \boldsymbol{B}^{\dagger} \Delta \boldsymbol{h} - \boldsymbol{B}^{\dagger}\Delta\boldsymbol{B} \boldsymbol{\gamma} + \left( \boldsymbol{B}^{\mathrm{H}} \boldsymbol{B} \right)^{-1} \Delta\boldsymbol{B}^{\mathrm{H}} \boldsymbol{B} \boldsymbol{\gamma} \\
%& - \left( \boldsymbol{B}^{\mathrm{H}} \boldsymbol{B} \right)^{-1} \Delta\boldsymbol{B}^{\mathrm{H}} \boldsymbol{B} \boldsymbol{\gamma}\\
%= & \boldsymbol{B}^{\dagger} \Delta \boldsymbol{h} - \boldsymbol{B}^{\dagger}\Delta\boldsymbol{B} \boldsymbol{\gamma} %\\
= & \boldsymbol{B}^{\dagger} \Delta \boldsymbol{h} - \sum_{n=1}^5 \boldsymbol{\Upsilon}_n \Im \left\{ \boldsymbol{V}_n^{\mathrm{H}} \Delta \boldsymbol{h} \right\}. \numberthis
\end{align*}

%-----------------------------------------------------------------------------
% \subsection{Proof of Lemma \ref{lem3}} \label{ProofLem3}
% From \eqref{eq7}, we have the first-order perturbation $\Delta \tilde{\boldsymbol{h}} = \boldsymbol{}$

%-----------------------------------------------------------------------------
\subsection{Proof of Lemma \ref{lem3}} \label{ProofLem3}
To begin with, we have the first-order approximations of $\Delta \boldsymbol{f}_{\mathrm{T}, l}$ and $\Delta \boldsymbol{f}_{\mathrm{R}, l}$ as $\Delta \boldsymbol{f}_{\mathrm{T}, l} =  \boldsymbol{\Omega}_{\mathrm{T}, l} \left[ \Delta \phi_{\mathrm{az}}^l, \Delta \phi_{\mathrm{el}}^l \right]^{\mathrm{T}}$ and $\Delta \boldsymbol{f}_{\mathrm{R}, l} =  \boldsymbol{\Omega}_{\mathrm{R}, l} \left[ \Delta \theta_{\mathrm{az}}^l, \Delta \theta_{\mathrm{el}}^l \right]^{\mathrm{T}}$, 
%\begin{align*}
%\Delta \boldsymbol{f}_{\mathrm{T}, l} = & \boldsymbol{\Omega}_{\mathrm{T}, l} \left[ \Delta \phi_{\mathrm{az}}^l, \Delta \phi_{\mathrm{el}}^l \right]^{\mathrm{T}}, \numberthis \\
%\Delta \boldsymbol{f}_{\mathrm{R}, l} = & \boldsymbol{\Omega}_{\mathrm{R}, l} \left[ \Delta \theta_{\mathrm{az}}^l, \Delta \theta_{\mathrm{el}}^l \right]^{\mathrm{T}}, \numberthis
%\end{align*}
respectively, where $\boldsymbol{\Omega}_{\mathrm{T}, l} \in \mathbb{R}^{3\times 2}$ and $\boldsymbol{\Omega}_{\mathrm{R}, l} \in \mathbb{R}^{3\times 2}$ are given by
\begin{align*}
\boldsymbol{\Omega}_{\mathrm{T}, l} = & \left[ 
\begin{array}{cc}
-\sin \left( \phi_{\mathrm{az}}^l \right) \sin \left( \phi_{\mathrm{el}}^l \right) & \cos \left( \phi_{\mathrm{az}}^l \right) \cos \left( \phi_{\mathrm{el}}^l \right) \\
\cos \left( \phi_{\mathrm{az}}^l \right) \sin \left( \phi_{\mathrm{el}}^l \right) & \sin \left( \phi_{\mathrm{az}}^l \right) \cos \left( \phi_{\mathrm{el}}^l \right) \\
0 & - \sin \left( \phi_{\mathrm{el}}^l \right)
\end{array}
\right], \numberthis \\
\boldsymbol{\Omega}_{\mathrm{R}, l} = & \left[ 
\begin{array}{cc}
-\sin \left( \theta_{\mathrm{az}}^l \right) \sin \left( \theta_{\mathrm{el}}^l \right) & \cos \left( \theta_{\mathrm{az}}^l \right) \cos \left( \theta_{\mathrm{el}}^l \right) \\
\cos \left( \theta_{\mathrm{az}}^l \right) \sin \left( \theta_{\mathrm{el}}^l \right) & \sin \left( \theta_{\mathrm{az}}^l \right) \cos \left( \theta_{\mathrm{el}}^l \right) \\
0 & - \sin \left( \theta_{\mathrm{el}}^l \right)
\end{array}
\right]. \numberthis
\end{align*}
% \begin{align*}
% & \Delta \boldsymbol{f}_{\mathrm{T}, l} \\
% = & \left[
% \begin{array}{c}
% -\sin \left( \phi_{\mathrm{az}}^l \right) \sin \left( \phi_{\mathrm{el}}^l \right) \Delta \phi_{\mathrm{az}}^l + \cos \left( \phi_{\mathrm{az}}^l \right) \cos \left( \phi_{\mathrm{el}}^l \right) \Delta \phi_{\mathrm{el}}^l  \\
% \cos \left( \phi_{\mathrm{az}}^l \right) \sin \left( \phi_{\mathrm{el}}^l \right) \Delta \phi_{\mathrm{az}}^l + \sin \left( \phi_{\mathrm{az}}^l \right) \cos \left( \phi_{\mathrm{el}}^l \right) \Delta \phi_{\mathrm{el}}^l \\
% - \sin \left( \phi_{\mathrm{el}}^l \right) \Delta \phi_{\mathrm{el}}^l
% \end{array}
% \right], \numberthis \\
% & \Delta \boldsymbol{f}_{\mathrm{R}, l} \\
% = & \left[
% \begin{array}{c}
% -\sin \left( \theta_{\mathrm{az}}^l \right) \sin \left( \theta_{\mathrm{el}}^l \right) \Delta \theta_{\mathrm{az}}^l + \cos \left( \theta_{\mathrm{az}}^l \right) \cos \left( \theta_{\mathrm{el}}^l \right) \Delta \theta_{\mathrm{el}}^l  \\
% \cos \left( \theta_{\mathrm{az}}^l \right) \sin \left( \theta_{\mathrm{el}}^l \right) \Delta \theta_{\mathrm{az}}^l + \sin \left( \theta_{\mathrm{az}}^l \right) \cos \left( \theta_{\mathrm{el}}^l \right) \Delta \theta_{\mathrm{el}}^l \\
% - \sin \left( \theta_{\mathrm{el}}^l \right) \Delta \theta_{\mathrm{el}}^l
% \end{array}
% \right]. \numberthis
% \end{align*}
As a result, the first-order approximations of $\Delta \boldsymbol{\delta}_l$ and $\Delta \boldsymbol{\mu}_l$ are obtained as
\begin{align*}
\Delta \boldsymbol{\delta}_l = & - c \left[
\tau_l \boldsymbol{\Omega}_{\mathrm{R}, l},  \boldsymbol{f}_{\mathrm{R}, l}
\right] \left[\Delta \theta_{\mathrm{az}}^l, \Delta \theta_{\mathrm{el}}^l, \Delta \tau_l \right]^{\mathrm{T}}, \label{MeqC1} \numberthis \\
\Delta \boldsymbol{\mu}_l = & c \left[\tau_l \boldsymbol{\Omega}_{\mathrm{T}, l}, \tau_l \boldsymbol{\Omega}_{\mathrm{R}, l}, \left(\boldsymbol{f}_{\mathrm{T}, l} + \boldsymbol{f}_{\mathrm{R}, l} \right) \right] \\
& \hspace{9mm}\left[\Delta \phi_{\mathrm{az}}^l, \Delta \phi_{\mathrm{el}}^l, \Delta \theta_{\mathrm{az}}^l, \Delta \theta_{\mathrm{el}}^l, \Delta \tau_l \right]^{\mathrm{T}}. \label{MeqC2} \numberthis
\end{align*}
% \begin{align*}
% \Delta \boldsymbol{\delta}_l = & - c \Delta \tau_l \boldsymbol{f}_{\mathrm{R}, l} - c \tau_l \Delta \boldsymbol{f}_{\mathrm{R}, l}, \numberthis \\
% \Delta \boldsymbol{\mu}_l = & c \Delta \tau_l \left(\boldsymbol{f}_{\mathrm{T}, l} + \boldsymbol{f}_{\mathrm{R}, l} \right) + c  \tau_l \left( \Delta \boldsymbol{f}_{\mathrm{T}, l} + \Delta \boldsymbol{f}_{\mathrm{R}, l} \right). \numberthis
% \end{align*}
Since $\boldsymbol{C}_l = (\boldsymbol{I} - \boldsymbol{\mu}_l \boldsymbol{\mu}_l^{\mathrm{T}}/{ \| \boldsymbol{\mu}_l \|^2}  )$, $\Delta \boldsymbol{C}_l$ can be given as\footnote{We omit $\iota_l$.}
% \begin{align*}
% \Delta \boldsymbol{C}_l = \frac{2\boldsymbol{\mu}_l^{\mathrm{T}} \Delta \boldsymbol{\mu}_l \left( \boldsymbol{I} - \boldsymbol{C}_l \right) - \Delta \boldsymbol{\mu}_l \boldsymbol{\mu}_l^{\mathrm{T}} - \boldsymbol{\mu}_l \Delta \boldsymbol{\mu}_l^{\mathrm{T}} }{\left\|\boldsymbol{\mu}_l\right\|^2 }. \numberthis
% \end{align*}
\begin{align*}
\Delta \boldsymbol{C}_l = & 2\left( \boldsymbol{\mu}_l^{\mathrm{T}} \boldsymbol{\mu}_l \right)^{-2} \left( \boldsymbol{\mu}_l^{\mathrm{T}} \Delta \boldsymbol{\mu}_l \right) \boldsymbol{\mu}_l \boldsymbol{\mu}_l^{\mathrm{T}} \\
& - \left( \boldsymbol{\mu}_l^{\mathrm{T}} \boldsymbol{\mu}_l \right)^{-1} \left( \Delta \boldsymbol{\mu}_l \boldsymbol{\mu}_l^{\mathrm{T}} + \boldsymbol{\mu}_l \Delta \boldsymbol{\mu}_l^{\mathrm{T}} \right). \numberthis
\end{align*}
According to \eqref{eq:PosEst}, and define $\boldsymbol{C} = \sum_{l=1}^L \boldsymbol{C}_l$, we have
\begin{align*}
\hat{\boldsymbol{p}}_{\mathrm{R}} = & \left( \sum_{l=1}^L \boldsymbol{C}_l + \Delta \boldsymbol{C}_l \right)^{-1} \sum_{l=1}^L \left( \boldsymbol{C}_l + \Delta \boldsymbol{C}_l \right) \left( \boldsymbol{\delta}_l + \Delta \boldsymbol{\delta}_l \right) \\
%= & \left( \boldsymbol{I} + \sum_{l=1}^L \boldsymbol{C}^{-1} \Delta \boldsymbol{C}_l \right)^{-1} \boldsymbol{C}^{-1} \\
%& \sum_{l=1}^L \boldsymbol{C}_l \boldsymbol{\delta}_l + \Delta \boldsymbol{C}_l \boldsymbol{\delta}_l + \boldsymbol{C}_l \Delta \boldsymbol{\delta}_l + \Delta \boldsymbol{C}_l \Delta \boldsymbol{\delta}_l \\
%\approx & \left( \boldsymbol{I} - \sum_{l=1}^L \boldsymbol{C}^{-1} \Delta \boldsymbol{C}_l + \boldsymbol{C}^{-1} \sum_{l=1}^L \Delta \boldsymbol{C}_l \boldsymbol{C}^{-1} \sum_{l=1}^L \Delta \boldsymbol{C}_l \right) \\
%& \boldsymbol{C}^{-1} \sum_{l=1}^L \boldsymbol{C}_l \boldsymbol{\delta}_l + \Delta \boldsymbol{C}_l \boldsymbol{\delta}_l + \boldsymbol{C}_l \Delta \boldsymbol{\delta}_l + \Delta \boldsymbol{C}_l \Delta \boldsymbol{\delta}_l \\
\approx & \boldsymbol{p}_{\mathrm{R}} + \boldsymbol{C}^{-1} \sum_{l=1}^L \Delta \boldsymbol{C}_l \left( \boldsymbol{\delta}_l - \boldsymbol{p}_{\mathrm{R}} \right) + \boldsymbol{C}^{-1} \sum_{l=1}^L \boldsymbol{C}_l \Delta \boldsymbol{\delta}_l \\
&x
+ \boldsymbol{C}^{-1} \sum_{l=1}^L \Delta \boldsymbol{C}_l \Delta \boldsymbol{\delta}_l - \boldsymbol{C}^{-1} \sum_{l=1}^L \Delta \boldsymbol{C}_l \boldsymbol{C}^{-1} \sum_{l=1}^L \boldsymbol{C}_l \Delta \boldsymbol{\delta}_l \\
& + \boldsymbol{C}^{-1} \sum_{l=1}^L \Delta \boldsymbol{C}_l \boldsymbol{C}^{-1} \sum_{l=1}^L \Delta \boldsymbol{C}_l \left( \boldsymbol{p}_{\mathrm{R}} - \boldsymbol{\delta}_l\right), \label{MeqC2nd} \numberthis
\end{align*}
where higher order approximation is omitted. Therefore, the first-order perturbation of the position estimate, i.e., $\Delta \boldsymbol{p}_{\mathrm{R}} = \hat{\boldsymbol{p}}_{\mathrm{R}} - \boldsymbol{p}_{\mathrm{R}}$, can be obtained as
\begin{align*}
\Delta \boldsymbol{p}_{\mathrm{R}} = &  \boldsymbol{C}^{-1} \sum_{l=1}^L \Delta \boldsymbol{C}_l \left( \boldsymbol{\delta}_l - \boldsymbol{p}_{\mathrm{R}} \right) + \boldsymbol{C}^{-1} \sum_{l=1}^L \boldsymbol{C}_l \Delta \boldsymbol{\delta}_l \\
& = \boldsymbol{C}^{-1} \sum_{l=1}^L \breve{\boldsymbol{C}}_l \Delta \boldsymbol{\mu}_l + \boldsymbol{C}_l \Delta \boldsymbol{\delta}_l, \label{MeqC3} \numberthis
\end{align*}
where $\breve{\boldsymbol{C}}_l \in \mathbb{R}^{3\times 3}$ is given by
\begin{align*}
\breve{\boldsymbol{C}}_l = & \frac{2\boldsymbol{\mu}_l^{\mathrm{T}} \left( \boldsymbol{\delta}_l - \boldsymbol{p}_{\mathrm{R}} \right) \boldsymbol{\mu}_l \boldsymbol{\mu}_l^{\mathrm{T}}}{\left\|\boldsymbol{\mu}_l\right\|^4} \\
& - \frac{\boldsymbol{\mu}_l^{\mathrm{T}} \left( \boldsymbol{\delta}_l - \boldsymbol{p}_{\mathrm{R}} \right) \boldsymbol{I} + \boldsymbol{\mu}_l \left( \boldsymbol{\delta}_l - \boldsymbol{p}_{\mathrm{R}} \right)^{\mathrm{T}} }{\left\|\boldsymbol{\mu}_l\right\|^2}. \numberthis
\end{align*}
% \begin{align*}
% \breve{\boldsymbol{C}}_l = & \frac{ 2 \left( \boldsymbol{I} - \boldsymbol{C}_l \right) \left( \boldsymbol{\delta}_l - \boldsymbol{p}_{\mathrm{R}} \right)  \boldsymbol{\mu}_l^{\mathrm{T}} - \boldsymbol{\mu}_l^{\mathrm{T}} \left( \boldsymbol{\delta}_l - \boldsymbol{p}_{\mathrm{R}} \right) \boldsymbol{I} }{\left\|\boldsymbol{\mu}_l\right\|^2} \\
% & - \frac{ \boldsymbol{\mu}_l \left( \boldsymbol{\delta}_l - \boldsymbol{p}_{\mathrm{R}} \right)^{\mathrm{T}} }{\left\|\boldsymbol{\mu}_l\right\|^2}. \numberthis
% \end{align*}
Substituting the results in \eqref{MeqC1} and \eqref{MeqC2} to \eqref{MeqC3}, we arrive at 
\begin{align*}
\Delta \boldsymbol{p}_{\mathrm{R}} = \sum_{l=1}^L & \breve{\boldsymbol{D}}_l \left[\Delta \theta_{\mathrm{az}}^l, \Delta \theta_{\mathrm{el}}^l, \Delta \tau_l \right]^{\mathrm{T}} \\
& + \breve{\boldsymbol{E}}_l \left[\Delta \phi_{\mathrm{az}}^l, \Delta \phi_{\mathrm{el}}^l, \Delta \theta_{\mathrm{az}}^l, \Delta \theta_{\mathrm{el}}^l, \Delta \tau_l \right]^{\mathrm{T}}, \numberthis
\end{align*}
where $\breve{\boldsymbol{D}}_l \in \mathbb{R}^{3\times 3}$ and $\breve{\boldsymbol{E}}_l\in \mathbb{R}^{3\times 5}$
\begin{align*}
\breve{\boldsymbol{D}}_l = & -c \left(\sum_{\ell=1}^L \boldsymbol{C}_\ell \right)^{-1} \boldsymbol{C}_l \left[
\tau_l \boldsymbol{\Omega}_{\mathrm{R}, l},  \boldsymbol{f}_{\mathrm{R}, l}
\right], \label{MeqC4} \numberthis \\
\breve{\boldsymbol{E}}_l = & c \left(\sum_{\ell=1}^L \boldsymbol{C}_\ell \right)^{-1} \breve{\boldsymbol{C}}_l \left[\tau_l \boldsymbol{\Omega}_{\mathrm{T}, l}, \tau_l \boldsymbol{\Omega}_{\mathrm{R}, l}, \left(\boldsymbol{f}_{\mathrm{T}, l} + \boldsymbol{f}_{\mathrm{R}, l} \right) \right]. \label{MeqC5} \numberthis
\end{align*}
With the results in Lemma \ref{lem2}, we derive the results in \eqref{eq75}.

\section*{Acknowledgment}

The authors would like to thank the editors and autonomous reviewers for your time and effort in handling the review of our paper.

\bibliographystyle{IEEEtran}
\bibliography{IEEEabrv,reference}

\clearpage
\newpage 
\renewcommand{\thepage}{}

\section*{\LARGE{Supplementary Material}}

\section{Proofs of Lemmas} \label{AppendixB}
%-----------------------------------------------------------------------------
\subsection{Proof of Lemma \ref{lem1}} \label{ProofLem1}
To begin with, we introduce the following results for the perturbation of signal subspace $\boldsymbol{U}_{\mathrm{s}}$.
\begin{lemma} \label{lemB1}
Given the perturbed $\tilde{\boldsymbol{H}}$ in \eqref{eq48}, and the subspace decomposition as
\begin{align*}
\tilde{\boldsymbol{H}} = \tilde{\boldsymbol{U}}_{\mathrm{s}} \tilde{\boldsymbol{\Sigma}}_{\mathrm{s}} \tilde{\boldsymbol{V}}_{\mathrm{s}}^{\mathrm{H}} + \tilde{\boldsymbol{U}}_{\mathrm{n}} \tilde{\boldsymbol{\Sigma}}_{\mathrm{n}} \tilde{\boldsymbol{V}}_{\mathrm{n}}^{\mathrm{H}},
\end{align*}
the first-order approximation of the $\Delta \boldsymbol{U}_{\mathrm{s}} = \tilde{\boldsymbol{U}}_{\mathrm{s}} - \boldsymbol{U}_{\mathrm{s}}$ is given by
\begin{align*}
\Delta \boldsymbol{U}_{\mathrm{s}} = \boldsymbol{U}_{\mathrm{n}}\boldsymbol{U}_{\mathrm{n}}^{\mathrm{H}} \Delta \boldsymbol{H} \boldsymbol{V}_{\mathrm{s}} \boldsymbol{\Sigma}_{\mathrm{s}}^{-1} + \boldsymbol{U}_{\mathrm{s}} \boldsymbol{R}, \label{eqB1} \numberthis
\end{align*}
where $\boldsymbol{R}$ is an antihermitian matrix that depends on $\Delta \boldsymbol{H}$.
\end{lemma}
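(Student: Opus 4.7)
The plan is to exploit two facts: (i) the columns of $[\boldsymbol{U}_{\mathrm{s}}, \boldsymbol{U}_{\mathrm{n}}]$ form a complete orthonormal basis of the ambient space, so $\Delta \boldsymbol{U}_{\mathrm{s}}$ can be decomposed into its signal-space and noise-space components; and (ii) the signal-space component is constrained by the unitarity of $\tilde{\boldsymbol{U}}_{\mathrm{s}}$ while the noise-space component is pinned down by the perturbed SVD defining equation. Concretely, I would write
\begin{align*}
\Delta \boldsymbol{U}_{\mathrm{s}} = \boldsymbol{U}_{\mathrm{s}} \boldsymbol{R} + \boldsymbol{U}_{\mathrm{n}} \boldsymbol{S},
\end{align*}
with $\boldsymbol{R}=\boldsymbol{U}_{\mathrm{s}}^{\mathrm{H}}\Delta \boldsymbol{U}_{\mathrm{s}}$ and $\boldsymbol{S}=\boldsymbol{U}_{\mathrm{n}}^{\mathrm{H}}\Delta \boldsymbol{U}_{\mathrm{s}}$, and then determine $\boldsymbol{R}$ and $\boldsymbol{S}$ separately to first order.

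For $\boldsymbol{R}$, I would differentiate the identity $\tilde{\boldsymbol{U}}_{\mathrm{s}}^{\mathrm{H}}\tilde{\boldsymbol{U}}_{\mathrm{s}} = \boldsymbol{I}$ and keep only first-order terms. This yields $\boldsymbol{U}_{\mathrm{s}}^{\mathrm{H}}\Delta \boldsymbol{U}_{\mathrm{s}} + \Delta \boldsymbol{U}_{\mathrm{s}}^{\mathrm{H}}\boldsymbol{U}_{\mathrm{s}}=\boldsymbol{0}$, i.e., $\boldsymbol{R}+\boldsymbol{R}^{\mathrm{H}}=\boldsymbol{0}$, so $\boldsymbol{R}$ is necessarily antihermitian. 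This accounts for the gauge freedom within the signal subspace and is the only structural information one can extract from unitarity; its actual numerical value depends on how the SVD of $\tilde{\boldsymbol{H}}$ is normalized.

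For $\boldsymbol{S}$, I would start from the perturbed defining relation $\tilde{\boldsymbol{H}}\tilde{\boldsymbol{V}}_{\mathrm{s}} = \tilde{\boldsymbol{U}}_{\mathrm{s}}\tilde{\boldsymbol{\Sigma}}_{\mathrm{s}}$. Expanding $\tilde{\boldsymbol{H}}=\boldsymbol{H}+\Delta \boldsymbol{H}$, $\tilde{\boldsymbol{V}}_{\mathrm{s}}=\boldsymbol{V}_{\mathrm{s}}+\Delta \boldsymbol{V}_{\mathrm{s}}$, $\tilde{\boldsymbol{U}}_{\mathrm{s}}=\boldsymbol{U}_{\mathrm{s}}+\Delta \boldsymbol{U}_{\mathrm{s}}$, $\tilde{\boldsymbol{\Sigma}}_{\mathrm{s}}=\boldsymbol{\Sigma}_{\mathrm{s}}+\Delta \boldsymbol{\Sigma}_{\mathrm{s}}$ and retaining only first-order terms gives
\begin{align*}
\Delta \boldsymbol{H}\,\boldsymbol{V}_{\mathrm{s}} + \boldsymbol{H}\,\Delta \boldsymbol{V}_{\mathrm{s}} = \Delta \boldsymbol{U}_{\mathrm{s}}\,\boldsymbol{\Sigma}_{\mathrm{s}} + \boldsymbol{U}_{\mathrm{s}}\,\Delta \boldsymbol{\Sigma}_{\mathrm{s}}.
\end{align*}
Left-multiplying by $\boldsymbol{U}_{\mathrm{n}}^{\mathrm{H}}$ kills two terms: $\boldsymbol{U}_{\mathrm{n}}^{\mathrm{H}}\boldsymbol{H} = \boldsymbol{U}_{\mathrm{n}}^{\mathrm{H}}\boldsymbol{U}_{\mathrm{s}}\boldsymbol{\Sigma}_{\mathrm{s}}\boldsymbol{V}_{\mathrm{s}}^{\mathrm{H}}=\boldsymbol{0}$ and $\boldsymbol{U}_{\mathrm{n}}^{\mathrm{H}}\boldsymbol{U}_{\mathrm{s}}=\boldsymbol{0}$, leaving $\boldsymbol{U}_{\mathrm{n}}^{\mathrm{H}}\Delta \boldsymbol{H}\,\boldsymbol{V}_{\mathrm{s}}=\boldsymbol{U}_{\mathrm{n}}^{\mathrm{H}}\Delta \boldsymbol{U}_{\mathrm{s}}\,\boldsymbol{\Sigma}_{\mathrm{s}}$. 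Since $\boldsymbol{\Sigma}_{\mathrm{s}}$ is invertible (the signal subspace has full rank $L$), we obtain $\boldsymbol{S}=\boldsymbol{U}_{\mathrm{n}}^{\mathrm{H}}\Delta \boldsymbol{H}\,\boldsymbol{V}_{\mathrm{s}}\boldsymbol{\Sigma}_{\mathrm{s}}^{-1}$. Substituting back yields the claimed expression \eqref{eqB1}.

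The only delicate point is justifying that the expansion is self-consistent to first order, i.e., that perturbations of the four SVD factors are all $\mathcal{O}(\|\Delta \boldsymbol{H}\|)$; this requires $\boldsymbol{\Sigma}_{\mathrm{s}}$ to be well separated from the zero noise singular values, a standard regularity assumption that is implicitly granted since $\boldsymbol{H}$ is assumed to have rank exactly $L$. The antihermitian matrix $\boldsymbol{R}$ cannot be resolved further without an additional phase/gauge convention on $\tilde{\boldsymbol{U}}_{\mathrm{s}}$, but this is harmless for Lemma \ref{lem1}: in the subsequent computation of $\Delta \boldsymbol{\Gamma}_n$ the term $\boldsymbol{U}_{\mathrm{s}}\boldsymbol{R}$ contributes only through the commutator $\boldsymbol{\Gamma}_n \boldsymbol{R} - \boldsymbol{R}\boldsymbol{\Gamma}_n$, whose diagonal entries in the eigenbasis of $\boldsymbol{\Gamma}_n$ vanish, so the gauge-dependent part drops out of $\Delta \Phi_{l,n}$.
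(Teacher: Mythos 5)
Your proof is correct: the decomposition $\Delta \boldsymbol{U}_{\mathrm{s}} = \boldsymbol{U}_{\mathrm{s}}\boldsymbol{R} + \boldsymbol{U}_{\mathrm{n}}\boldsymbol{S}$, the antihermitian constraint on $\boldsymbol{R}$ from $\tilde{\boldsymbol{U}}_{\mathrm{s}}^{\mathrm{H}}\tilde{\boldsymbol{U}}_{\mathrm{s}}=\boldsymbol{I}$, and the identification $\boldsymbol{S}=\boldsymbol{U}_{\mathrm{n}}^{\mathrm{H}}\Delta\boldsymbol{H}\,\boldsymbol{V}_{\mathrm{s}}\boldsymbol{\Sigma}_{\mathrm{s}}^{-1}$ from projecting the first-order expansion of $\tilde{\boldsymbol{H}}\tilde{\boldsymbol{V}}_{\mathrm{s}}=\tilde{\boldsymbol{U}}_{\mathrm{s}}\tilde{\boldsymbol{\Sigma}}_{\mathrm{s}}$ onto the noise subspace (using $\boldsymbol{U}_{\mathrm{n}}^{\mathrm{H}}\boldsymbol{H}=\boldsymbol{0}$ since $\boldsymbol{\Sigma}_{\mathrm{n}}=\boldsymbol{0}$) all check out, and your closing remark that the gauge term $\boldsymbol{U}_{\mathrm{s}}\boldsymbol{R}$ only enters $\Delta\Phi_{l,n}$ through the commutator $\boldsymbol{\Gamma}_n\boldsymbol{R}-\boldsymbol{R}\boldsymbol{\Gamma}_n$, whose diagonal in the eigenbasis vanishes, is exactly how the paper uses the lemma downstream. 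The paper itself offers no proof here — it only cites \cite{SahUseCom17} — and your argument is a correct self-contained reconstruction of the standard first-order SVD perturbation derivation underlying that reference.
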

\begin{proof}
The proof can be seen in \cite{SahUseCom17}.
\end{proof}
With $\Delta \boldsymbol{U}_{\mathrm{s}}$, the first-order approximation of $\Delta \boldsymbol{\Gamma}_n = \tilde{\boldsymbol{\Gamma}}_n - \boldsymbol{\Gamma}_n$ can be obtained as follows. 
\begin{lemma} \label{lemB2}
$\Delta \boldsymbol{\Gamma}_n$ is given by
\begin{align*}
\Delta \boldsymbol{\Gamma}_n = \left( \breve{\boldsymbol{J}}_{n, 1}\boldsymbol{U}_{\mathrm{s}} \right)^{\dagger} \left(\breve{\boldsymbol{J}}_{n, 2} \Delta \boldsymbol{U}_{\mathrm{s}} - \breve{\boldsymbol{J}}_{n, 1} \Delta \boldsymbol{U}_{\mathrm{s}} \boldsymbol{\Gamma}_n \right). \label{eqB2} \numberthis
\end{align*}
\end{lemma}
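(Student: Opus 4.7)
My plan is to directly differentiate the ESPRIT least-squares formula \eqref{eq28} for $\tilde{\boldsymbol{\Gamma}}_n$, keeping only first-order terms in the perturbation of the signal subspace. Writing $\tilde{\boldsymbol{\Gamma}}_n = (\breve{\boldsymbol{J}}_{n,1}\tilde{\boldsymbol{U}}_\mathrm{s})^{\dagger}\breve{\boldsymbol{J}}_{n,2}\tilde{\boldsymbol{U}}_\mathrm{s}$ and substituting $\tilde{\boldsymbol{U}}_\mathrm{s} = \boldsymbol{U}_\mathrm{s} + \Delta\boldsymbol{U}_\mathrm{s}$, the product rule gives, to first order,
\begin{equation*}
\Delta\boldsymbol{\Gamma}_n = \Delta\!\left[(\breve{\boldsymbol{J}}_{n,1}\boldsymbol{U}_\mathrm{s})^{\dagger}\right]\breve{\boldsymbol{J}}_{n,2}\boldsymbol{U}_\mathrm{s} + (\breve{\boldsymbol{J}}_{n,1}\boldsymbol{U}_\mathrm{s})^{\dagger}\breve{\boldsymbol{J}}_{n,2}\Delta\boldsymbol{U}_\mathrm{s}.
\end{equation*}

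Next, I would apply the standard pseudoinverse perturbation identity $\Delta(\boldsymbol{A}^{\dagger}) = -\boldsymbol{A}^{\dagger}\Delta\boldsymbol{A}\,\boldsymbol{A}^{\dagger} + (\boldsymbol{A}^{\mathrm{H}}\boldsymbol{A})^{-1}\Delta\boldsymbol{A}^{\mathrm{H}}(\boldsymbol{I} - \boldsymbol{A}\boldsymbol{A}^{\dagger})$ with $\boldsymbol{A} = \breve{\boldsymbol{J}}_{n,1}\boldsymbol{U}_\mathrm{s}$ and $\Delta\boldsymbol{A} = \breve{\boldsymbol{J}}_{n,1}\Delta\boldsymbol{U}_\mathrm{s}$. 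The crucial simplification is that the second piece, once right-multiplied by $\breve{\boldsymbol{J}}_{n,2}\boldsymbol{U}_\mathrm{s}$, produces the factor $(\boldsymbol{I} - \breve{\boldsymbol{J}}_{n,1}\boldsymbol{U}_\mathrm{s}(\breve{\boldsymbol{J}}_{n,1}\boldsymbol{U}_\mathrm{s})^{\dagger})\breve{\boldsymbol{J}}_{n,2}\boldsymbol{U}_\mathrm{s}$, which vanishes. The vanishing follows from the shift-invariance relation of Proposition \ref{propA1}: since $\boldsymbol{P} = \boldsymbol{U}_\mathrm{s}\boldsymbol{E}$ and $\breve{\boldsymbol{J}}_{n,1}\boldsymbol{P}\boldsymbol{\Phi}_n = \breve{\boldsymbol{J}}_{n,2}\boldsymbol{P}$, the columns of $\breve{\boldsymbol{J}}_{n,2}\boldsymbol{U}_\mathrm{s}$ lie in the range of $\breve{\boldsymbol{J}}_{n,1}\boldsymbol{U}_\mathrm{s}$, and the projector kills them.

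With that term gone, the surviving contribution from $\Delta[(\breve{\boldsymbol{J}}_{n,1}\boldsymbol{U}_\mathrm{s})^{\dagger}]\breve{\boldsymbol{J}}_{n,2}\boldsymbol{U}_\mathrm{s}$ is $-(\breve{\boldsymbol{J}}_{n,1}\boldsymbol{U}_\mathrm{s})^{\dagger}\breve{\boldsymbol{J}}_{n,1}\Delta\boldsymbol{U}_\mathrm{s}\,(\breve{\boldsymbol{J}}_{n,1}\boldsymbol{U}_\mathrm{s})^{\dagger}\breve{\boldsymbol{J}}_{n,2}\boldsymbol{U}_\mathrm{s}$, and the trailing factor $(\breve{\boldsymbol{J}}_{n,1}\boldsymbol{U}_\mathrm{s})^{\dagger}\breve{\boldsymbol{J}}_{n,2}\boldsymbol{U}_\mathrm{s}$ is, by \eqref{eq28} applied in the noiseless case, exactly $\boldsymbol{\Gamma}_n$. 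Combining this with the second term in the product-rule expansion and factoring out $(\breve{\boldsymbol{J}}_{n,1}\boldsymbol{U}_\mathrm{s})^{\dagger}$ on the left produces \eqref{eqB2}.

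The main obstacle I anticipate is the careful justification of the range-containment step that eliminates the awkward $(\boldsymbol{A}^{\mathrm{H}}\boldsymbol{A})^{-1}\Delta\boldsymbol{A}^{\mathrm{H}}(\boldsymbol{I}-\boldsymbol{A}\boldsymbol{A}^{\dagger})$ piece; everything else is bookkeeping. This requires the shift-invariance structure to hold exactly in the noiseless model (so that the signal subspace genuinely contains $\breve{\boldsymbol{J}}_{n,2}\boldsymbol{U}_\mathrm{s}$ within the column space of $\breve{\boldsymbol{J}}_{n,1}\boldsymbol{U}_\mathrm{s}$), which is precisely what Proposition \ref{propA1} supplies. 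Once that is in hand, the identification of $\boldsymbol{\Gamma}_n$ and the final left-factoring are routine.
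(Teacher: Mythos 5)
Your proposal is correct and follows essentially the same route as the paper: expand \eqref{eq28} by the product rule, apply the pseudoinverse perturbation identity to $\boldsymbol{A}=\breve{\boldsymbol{J}}_{n,1}\boldsymbol{U}_{\mathrm{s}}$, kill the $(\boldsymbol{I}-\boldsymbol{A}\boldsymbol{A}^{\dagger})$ term using $(\boldsymbol{I}-\breve{\boldsymbol{J}}_{n,1}\boldsymbol{U}_{\mathrm{s}}(\breve{\boldsymbol{J}}_{n,1}\boldsymbol{U}_{\mathrm{s}})^{\dagger})\breve{\boldsymbol{J}}_{n,2}\boldsymbol{U}_{\mathrm{s}}=\boldsymbol{0}$, and identify $(\breve{\boldsymbol{J}}_{n,1}\boldsymbol{U}_{\mathrm{s}})^{\dagger}\breve{\boldsymbol{J}}_{n,2}\boldsymbol{U}_{\mathrm{s}}=\boldsymbol{\Gamma}_n$ before factoring. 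Your justification of the range-containment step via the noiseless shift-invariance relation is in fact more explicit than the paper, which merely asserts that identity.
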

\begin{proof}
Using $\Delta\left( \boldsymbol{T}^{-1} \right) = - \boldsymbol{T}^{-1} \Delta \boldsymbol{T} \boldsymbol{T}^{-1}$, the perturbation of the inverse, $\Delta \left( \left( \boldsymbol{A}^{\mathrm{H}} \boldsymbol{A} \right)^{-1} \right)$ is given by
\begin{align*}
\Delta \left( \left( \boldsymbol{A}^{\mathrm{H}} \boldsymbol{A} \right)^{-1} \right) = & -\left( \boldsymbol{A}^{\mathrm{H}} \boldsymbol{A} \right)^{-1} \Delta \left( \boldsymbol{A}^{\mathrm{H}} \boldsymbol{A} \right) \left( \boldsymbol{A}^{\mathrm{H}} \boldsymbol{A} \right)^{-1} \\
= & -\left( \boldsymbol{A}^{\mathrm{H}} \boldsymbol{A} \right)^{-1} \Delta\boldsymbol{A}^{\mathrm{H}} \boldsymbol{A} \left( \boldsymbol{A}^{\mathrm{H}} \boldsymbol{A} \right)^{-1} \\
& - \left( \boldsymbol{A}^{\mathrm{H}} \boldsymbol{A} \right)^{-1} \boldsymbol{A}^{\mathrm{H}} \Delta\boldsymbol{A} \left( \boldsymbol{A}^{\mathrm{H}} \boldsymbol{A} \right)^{-1}. \numberthis
\end{align*}
In addition,
\begin{align*}
\Delta \left( \boldsymbol{A}^{\dagger} \right) = & -\left( \boldsymbol{A}^{\mathrm{H}} \boldsymbol{A} \right)^{-1} \Delta\boldsymbol{A}^{\mathrm{H}} \boldsymbol{A} \left( \boldsymbol{A}^{\mathrm{H}} \boldsymbol{A} \right)^{-1} \boldsymbol{A}^{\mathrm{H}} \\
& - \left( \boldsymbol{A}^{\mathrm{H}} \boldsymbol{A} \right)^{-1} \boldsymbol{A}^{\mathrm{H}} \Delta\boldsymbol{A} \left( \boldsymbol{A}^{\mathrm{H}} \boldsymbol{A} \right)^{-1} \boldsymbol{A}^{\mathrm{H}} \\
& + \left( \boldsymbol{A}^{\mathrm{H}} \boldsymbol{A} \right)^{-1} \Delta\boldsymbol{A}^{\mathrm{H}} \\
= & -\boldsymbol{A}^{\dagger} \Delta\boldsymbol{A} \boldsymbol{A}^{\dagger} +  \left( \boldsymbol{A}^{\mathrm{H}} \boldsymbol{A} \right)^{-1} \Delta\boldsymbol{A}^{\mathrm{H}} \left( \boldsymbol{I} - \boldsymbol{A} \boldsymbol{A}^{\dagger} \right). \label{eqBA1} \numberthis
\end{align*}
From \eqref{eq28}, we have
\begin{align*}
\Delta \boldsymbol{\Gamma}_n = & \Delta \left( \breve{\boldsymbol{J}}_{n, 1} \boldsymbol{U}_{\mathrm{s}} \right)^{\dagger} \breve{\boldsymbol{J}}_{n, 2} \boldsymbol{U}_{\mathrm{s}} + \left( \breve{\boldsymbol{J}}_{n, 1} \boldsymbol{U}_{\mathrm{s}} \right)^{\dagger} \breve{\boldsymbol{J}}_{n, 2} \Delta \boldsymbol{U}_{\mathrm{s}} \\
= & \left( \breve{\boldsymbol{J}}_{n, 1} \boldsymbol{U}_{\mathrm{s}} \right)^{\dagger} \left(\breve{\boldsymbol{J}}_{n, 2} \Delta \boldsymbol{U}_{\mathrm{s}} - \breve{\boldsymbol{J}}_{n, 1} \Delta \boldsymbol{U}_{\mathrm{s}} \boldsymbol{\Gamma}_n \right). \numberthis
\end{align*}
In the above derivation, the following equality is used.
\begin{align*}
\left( \boldsymbol{I} - \breve{\boldsymbol{J}}_{n, 1} \boldsymbol{U}_{\mathrm{s}} \left( \breve{\boldsymbol{J}}_{n, 1} \boldsymbol{U}_{\mathrm{s}} \right) ^{\dagger} \right) \breve{\boldsymbol{J}}_{n, 2} \boldsymbol{U}_{\mathrm{s}} = \boldsymbol{0}.
\end{align*}
\end{proof}
Denote $\boldsymbol{E} = [\boldsymbol{e}_1, \boldsymbol{e}_2, \cdots, \boldsymbol{e}_L]$ and $\boldsymbol{E}^{-1} = [\boldsymbol{\epsilon}_1, \boldsymbol{\epsilon}_2, \cdots, \boldsymbol{\epsilon}_L]^{\mathrm{T}}$. Then the first-order approximation of $\Delta \Phi_{l, n} = \tilde{\Phi}_{l, n} - {\Phi}_{l, n}$ is given as follows.
\begin{lemma} \label{lemB3}
$\Delta \Phi_{l, n}$ is given by
\begin{align*}
\Delta \Phi_{l, n} = \boldsymbol{\epsilon}_l^{\mathrm{T}} \Delta \boldsymbol{\Gamma}_n \boldsymbol{e}_l. \label{eqB3} \numberthis
\end{align*}
\end{lemma}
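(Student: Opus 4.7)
My plan is to apply a first-order perturbation expansion to equation~\eqref{eq32}, namely $\tilde{\boldsymbol{\Phi}}_n = \boldsymbol{E}^{-1} \tilde{\boldsymbol{\Gamma}}_n \boldsymbol{E}$, and then show that all terms involving $\Delta \boldsymbol{E}$ drop out when we read off the diagonal entry indexed by $l$. The algebraic identities I will need are $\boldsymbol{\Gamma}_n \boldsymbol{E} = \boldsymbol{E} \boldsymbol{\Phi}_n$ and $\boldsymbol{E}^{-1}\boldsymbol{\Gamma}_n = \boldsymbol{\Phi}_n \boldsymbol{E}^{-1}$, both of which follow directly from \eqref{eq27}, together with the standard rule $\Delta(\boldsymbol{E}^{-1}) = -\boldsymbol{E}^{-1} \Delta \boldsymbol{E}\, \boldsymbol{E}^{-1}$.

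Concretely, I will first expand $\tilde{\boldsymbol{\Phi}}_n = (\boldsymbol{E}^{-1} + \Delta(\boldsymbol{E}^{-1}))(\boldsymbol{\Gamma}_n + \Delta \boldsymbol{\Gamma}_n)(\boldsymbol{E} + \Delta \boldsymbol{E})$, subtract $\boldsymbol{\Phi}_n = \boldsymbol{E}^{-1} \boldsymbol{\Gamma}_n \boldsymbol{E}$, and keep only the linear-order terms, giving
\begin{align*}
\Delta \boldsymbol{\Phi}_n = \boldsymbol{E}^{-1} \Delta \boldsymbol{\Gamma}_n \boldsymbol{E} + \boldsymbol{E}^{-1} \boldsymbol{\Gamma}_n \Delta \boldsymbol{E} - \boldsymbol{E}^{-1} \Delta \boldsymbol{E}\, \boldsymbol{E}^{-1} \boldsymbol{\Gamma}_n \boldsymbol{E}.
\end{align*}
Substituting $\boldsymbol{E}^{-1} \boldsymbol{\Gamma}_n = \boldsymbol{\Phi}_n \boldsymbol{E}^{-1}$ in both $\boldsymbol{\Gamma}_n$-containing terms and defining the auxiliary matrix $\boldsymbol{M} \triangleq \boldsymbol{E}^{-1} \Delta \boldsymbol{E}$, the two outer terms reduce to $\boldsymbol{\Phi}_n \boldsymbol{M} - \boldsymbol{M} \boldsymbol{\Phi}_n$.

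The key observation will be that since $\boldsymbol{\Phi}_n$ is diagonal, the $(l,l)$ entry of the commutator $[\boldsymbol{\Phi}_n,\boldsymbol{M}]$ equals $\Phi_{l,n}M_{ll} - M_{ll}\Phi_{l,n} = 0$, so those contributions vanish when we extract the diagonal element. What remains is simply the $(l,l)$ entry of $\boldsymbol{E}^{-1}\Delta\boldsymbol{\Gamma}_n \boldsymbol{E}$, which by the notational definitions $\boldsymbol{E} = [\boldsymbol{e}_1,\ldots,\boldsymbol{e}_L]$ and $\boldsymbol{E}^{-1} = [\boldsymbol{\epsilon}_1,\ldots,\boldsymbol{\epsilon}_L]^{\mathrm{T}}$ equals $\boldsymbol{\epsilon}_l^{\mathrm{T}} \Delta \boldsymbol{\Gamma}_n \boldsymbol{e}_l$, yielding \eqref{eqB3}. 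Note that $\Delta \boldsymbol{\Phi}_n$ only needs to be interpreted as a diagonal perturbation because higher-order off-diagonal contributions are dropped consistent with first-order analysis.

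I do not anticipate any serious obstacle: the derivation is purely algebraic and the cancellation of the $\Delta \boldsymbol{E}$ terms is immediate from the diagonal structure of $\boldsymbol{\Phi}_n$. The only subtle point worth emphasizing is the justification that higher-order terms (such as $\Delta(\boldsymbol{E}^{-1}) \Delta \boldsymbol{\Gamma}_n \boldsymbol{E}$ and similar) can indeed be dropped, which is standard in first-order perturbation analysis and consistent with the treatment in Lemma~\ref{lemB1}.
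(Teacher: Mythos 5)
Your proposal is correct and follows essentially the same route as the paper: expand $\tilde{\boldsymbol{E}}^{-1}\tilde{\boldsymbol{\Gamma}}_n\tilde{\boldsymbol{E}}$ to first order, use $\boldsymbol{E}^{-1}\boldsymbol{\Gamma}_n=\boldsymbol{\Phi}_n\boldsymbol{E}^{-1}$ to turn the $\Delta\boldsymbol{E}$ terms into the commutator $\boldsymbol{\Phi}_n\boldsymbol{M}-\boldsymbol{M}\boldsymbol{\Phi}_n$ with $\boldsymbol{M}=\boldsymbol{E}^{-1}\Delta\boldsymbol{E}$, and observe that its diagonal vanishes because $\boldsymbol{\Phi}_n$ is diagonal. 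Reading off the $(l,l)$ entry of the surviving term $\boldsymbol{E}^{-1}\Delta\boldsymbol{\Gamma}_n\boldsymbol{E}$ gives exactly $\boldsymbol{\epsilon}_l^{\mathrm{T}}\Delta\boldsymbol{\Gamma}_n\boldsymbol{e}_l$, matching the paper's argument.
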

\begin{proof}
From \eqref{eq32}, we have
\begin{align*}
\Delta \boldsymbol{\Phi}_n = & \boldsymbol{E}^{-1} \Delta \boldsymbol{\Gamma}_n \boldsymbol{E} + \boldsymbol{E}^{-1} \boldsymbol{\Gamma}_n \Delta \boldsymbol{E} - \boldsymbol{E}^{-1} \Delta \boldsymbol{E} \boldsymbol{E}^{-1} \boldsymbol{\Gamma}_n \boldsymbol{E} \\
= & \boldsymbol{E}^{-1} \Delta \boldsymbol{\Gamma}_n \boldsymbol{E} + \boldsymbol{\Phi}_n \boldsymbol{E}^{-1} \Delta \boldsymbol{E} - \boldsymbol{E}^{-1} \Delta \boldsymbol{E} \boldsymbol{\Phi}_n. \numberthis
\end{align*} 
We extract the diagonal element from the right hand side, and the last two terms are canceled, leading to the results in \eqref{eqB3}.
\end{proof}
We are now able to apply the results in Lemma \ref{lemB1} and Lemma \ref{lemB2} to \eqref{eqB3}. For simplicity, we define $\boldsymbol{C} = \Delta \boldsymbol{H} \boldsymbol{V}_{\mathrm{s}} \boldsymbol{\Sigma}_{\mathrm{s}}^{-1}$. Using the results in \eqref{eqB1}, we then have $\Delta \boldsymbol{\Gamma}_n$ as
\begin{align*}
\Delta \boldsymbol{\Gamma}_n = & \left( \breve{\boldsymbol{J}}_{n, 1} \boldsymbol{U}_{\mathrm{s}} \right)^{\dagger} \big(\breve{\boldsymbol{J}}_{n, 2} \left( \boldsymbol{I} - \boldsymbol{U}_{\mathrm{s}}\boldsymbol{U}_{\mathrm{s}}^{\mathrm{H}} \right) \boldsymbol{C} - \breve{\boldsymbol{J}}_{n, 1} \left( \boldsymbol{I} - \boldsymbol{U}_{\mathrm{s}} \boldsymbol{U}_{\mathrm{s}}^{\mathrm{H}} \right) \boldsymbol{C} \boldsymbol{\Gamma}_n \big) \\
& + \left( \breve{\boldsymbol{J}}_{n, 1} \boldsymbol{U}_{\mathrm{s}} \right)^{\dagger} \left( \breve{\boldsymbol{J}}_{n, 2} \boldsymbol{U}_{\mathrm{s}} \boldsymbol{R} - \breve{\boldsymbol{J}}_{n, 1} \boldsymbol{U}_{\mathrm{s}} \boldsymbol{R} \boldsymbol{\Gamma}_n \right) \\
= & \left( \breve{\boldsymbol{J}}_{n, 1} \boldsymbol{U}_{\mathrm{s}} \right)^{\dagger} \left( \breve{\boldsymbol{J}}_{n, 2} \boldsymbol{C} - \breve{\boldsymbol{J}}_{n, 1} \boldsymbol{C} \boldsymbol{\Gamma}_n \right) - \boldsymbol{\Gamma}_n \boldsymbol{U}_{\mathrm{s}}^{\mathrm{H}} \boldsymbol{C} \\
& + \boldsymbol{U}_{\mathrm{s}}^{\mathrm{H}} \boldsymbol{C} \boldsymbol{\Gamma}_n + \boldsymbol{\Gamma}_n \boldsymbol{R} - \boldsymbol{R} \boldsymbol{\Gamma}_n.
\numberthis
\end{align*} 
With the result in \eqref{eqB3}, $\Delta \Phi_{l, n}$ is further obtained as
\begin{align*}
\Delta \Phi_{l, n} = \boldsymbol{\epsilon}_l^{\mathrm{T}} \left( \breve{\boldsymbol{J}}_{n, 1} \boldsymbol{U}_{\mathrm{s}} \right)^{\dagger} \left( \breve{\boldsymbol{J}}_{n, 2} - \Phi_{l, n} \breve{\boldsymbol{J}}_{n, 1} \right) \boldsymbol{C} \boldsymbol{e}_l,
\numberthis
\end{align*} 
where $\boldsymbol{\Gamma}_n \boldsymbol{e}_l = \Phi_{l, n} \boldsymbol{e}_l$ and $\boldsymbol{\epsilon}_l \boldsymbol{\Gamma}_n  = \Phi_{l, n} \boldsymbol{\epsilon}_l$ are used in the derivation.

In addition, from \eqref{eq23} and \eqref{eq24}, we have $\boldsymbol{U}_{\mathrm{s}} = \boldsymbol{P} \boldsymbol{E}^{-1}$, and $\boldsymbol{\Sigma}_{\mathrm{s}} \boldsymbol{V}_{\mathrm{s}}^{\mathrm{H}} = \boldsymbol{E} \mathrm{Diag} \left( \boldsymbol{\gamma} \right) \boldsymbol{G}^{\mathrm{T}} $. With these results, $\Delta \Phi_{l, n}$ is rewritten as
\begin{align*}
\Delta \Phi_{l, n} = & \boldsymbol{\epsilon}_l^{\mathrm{T}} \left( \breve{\boldsymbol{J}}_{n, 1} \boldsymbol{P} \boldsymbol{E}^{-1} \right)^{\dagger} \left( \breve{\boldsymbol{J}}_{n, 2} - \Phi_{l, n} \breve{\boldsymbol{J}}_{n, 1} \right) \Delta \boldsymbol{H} \left( \boldsymbol{\Sigma}_{\mathrm{s}} \boldsymbol{V}_{\mathrm{s}}^{\mathrm{H}} \right)^{\dagger} \boldsymbol{e}_l \\
= & \boldsymbol{\epsilon}_l^{\mathrm{T}} \boldsymbol{E} \left( \breve{\boldsymbol{J}}_{n, 1} \boldsymbol{P} \right)^{\dagger} \left( \breve{\boldsymbol{J}}_{n, 2} - \Phi_{l, n} \breve{\boldsymbol{J}}_{n, 1} \right) \Delta \boldsymbol{H} \left( \boldsymbol{G}^{\mathrm{T}} \right)^{\dagger} \\
& \left( \mathrm{Diag} \left( \boldsymbol{\gamma} \right) \right)^{-1} \boldsymbol{E}^{-1} \boldsymbol{e}_l \\
= & \frac{1}{\gamma_l} \boldsymbol{b}_l^{\mathrm{T}} \left( \breve{\boldsymbol{J}}_{n, 1} \boldsymbol{P} \right)^{\dagger} \left( \breve{\boldsymbol{J}}_{n, 2} - \Phi_{l, n} \breve{\boldsymbol{J}}_{n, 1} \right) \Delta \boldsymbol{H} \left( \boldsymbol{G}^{\mathrm{T}} \right)^{\dagger} \boldsymbol{b}_l,
\numberthis
\end{align*} 
which is the result in \eqref{eq49}.

%-----------------------------------------------------------------------------
\subsection{Proof of Proposition \ref{prop1}} \label{ProofProp1}
To begin with, we have the following results.
\begin{lemma} \label{lemB4}
Given
\begin{align*}
\boldsymbol{a} = [\boldsymbol{a}_{1, 1, 1, 1}^{\mathrm{T}}, \boldsymbol{a}_{1, 1, 1, 2}^{\mathrm{T}}, \cdots, \boldsymbol{a}_{1, 1, 1, N_4}^{\mathrm{T}}, \boldsymbol{a}_{1, 1, 2, 1}^{\mathrm{T}}, \cdots, \boldsymbol{a}_{N_1, N_2, N_3, N_4}^{\mathrm{T}} ]^{\mathrm{T}},
\end{align*}
where $\boldsymbol{a}_{n_1, n_2, n_3, n_4} \in \mathbb{C}^{K_5 \times 1}$ and $\boldsymbol{b} \in \mathbb{C}^{L_5 \times 1}$, we have
\begin{align*}
\boldsymbol{a}^{\mathrm{H}} \Delta \boldsymbol{H} \boldsymbol{b}^* = \boldsymbol{c}^{\mathrm{H}} \Delta \boldsymbol{h}, \label{BB1} \numberthis
\end{align*}
where 
\begin{align*}
\boldsymbol{c} = [\boldsymbol{c}_{1, 1, 1, 1}^{\mathrm{T}}, \boldsymbol{c}_{1, 1, 1, 2}^{\mathrm{T}}, \cdots, \boldsymbol{c}_{1, 1, 1, N_4}^{\mathrm{T}}, \boldsymbol{c}_{1, 1, 2, 1}^{\mathrm{T}}, \cdots, \boldsymbol{c}_{N_1, N_2, N_3, N_4}^{\mathrm{T}} ]^{\mathrm{T}},
\end{align*}
and $\boldsymbol{c}_{n_1, n_2, n_3, n_4} \in \mathbb{C}^{M_5 \times 1}$ is the convolution of $\boldsymbol{a}_{n_1, n_2, n_3, n_4}$ and $\boldsymbol{b}$.
\end{lemma}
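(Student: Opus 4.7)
The plan is to prove this by direct index manipulation. First, I would use the block structure of $\Delta\boldsymbol{H}$ arising from the construction in Section IV-E: each block $\Delta\boldsymbol{H}_{n_1,n_2,n_3,n_4}\in\mathbb{C}^{K_5\times L_5}$ is a Hankel matrix whose $(k,\ell)$ entry is $\Delta h_{n_1,n_2,n_3,n_4,k+\ell-1}$, and the tall matrix $\Delta\boldsymbol{H}$ is the vertical stacking of these Hankel blocks in lexicographic order of $(n_1,n_2,n_3,n_4)$. Partitioning $\boldsymbol{a}$ conformally with this block structure, the product $\boldsymbol{y}=\Delta\boldsymbol{H}\boldsymbol{b}^*$ inherits the same partitioning with $\boldsymbol{y}_{n_1,n_2,n_3,n_4}=\Delta\boldsymbol{H}_{n_1,n_2,n_3,n_4}\boldsymbol{b}^*$, so that
\begin{equation*}
\boldsymbol{a}^{\mathrm{H}}\Delta\boldsymbol{H}\boldsymbol{b}^* \;=\; \sum_{n_1,n_2,n_3,n_4}\sum_{k=1}^{K_5}\sum_{\ell=1}^{L_5} a_{n_1,n_2,n_3,n_4,k}^{*}\,\Delta h_{n_1,n_2,n_3,n_4,k+\ell-1}\,b_{\ell}^{*}.
\end{equation*}

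Next I would perform the change of summation variable $m=k+\ell-1$. Since $k\in\{1,\dots,K_5\}$ and $\ell\in\{1,\dots,L_5\}$ with $K_5+L_5-1=M_5$, the index $m$ ranges over $\{1,\dots,M_5\}$. Swapping the order of summation gives
\begin{equation*}
\boldsymbol{a}^{\mathrm{H}}\Delta\boldsymbol{H}\boldsymbol{b}^* \;=\; \sum_{n_1,n_2,n_3,n_4}\sum_{m=1}^{M_5}\Delta h_{n_1,n_2,n_3,n_4,m}\Bigl(\underbrace{\sum_{\ell}a_{n_1,n_2,n_3,n_4,m-\ell+1}b_{\ell}}_{=\,c_{n_1,n_2,n_3,n_4,m}}\Bigr)^{*},
\end{equation*}
where in the inner sum $\ell$ ranges over those indices for which both $1\le\ell\le L_5$ and $1\le m-\ell+1\le K_5$ hold. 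This inner sum is exactly the discrete linear convolution of the sequences $\boldsymbol{a}_{n_1,n_2,n_3,n_4}$ and $\boldsymbol{b}$ evaluated at position $m$, which by the statement is $c_{n_1,n_2,n_3,n_4,m}$.

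Finally, I would recognize that the outer summation over $(n_1,n_2,n_3,n_4,m)$ ordered lexicographically is precisely the inner product $\boldsymbol{c}^{\mathrm{H}}\Delta\boldsymbol{h}$, using the definition of $\mathrm{vec}_{\mathrm{r}}\{\cdot\}$ adopted in \eqref{eq10} for vectorizing the beamspace channel tensor and the matching lexicographic partitioning of $\boldsymbol{c}$ stated in the lemma. This yields $\boldsymbol{a}^{\mathrm{H}}\Delta\boldsymbol{H}\boldsymbol{b}^*=\boldsymbol{c}^{\mathrm{H}}\Delta\boldsymbol{h}$ as required. The only subtlety is bookkeeping: ensuring that (i) the Hankel indexing $k+\ell-1$ is consistent with the way the smoothing operator $\mathcal{S}(\cdot)$ was defined in \eqref{eq19}, and (ii) the lexicographic ordering of the outer block index matches between $\boldsymbol{a}$, $\boldsymbol{c}$, and the vectorization $\Delta\boldsymbol{h}$. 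Beyond that, the result is an immediate change-of-variables argument.
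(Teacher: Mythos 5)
Your proposal is correct and follows essentially the same route as the paper's proof: partition $\Delta\boldsymbol{H}$ into its Hankel blocks, expand the quadratic form as a triple sum, substitute $m = k+\ell-1$, and identify the inner sum as the (conjugated) convolution, so that the outer lexicographic sum reassembles into $\boldsymbol{c}^{\mathrm{H}}\Delta\boldsymbol{h}$. The bookkeeping subtleties you flag are exactly the ones the paper's derivation relies on, and they check out.
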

\begin{proof}
Let $\boldsymbol{y} = \Delta \boldsymbol{H}\boldsymbol{b}^*$, where $\boldsymbol{y} \in \mathbb{C}^{N_1N_2N_3N_4K_5 \times 1}$ is given by
\begin{align*}
\boldsymbol{y} = [\boldsymbol{y}_{1, 1, 1, 1}^{\mathrm{T}}, \boldsymbol{y}_{1, 1, 1, 2}^{\mathrm{T}}, \cdots, \boldsymbol{y}_{1, 1, 1, N_4}^{\mathrm{T}}, \boldsymbol{y}_{1, 1, 2, 1}^{\mathrm{T}}, \cdots, \boldsymbol{y}_{N_1, N_2, N_3, N_4}^{\mathrm{T}} ]^{\mathrm{T}},
\end{align*}
and $\boldsymbol{y}_{n_1, n_2, n_3, n_4} \in \mathbb{C}^{K_5 \times 1}$. Accordingly, we have
\begin{align*}
\boldsymbol{y}_{n_1, n_2, n_3, n_4} = \Delta \boldsymbol{H}_{n_1, n_2, n_3, n_4} \boldsymbol{b}^*.
\end{align*}
As a result, the left hand side of \eqref{BB1} is computed as
\begin{align*}
\boldsymbol{a}^{\mathrm{H}} \Delta \boldsymbol{H} \boldsymbol{b}^* = & \sum_{n_1 = 1}^{N_1} \sum_{n_2 = 1}^{N_2} \sum_{n_3 = 1}^{N_3} \sum_{n_4 = 1}^{N_4} \boldsymbol{a}_{n_1, n_2, n_3, n_4}^{\mathrm{H}} \boldsymbol{y}_{n_1, n_2, n_3, n_4} \\
= & \sum_{n_1 = 1}^{N_1} \sum_{n_2 = 1}^{N_2} \sum_{n_3 = 1}^{N_3} \sum_{n_4 = 1}^{N_4} \sum_{k_5 = 1}^{K_5} a_{n_1, n_2, n_3, n_4, k_5}^* \\
& \hspace{8mm}\cdot \sum_{\ell_5 = 1}^{L_5} \Delta h_{n_1, n_2, n_3, n_4, k_5 + \ell_5 - 1} b_{\ell_5}^* \\
= & \sum_{n_1 = 1}^{N_1} \sum_{n_2 = 1}^{N_2} \sum_{n_3 = 1}^{N_3} \sum_{n_4 = 1}^{N_4} \sum_{m_5 = 1}^{M_5} \Delta h_{n_1, n_2, n_3, n_4, m_5} \\
& \hspace{8mm}\underbrace{\sum_{\ell_5 = 1}^{L_5} a_{n_1, n_2, n_3, n_4, m_5 + 1 - \ell_5}^* b_{\ell_5}^*}_{c_{n_1, n_2, n_3, n_4, m_5}^*} \\
= & \boldsymbol{c}^{\mathrm{H}} \Delta \boldsymbol{h}, \numberthis
\end{align*}
which complete the proof.
\end{proof}
With the results in Lemma \ref{lemB4}, we directly arrive at \eqref{eq54}.

%-----------------------------------------------------------------------------
\subsection{Proof of Lemma \ref{lem2}} \label{ProofLem2}
%The relationship between the angular frequencies and the channel parameters are given as follows.
%\begin{align*}
%\omega_{l, 1} = & \pi \sin \left( \phi_{\mathrm{az}}^l \right) \sin \left( \phi_{\mathrm{el}}^l \right), \\
%\omega_{l, 2} = & \pi \cos \left(\phi_{\mathrm{el}}^l \right), \\
%\omega_{l, 3} = & \pi \sin \left( \theta_{\mathrm{az}}^l \right) \sin \left( \theta_{\mathrm{el}}^l \right), \\
%\omega_{l, 4} = & \pi \cos \left(\theta_{\mathrm{el}}^l \right), \\
%\omega_{l, 5} = & -2\pi \Delta f \tau_l.
%\end{align*}
With the relationship between the angular frequencies and channel parameters, we establish the following first-order perturbation:
\begin{align*}
\Delta \omega_{l, 1} = & \pi \cos \left( \phi_{\mathrm{az}}^l \right) \sin \left( \phi_{\mathrm{el}}^l \right) \Delta \phi_{\mathrm{az}}^l \\
& + \pi \sin \left( \phi_{\mathrm{az}}^l \right) \cos \left( \phi_{\mathrm{el}}^l \right) \Delta \phi_{\mathrm{el}}^l, \\
\Delta \omega_{l, 2} = & - \pi \sin \left( \phi_{\mathrm{el}}^l \right) \Delta \phi_{\mathrm{el}}^l, \\
\Delta \omega_{l, 3} = & \pi \cos \left( \theta_{\mathrm{az}}^l \right) \sin \left( \theta_{\mathrm{el}}^l \right) \Delta \theta_{\mathrm{az}}^l \\
& + \pi \sin \left( \theta_{\mathrm{az}}^l \right) \cos \left( \theta_{\mathrm{el}}^l \right) \Delta \theta_{\mathrm{el}}^l, \\
\Delta \omega_{l, 4} = & - \pi \sin \left( \theta_{\mathrm{el}}^l \right) \Delta \theta_{\mathrm{el}}^l, \\
\Delta \omega_{l, 5} = & -2\pi \Delta f \Delta \tau_l.
\end{align*}
Using the results in \eqref{eq55}, it is straightforward to obtain the results in \eqref{eq56}-\eqref{eq60}.

In addition, with $\Delta \omega_{l, n}$, we have
\begin{align*}
\Delta \left( \exp \left( \jmath m_n \omega_{l, n} \right) \right) = \jmath m_n \exp \left(\jmath m_n \omega_{l, n} \right) \Delta \omega_{l, n}.
\end{align*}
Therefore, the first-order perturbation of $\boldsymbol{A}_n^{\left( M_n \right) }$ is given by
\begin{align*}
\Delta \boldsymbol{A}_n^{\left( M_n \right) } = \jmath \mathrm{Diag} \left( \boldsymbol{m}_n \right) \boldsymbol{A}_n^{\left( M_n \right) } \mathrm{Diag} \left( \Delta \boldsymbol{\omega}_n \right). \numberthis
\end{align*}
where $\boldsymbol{m}_n = [0, 1, \cdots, M_n - 1 ]^{\mathrm{T}}$ and $\Delta \boldsymbol{\omega}_n = [ \Delta \omega_{1, n}, \Delta \omega_{2, n}, \cdots, \Delta \omega_{L, n} ]^{\mathrm{T}} = \Im \{ \boldsymbol{V}_n^{\mathrm{H}} \Delta \boldsymbol{h} \}$.

The first-order perturbation of $\boldsymbol{B}_n^{\left( N_n \right) }$ is given by
\begin{align*}
\Delta \boldsymbol{B}_n^{\left( N_n \right) } = \boldsymbol{T}_n^{\mathrm{H}} \Delta \boldsymbol{A}_n^{\left( M_n \right) } = \breve{\boldsymbol{T}}_n^{\mathrm{H}} \boldsymbol{A}_n^{\left( M_n \right) } \odot \Delta \boldsymbol{\omega}_n^{\mathrm{T}}, \numberthis
\end{align*}
where 
\begin{align*}
\breve{\boldsymbol{T}}_n = - \jmath \mathrm{Diag} \left( \boldsymbol{m}_n \right) \boldsymbol{T}_n.
\end{align*}
From \eqref{eq34}, we derive the first-order perturbation $\Delta \boldsymbol{B} $ as
\begin{align*}
\Delta \boldsymbol{B} = \sum_{n=1}^5 \breve{\boldsymbol{B}}_n \odot \Delta \boldsymbol{\omega}_n^{\mathrm{T}}, \numberthis
\end{align*}
where
\begin{align*}
\breve{\boldsymbol{B}}_n = & \boldsymbol{B}_1^{\left( N_1 \right)} \odot \cdots \odot \breve{\boldsymbol{T}}_n^{\mathrm{H}} \boldsymbol{A}_n^{\left( M_n \right) }  \odot \cdots \odot \boldsymbol{B}_5^{\left( N_5 \right)}. \label{eqBC1} \numberthis
\end{align*}
According to \eqref{eqBA1}, we have the perturbation $\Delta \left( \boldsymbol{B}^{\dagger} \right)$ as
\begin{align*}
\Delta \left( \boldsymbol{B}^{\dagger} \right) = -\boldsymbol{B}^{\dagger} \Delta\boldsymbol{B} \boldsymbol{B}^{\dagger} +  \left( \boldsymbol{B}^{\mathrm{H}} \boldsymbol{B} \right)^{-1} \Delta\boldsymbol{B}^{\mathrm{H}} \left( \boldsymbol{I} - \boldsymbol{B} \boldsymbol{B}^{\dagger} \right). \numberthis
\end{align*}
As a result, the estimate of the channel gain $\boldsymbol{\gamma}$ is given as
\begin{align*}
\hat{\boldsymbol{\gamma}} = & \left( \boldsymbol{B}^{\dagger} + \Delta \left( \boldsymbol{B}^{\dagger} \right) \right) \left( \boldsymbol{B} \boldsymbol{\gamma} + \Delta \boldsymbol{h} \right) \\
= & \boldsymbol{\gamma} + \boldsymbol{B}^{\dagger} \Delta \boldsymbol{h} + \Delta \left( \boldsymbol{B}^{\dagger} \right) \boldsymbol{B} \boldsymbol{\gamma} + \Delta \left( \boldsymbol{B}^{\dagger} \right) \Delta \boldsymbol{h}, \numberthis
\end{align*}
which indicates that the first-order perturbation of the channel gain estimation is
\begin{align*}
\Delta \boldsymbol{\gamma} = & \boldsymbol{B}^{\dagger} \Delta \boldsymbol{h} + \Delta \left( \boldsymbol{B}^{\dagger} \right) \boldsymbol{B} \boldsymbol{\gamma} \\
= & \boldsymbol{B}^{\dagger} \Delta \boldsymbol{h} - \boldsymbol{B}^{\dagger}\Delta\boldsymbol{B} \boldsymbol{\gamma} + \left( \boldsymbol{B}^{\mathrm{H}} \boldsymbol{B} \right)^{-1} \Delta\boldsymbol{B}^{\mathrm{H}} \boldsymbol{B} \boldsymbol{\gamma} \\
& - \left( \boldsymbol{B}^{\mathrm{H}} \boldsymbol{B} \right)^{-1} \Delta\boldsymbol{B}^{\mathrm{H}} \boldsymbol{B} \boldsymbol{\gamma}\\
= & \boldsymbol{B}^{\dagger} \Delta \boldsymbol{h} - \boldsymbol{B}^{\dagger}\Delta\boldsymbol{B} \boldsymbol{\gamma} \\
= & \boldsymbol{B}^{\dagger} \Delta \boldsymbol{h} - \sum_{n=1}^5 \boldsymbol{\Upsilon}_n \Im \left\{ \boldsymbol{V}_n^{\mathrm{H}} \Delta \boldsymbol{h} \right\}. \numberthis
\end{align*}

%-----------------------------------------------------------------------------
% \subsection{Proof of Lemma \ref{lem3}} \label{ProofLem3}
% From \eqref{eq7}, we have the first-order perturbation $\Delta \tilde{\boldsymbol{h}} = \boldsymbol{}$

%-----------------------------------------------------------------------------
\subsection{Proof of Lemma \ref{lem3}} \label{ProofLem3}
To begin with, we have the first-order approximations of $\Delta \boldsymbol{f}_{\mathrm{T}, l}$ and $\Delta \boldsymbol{f}_{\mathrm{R}, l}$ as
\begin{align*}
\Delta \boldsymbol{f}_{\mathrm{T}, l} = & \boldsymbol{\Omega}_{\mathrm{T}, l} \left[ \Delta \phi_{\mathrm{az}}^l, \Delta \phi_{\mathrm{el}}^l \right]^{\mathrm{T}}, \numberthis \\
\Delta \boldsymbol{f}_{\mathrm{R}, l} = & \boldsymbol{\Omega}_{\mathrm{R}, l} \left[ \Delta \theta_{\mathrm{az}}^l, \Delta \theta_{\mathrm{el}}^l \right]^{\mathrm{T}}, \numberthis
\end{align*}
respectively, where $\boldsymbol{\Omega}_{\mathrm{T}, l} \in \mathbb{R}^{3\times 2}$ and $\boldsymbol{\Omega}_{\mathrm{R}, l} \in \mathbb{R}^{3\times 2}$ are given by
\begin{align*}
\boldsymbol{\Omega}_{\mathrm{T}, l} = & \left[ 
\begin{array}{cc}
-\sin \left( \phi_{\mathrm{az}}^l \right) \sin \left( \phi_{\mathrm{el}}^l \right) & \cos \left( \phi_{\mathrm{az}}^l \right) \cos \left( \phi_{\mathrm{el}}^l \right) \\
\cos \left( \phi_{\mathrm{az}}^l \right) \sin \left( \phi_{\mathrm{el}}^l \right) & \sin \left( \phi_{\mathrm{az}}^l \right) \cos \left( \phi_{\mathrm{el}}^l \right) \\
0 & - \sin \left( \phi_{\mathrm{el}}^l \right)
\end{array}
\right], \numberthis \\
\boldsymbol{\Omega}_{\mathrm{R}, l} = & \left[ 
\begin{array}{cc}
-\sin \left( \theta_{\mathrm{az}}^l \right) \sin \left( \theta_{\mathrm{el}}^l \right) & \cos \left( \theta_{\mathrm{az}}^l \right) \cos \left( \theta_{\mathrm{el}}^l \right) \\
\cos \left( \theta_{\mathrm{az}}^l \right) \sin \left( \theta_{\mathrm{el}}^l \right) & \sin \left( \theta_{\mathrm{az}}^l \right) \cos \left( \theta_{\mathrm{el}}^l \right) \\
0 & - \sin \left( \theta_{\mathrm{el}}^l \right)
\end{array}
\right]. \numberthis
\end{align*}
% \begin{align*}
% & \Delta \boldsymbol{f}_{\mathrm{T}, l} \\
% = & \left[
% \begin{array}{c}
% -\sin \left( \phi_{\mathrm{az}}^l \right) \sin \left( \phi_{\mathrm{el}}^l \right) \Delta \phi_{\mathrm{az}}^l + \cos \left( \phi_{\mathrm{az}}^l \right) \cos \left( \phi_{\mathrm{el}}^l \right) \Delta \phi_{\mathrm{el}}^l  \\
% \cos \left( \phi_{\mathrm{az}}^l \right) \sin \left( \phi_{\mathrm{el}}^l \right) \Delta \phi_{\mathrm{az}}^l + \sin \left( \phi_{\mathrm{az}}^l \right) \cos \left( \phi_{\mathrm{el}}^l \right) \Delta \phi_{\mathrm{el}}^l \\
% - \sin \left( \phi_{\mathrm{el}}^l \right) \Delta \phi_{\mathrm{el}}^l
% \end{array}
% \right], \numberthis \\
% & \Delta \boldsymbol{f}_{\mathrm{R}, l} \\
% = & \left[
% \begin{array}{c}
% -\sin \left( \theta_{\mathrm{az}}^l \right) \sin \left( \theta_{\mathrm{el}}^l \right) \Delta \theta_{\mathrm{az}}^l + \cos \left( \theta_{\mathrm{az}}^l \right) \cos \left( \theta_{\mathrm{el}}^l \right) \Delta \theta_{\mathrm{el}}^l  \\
% \cos \left( \theta_{\mathrm{az}}^l \right) \sin \left( \theta_{\mathrm{el}}^l \right) \Delta \theta_{\mathrm{az}}^l + \sin \left( \theta_{\mathrm{az}}^l \right) \cos \left( \theta_{\mathrm{el}}^l \right) \Delta \theta_{\mathrm{el}}^l \\
% - \sin \left( \theta_{\mathrm{el}}^l \right) \Delta \theta_{\mathrm{el}}^l
% \end{array}
% \right]. \numberthis
% \end{align*}
As a result, the first-order approximations of $\Delta \boldsymbol{\delta}_l$ and $\Delta \boldsymbol{\mu}_l$ are obtained as
\begin{align*}
\Delta \boldsymbol{\delta}_l = & - c \left[
\tau_l \boldsymbol{\Omega}_{\mathrm{R}, l},  \boldsymbol{f}_{\mathrm{R}, l}
\right] \left[\Delta \theta_{\mathrm{az}}^l, \Delta \theta_{\mathrm{el}}^l, \Delta \tau_l \right]^{\mathrm{T}}, \label{eqC1} \numberthis \\
\Delta \boldsymbol{\mu}_l = & c \left[\tau_l \boldsymbol{\Omega}_{\mathrm{T}, l}, \tau_l \boldsymbol{\Omega}_{\mathrm{R}, l}, \left(\boldsymbol{f}_{\mathrm{T}, l} + \boldsymbol{f}_{\mathrm{R}, l} \right) \right] \\
& \hspace{9mm}\left[\Delta \phi_{\mathrm{az}}^l, \Delta \phi_{\mathrm{el}}^l, \Delta \theta_{\mathrm{az}}^l, \Delta \theta_{\mathrm{el}}^l, \Delta \tau_l \right]^{\mathrm{T}}. \label{eqC2} \numberthis
\end{align*}

% \begin{align*}
% \Delta \boldsymbol{\delta}_l = & - c \Delta \tau_l \boldsymbol{f}_{\mathrm{R}, l} - c \tau_l \Delta \boldsymbol{f}_{\mathrm{R}, l}, \numberthis \\
% \Delta \boldsymbol{\mu}_l = & c \Delta \tau_l \left(\boldsymbol{f}_{\mathrm{T}, l} + \boldsymbol{f}_{\mathrm{R}, l} \right) + c  \tau_l \left( \Delta \boldsymbol{f}_{\mathrm{T}, l} + \Delta \boldsymbol{f}_{\mathrm{R}, l} \right). \numberthis
% \end{align*}
Since $\boldsymbol{C}_l = (\boldsymbol{I} - \boldsymbol{\mu}_l \boldsymbol{\mu}_l^{\mathrm{T}}/{ \| \boldsymbol{\mu}_l \|^2}  )$, $\Delta \boldsymbol{C}_l$ can be given as\footnote{We drop $\iota_l$ if each path is equality treated.}
% \begin{align*}
% \Delta \boldsymbol{C}_l = \frac{2\boldsymbol{\mu}_l^{\mathrm{T}} \Delta \boldsymbol{\mu}_l \left( \boldsymbol{I} - \boldsymbol{C}_l \right) - \Delta \boldsymbol{\mu}_l \boldsymbol{\mu}_l^{\mathrm{T}} - \boldsymbol{\mu}_l \Delta \boldsymbol{\mu}_l^{\mathrm{T}} }{\left\|\boldsymbol{\mu}_l\right\|^2 }. \numberthis
% \end{align*}
\begin{align*}
\Delta \boldsymbol{C}_l = & 2\left( \boldsymbol{\mu}_l^{\mathrm{T}} \boldsymbol{\mu}_l \right)^{-2} \left( \boldsymbol{\mu}_l^{\mathrm{T}} \Delta \boldsymbol{\mu}_l \right) \boldsymbol{\mu}_l \boldsymbol{\mu}_l^{\mathrm{T}} \\
& - \left( \boldsymbol{\mu}_l^{\mathrm{T}} \boldsymbol{\mu}_l \right)^{-1} \left( \Delta \boldsymbol{\mu}_l \boldsymbol{\mu}_l^{\mathrm{T}} + \boldsymbol{\mu}_l \Delta \boldsymbol{\mu}_l^{\mathrm{T}} \right). \numberthis
\end{align*}
According to \eqref{eq:PosEst}, and define $\boldsymbol{C} = \sum_{l=1}^L \boldsymbol{C}_l$, we have
\begin{align*}
\hat{\boldsymbol{p}}_{\mathrm{R}} = & \left( \sum_{l=1}^L \boldsymbol{C}_l + \Delta \boldsymbol{C}_l \right)^{-1} \sum_{l=1}^L \left( \boldsymbol{C}_l + \Delta \boldsymbol{C}_l \right) \left( \boldsymbol{\delta}_l + \Delta \boldsymbol{\delta}_l \right) \\
= & \left( \boldsymbol{I} + \sum_{l=1}^L \boldsymbol{C}^{-1} \Delta \boldsymbol{C}_l \right)^{-1} \boldsymbol{C}^{-1} \\
& \sum_{l=1}^L \boldsymbol{C}_l \boldsymbol{\delta}_l + \Delta \boldsymbol{C}_l \boldsymbol{\delta}_l + \boldsymbol{C}_l \Delta \boldsymbol{\delta}_l + \Delta \boldsymbol{C}_l \Delta \boldsymbol{\delta}_l \\
\approx & \left( \boldsymbol{I} - \sum_{l=1}^L \boldsymbol{C}^{-1} \Delta \boldsymbol{C}_l + \boldsymbol{C}^{-1} \sum_{l=1}^L \Delta \boldsymbol{C}_l \boldsymbol{C}^{-1} \sum_{l=1}^L \Delta \boldsymbol{C}_l \right) \\
& \boldsymbol{C}^{-1} \sum_{l=1}^L \boldsymbol{C}_l \boldsymbol{\delta}_l + \Delta \boldsymbol{C}_l \boldsymbol{\delta}_l + \boldsymbol{C}_l \Delta \boldsymbol{\delta}_l + \Delta \boldsymbol{C}_l \Delta \boldsymbol{\delta}_l \\
\approx & \boldsymbol{p}_{\mathrm{R}} + \boldsymbol{C}^{-1} \sum_{l=1}^L \Delta \boldsymbol{C}_l \left( \boldsymbol{\delta}_l - \boldsymbol{p}_{\mathrm{R}} \right) + \boldsymbol{C}^{-1} \sum_{l=1}^L \boldsymbol{C}_l \Delta \boldsymbol{\delta}_l \\
& + \boldsymbol{C}^{-1} \sum_{l=1}^L \Delta \boldsymbol{C}_l \Delta \boldsymbol{\delta}_l - \boldsymbol{C}^{-1} \sum_{l=1}^L \Delta \boldsymbol{C}_l \boldsymbol{C}^{-1} \sum_{l=1}^L \boldsymbol{C}_l \Delta \boldsymbol{\delta}_l \\
& + \boldsymbol{C}^{-1} \sum_{l=1}^L \Delta \boldsymbol{C}_l \boldsymbol{C}^{-1} \sum_{l=1}^L \Delta \boldsymbol{C}_l \left( \boldsymbol{p}_{\mathrm{R}} - \boldsymbol{\delta}_l\right), \label{eqC2nd} \numberthis
\end{align*}
where higher order approximation is omitted. Therefore, the first-order perturbation of the position estimate, i.e., $\Delta \boldsymbol{p}_{\mathrm{R}} = \hat{\boldsymbol{p}}_{\mathrm{R}} - \boldsymbol{p}_{\mathrm{R}}$, can be obtained as
\begin{align*}
\Delta \boldsymbol{p}_{\mathrm{R}} = &  \boldsymbol{C}^{-1} \sum_{l=1}^L \Delta \boldsymbol{C}_l \left( \boldsymbol{\delta}_l - \boldsymbol{p}_{\mathrm{R}} \right) + \boldsymbol{C}^{-1} \sum_{l=1}^L \boldsymbol{C}_l \Delta \boldsymbol{\delta}_l \\
& = \boldsymbol{C}^{-1} \sum_{l=1}^L \breve{\boldsymbol{C}}_l \Delta \boldsymbol{\mu}_l + \boldsymbol{C}_l \Delta \boldsymbol{\delta}_l, \label{eqC3} \numberthis
\end{align*}
where $\breve{\boldsymbol{C}}_l \in \mathbb{R}^{3\times 3}$ is given by
\begin{align*}
\breve{\boldsymbol{C}}_l = & \frac{2\boldsymbol{\mu}_l^{\mathrm{T}} \left( \boldsymbol{\delta}_l - \boldsymbol{p}_{\mathrm{R}} \right) \boldsymbol{\mu}_l \boldsymbol{\mu}_l^{\mathrm{T}}}{\left\|\boldsymbol{\mu}_l\right\|^4} \\
& - \frac{\boldsymbol{\mu}_l^{\mathrm{T}} \left( \boldsymbol{\delta}_l - \boldsymbol{p}_{\mathrm{R}} \right) \boldsymbol{I} + \boldsymbol{\mu}_l \left( \boldsymbol{\delta}_l - \boldsymbol{p}_{\mathrm{R}} \right)^{\mathrm{T}} }{\left\|\boldsymbol{\mu}_l\right\|^2}. \numberthis
\end{align*}
% \begin{align*}
% \breve{\boldsymbol{C}}_l = & \frac{ 2 \left( \boldsymbol{I} - \boldsymbol{C}_l \right) \left( \boldsymbol{\delta}_l - \boldsymbol{p}_{\mathrm{R}} \right)  \boldsymbol{\mu}_l^{\mathrm{T}} - \boldsymbol{\mu}_l^{\mathrm{T}} \left( \boldsymbol{\delta}_l - \boldsymbol{p}_{\mathrm{R}} \right) \boldsymbol{I} }{\left\|\boldsymbol{\mu}_l\right\|^2} \\
% & - \frac{ \boldsymbol{\mu}_l \left( \boldsymbol{\delta}_l - \boldsymbol{p}_{\mathrm{R}} \right)^{\mathrm{T}} }{\left\|\boldsymbol{\mu}_l\right\|^2}. \numberthis
% \end{align*}
Substituting the results in \eqref{eqC1} and \eqref{eqC2} to \eqref{eqC3}, we arrive at 
\begin{align*}
\Delta \boldsymbol{p}_{\mathrm{R}} = \sum_{l=1}^L & \breve{\boldsymbol{D}}_l \left[\Delta \theta_{\mathrm{az}}^l, \Delta \theta_{\mathrm{el}}^l, \Delta \tau_l \right]^{\mathrm{T}} \\
& + \breve{\boldsymbol{E}}_l \left[\Delta \phi_{\mathrm{az}}^l, \Delta \phi_{\mathrm{el}}^l, \Delta \theta_{\mathrm{az}}^l, \Delta \theta_{\mathrm{el}}^l, \Delta \tau_l \right]^{\mathrm{T}}, \numberthis
\end{align*}
where $\breve{\boldsymbol{D}}_l \in \mathbb{R}^{3\times 3}$ and $\breve{\boldsymbol{E}}_l\in \mathbb{R}^{3\times 5}$
\begin{align*}
\breve{\boldsymbol{D}}_l = & -c \left(\sum_{\ell=1}^L \boldsymbol{C}_\ell \right)^{-1} \boldsymbol{C}_l \left[
\tau_l \boldsymbol{\Omega}_{\mathrm{R}, l},  \boldsymbol{f}_{\mathrm{R}, l}
\right], \label{eqC4} \numberthis \\
\breve{\boldsymbol{E}}_l = & c \left(\sum_{\ell=1}^L \boldsymbol{C}_\ell \right)^{-1} \breve{\boldsymbol{C}}_l \left[\tau_l \boldsymbol{\Omega}_{\mathrm{T}, l}, \tau_l \boldsymbol{\Omega}_{\mathrm{R}, l}, \left(\boldsymbol{f}_{\mathrm{T}, l} + \boldsymbol{f}_{\mathrm{R}, l} \right) \right]. \label{eqC5} \numberthis
\end{align*}
With the results in Lemma \ref{lem2}, we derive the results in \eqref{eq75}.

% that's all folks
\end{document}